\newtheorem{defn}{Definition}
\newtheorem{thm}{Theorem}
\newtheorem{lem}{Lemma}
\newtheorem{prop}{Proposition}
\newtheorem{cor}{Corollary}
\newtheorem{rem}{Remark}
\newtheorem{exam}{Example}
\newtheorem{assump}{Assumption}
\newtheorem{prob}{Problem}
\newcommand{\V}{\mathbb{V}}
\newcommand{\R}{\mathbb{R}}
\newcommand{\EE}{\mathbb{E}}
\newcommand{\E}{\mathcal{E}}
\newcommand{\G}{\mathcal{G}}
\newcommand{\IM}{\mathrm{Im}}
\newcommand{\diag}{\mathrm{diag}}
\newcommand{\Refc}{\mathrm{ref}}
\begin{document}

\title{Model-Free Practical Cooperative Control for Diffusively Coupled Systems\if(0) via Passivity Theory and Network Optimization\fi}

\author{Miel Sharf, Anne Koch, Daniel Zelazo and Frank Allg{\"o}wer \thanks{M. Sharf is with the Division of Decision and Control Systems, KTH Royal Institute of Technology, Stockholm, Sweden. {\tt\small sharf@kth.se}. A. Koch and F. Allg\"ower are with the Institute for Systems Theory and Automatic Control, University of Stuttgart, Germany. {\tt\small \{anne.koch,frank.allgower\}@ist.uni-stuttgart.de}. D. Zelazo is with the Faculty of Aerospace Engineering, Israel Institute of Technology, Haifa, Israel {\tt\small dzelazo@technion.ac.il}. This work was supported by the German-Israeli Foundation for Scientific Research and Development.}}

\maketitle
\begin{abstract}
In this paper, we develop a data-based controller design framework for diffusively coupled systems with guaranteed convergence to an $\epsilon$-neighborhood of the desired formation. The controller is comprised of a fixed controller with an adjustable gain on each edge. Via passivity theory and network optimization we not only prove that there exists a gain attaining the desired formation control goal, but we present a data-based method to find an upper bound on this gain. Furthermore, by allowing for additional experiments, the conservatism of the upper bound can be reduced via iterative sampling schemes. 
The introduced scheme is based on the assumption of passive systems, which we relax by discussing different methods for estimating the systems' passivity shortage, as well as applying transformations passivizing them. 
Finally, we illustrate the developed model-free cooperative control scheme with a case study.
\end{abstract}
\vspace{-10pt}
\section{Introduction}\label{sec.Intro}
Multi-agent systems have received extensive attention in the past years, due to their appearance in many fields of engineering, exact sciences and social sciences. Examples include robotics\cite{Franchi2011}, traffic engineering \cite{Bando1995} and ecology \cite{Urban2001}. The state-of-the-art approach to model-based control for multi-agent systems offers rigorous stability analysis, performance guarantees and systematic insights into the considered problem. However, with the growing complexity of systems, the modeling process is reaching its limits. Obtaining a reliable mathematical model of the agents becomes a time-intensive and arduous task. 

At the same time, modern technology allows for gathering and storing more and more data from systems and processes, inciting an increasing interest in \emph{data-driven control}. \textcolor{black}{There are two main approaches for data-driven control. The first is model-based data-driven control, which uses data to identify a model from the problem, which is in turn used to solve the synthesis problem \cite{Recht2019,Dean2017}. In this case, the model estimation errors must be taken into account when solving the synthesis problem. The second is model-free control, which does not try to estimate a model for the system. The latter can be further bisected into approximate dynamic programming methods and direct policy search. The former evaluates a score for each state-action pair, and then obtains an optimal control policy using dynamic programming \cite{Gorges2019}, and the latter tries to find the optimal policy directly, e.g. by gradient descent or via a non-parametric description of the possible trajectories \cite{Coulson2018}. These methods have all been applied to multi-agent systems as well, with varying degrees of success \cite{Fattahi2019,Gorges2019,Jiang2018,Bianchini2017}.}

In this work, we develop a data-driven controller synthesis approach for multi-agent systems that comes with rigorous theoretical analysis and stability guarantees for the closed loop, with almost no assumptions on the agents and few measurements needed. 
Our approach is based on high-gain control as well as passivity. Some ideas on high-gain approaches to cooperative control can be found in \cite{Siljak1996} and references therein. In \cite{Zheng2018}, the authors provide a high-gain condition in the design of distributed $H_\infty$ controllers for platoons with undirected topologies, while there are also many approaches to (adaptively) tune the coupling weights, e.g. \cite{Yu2012}. Our approach provides an upper bound on a high-gain controller using passivity measures. Passivity properties of the components can provide sufficient abstractions of their detailed dynamical models for guaranteed control. Such passivity properties can be obtained from data as ongoing work shows (e.g., \cite{Montenbruck2016, Romer2017b, Romer2019}).  

Passivity is a well-known tool for controller synthesis \cite{Khalil2001}, which is useful, beyond convergence analysis, for its powerful properties such as compositionality. It was first introduced in the field of multi-agent systems in the seminal works of Arcak \cite{Arcak2007,Bai2011}, and was since explored in many variants in the context of multi-agent systems in many other works \cite{Pavlov2008,Hines2011,Burger2014,Sharf2017,Sharf2018a,Jain2018,Sharf2019a,Franchi2011}. We focus on the variant known as maximal equilibrium independent passivity (MEIP), presented in \cite{Burger2014}. \textcolor{black}{The notion of MEIP establishes a connection between multi-agent systems and network optimization, see \cite{Burger2014,Sharf2017,Sharf2018a,Jain2018,Sharf2019a}. Different synthesis problems have been solved using this network optimization framework assuming an exact model for each of the agents exists \cite{Sharf2017,Sharf2018a,Sharf2019f}. More precisely, one needs a perfect description of the steady-state input-output behavior of the agents. Thus, these methods cannot be applied in our case.}

As we said, this work generally studies the problem of controller synthesis for diffusively coupled systems.  The control objective is to converge to an $\epsilon$-neighborhood of a constant prescribed relative output vector.  That is, for some tolerance $\epsilon > 0$, we aim to design controllers so that the steady-state limit of the relative output is $\epsilon$-close to the prescribed values. The related problem of practical synchronization of multi-agent systems have been considered in \cite{Montenbruck2015}, in which the agents were assumed to be known up to some bounded additive disturbance. However, a nominal model was needed to get practical synchronization. It was also pursued in \cite{Kim2016}, where strong coupling was used to drive agents close to a common trajectory, but again, a model for the agents was needed.

\textcolor{black}{
Our contributions are as follows. We present a model-free data-driven method for solving the practical formation control problem. This is done by cascading a fixed controller and an adjustable gain on each edge. We show that this gain can be chosen to guarantee a solution to the practical formation control problem. We then provide schemes to compute this gain offline only from input-output data without any model of the agents. In fact, this gain can be computed from only three experiments (per agent), and it can become less conservative with further data samples. If iterative experiments can be performed, we also provide an approach for applying different gains over different edges to further reduce any conservatism.
We survey the different advantages for each of the methods and discuss their applicability in terms of the number of required measurements, or trade-offs in terms of energy. We also provide simulations presenting the effectiveness of the presented model-free control methods.} \textcolor{black}{To the best of the authors' knowledge, no prior works consider data-driven control of multi-agent systems using passivity. Furthermore, this is the first application of the network optimization framework of \cite{Burger2014,Sharf2018a} where the agents do not have an exact model.}

\paragraph*{Notations}
We employ use notions from algebraic graph theory \cite{Godsil2001}. An undirected graph $\mathcal{G}=(\mathbb{V},\mathbb{E})$ consists of finite sets of vertices $\mathbb{V}$ and edges $\mathbb{E} \subset \mathbb{V} \times \mathbb{V}$.  We denote the edge having ends $i$ and $j$ in $\mathbb{V}$ by $k=\{i,j\} \in \mathbb{E}$. For each edge $k$, we pick an arbitrary orientation and denote $k=(i,j)$. The incidence matrix $\mathcal{E}\in\mathbb{R}^{|\mathbb{E}|\times|\mathbb{V}|}$ of $\mathcal{G}$ is defined such that for an edge $k=(i,j)\in \mathbb{E}$, we have $[\mathcal{E}]_{ik} =+1$, $[\mathcal{E}]_{jk} =-1$, and $[\mathcal{E}]_{\ell k} =0$ for $\ell \neq i,j$. $\mathrm{diam}(\G)$ denotes the diameter of $\G$.

We also use notions from linear algebra. For every vector $v\in\mathbb{R}^n$, $\diag(v)$ denotes the $n\times n$ diagonal matrix with the entries of $v$ on its diagonal. The image of any linear map $T$ between vector spaces will be denoted by $\IM(T)$. Also, for two sets $A,B\subseteq \mathbb{R}^d$, we define $A+B=\{a+b:\ a\in A,\ b\in B\}$. Furthermore, $\|\cdot\|$ is the Euclidean norm.

Lastly, if $\Sigma$ is a dynamical system, and $M$ is a linear map of appropriate dimension, we can consider the cascaded system of $\Sigma$ and $M$. The cascade of $\Sigma$ after $M$ is denoted $\Sigma M$, and the cascade of $\Sigma$ before $M$ is denoted $M\Sigma$.
\vspace{-5pt}
\section{Background: Network Optimization and\\ Passivity in Cooperative Control}\label{sec.Background}
Our goal in this subsection is to describe the diffusive coupling networks studied in \cite{Burger2014}, and to present the passivity-based network optimization framework achieved for multi-agent systems. See also \cite{Sharf2017,Sharf2018a}.
\vspace{-10pt}
\subsection{Diffusively Coupled Systems and Steady-State Relations}
Diffusively coupled networks are composed of agents $\{\Sigma_i\}_{i\in \V}$ interacting over a graph $\G=(\V,\EE)$ using edge controllers $\{\Pi_e\}_{e\in \EE}$. Each vertex $i\in \V$ represents an agent and each edge $e\in\EE$ represents a controller. We model them as SISO dynamical systems:
\begin{align}
\Sigma_i:\ \begin{cases} \dot{x}_i = f_i(x_i,u_i) \\ y_i = h_i(x_i,u_i) \end{cases},
\Pi_e:\ \begin{cases} \dot{\eta}_e = \phi_e(\eta_e,\zeta_e) \\ \mu_e = \psi_e(\eta_e,\zeta_e),\end{cases}
\end{align} 
where the agents' state, input and output are $x_i,u_i,y_i$ respectively, and the controllers' state, input and output are $\eta_e,\zeta_e,\mu_e$ respectively. To understand the coupling of these systems, we consider the stacked inputs and outputs of the agents and controllers as $y = [y_1,...,y_{|\V|}]^T$, and similarly for $u,\zeta,\mu$. The system is connected via the relations $\zeta = \E^T y$ and $u = -\E\mu$, where $\E$ is the incidence matrix of the graph $\G$. In other words, if we stack all agents to a dynamical system $\Sigma$, and stack all controllers to a dynamical system $\Pi$, the closed-loop is the feedback connection of $\Sigma$ and $\E\Pi\E^T$. See Fig. \ref{fig.ClosedLoopNoGain} for an illustration of the network, which we will denote by $(\G,\Sigma,\Pi)$. 

\begin{figure} [!t] 
    \centering
    \includegraphics[width=0.45\textwidth]{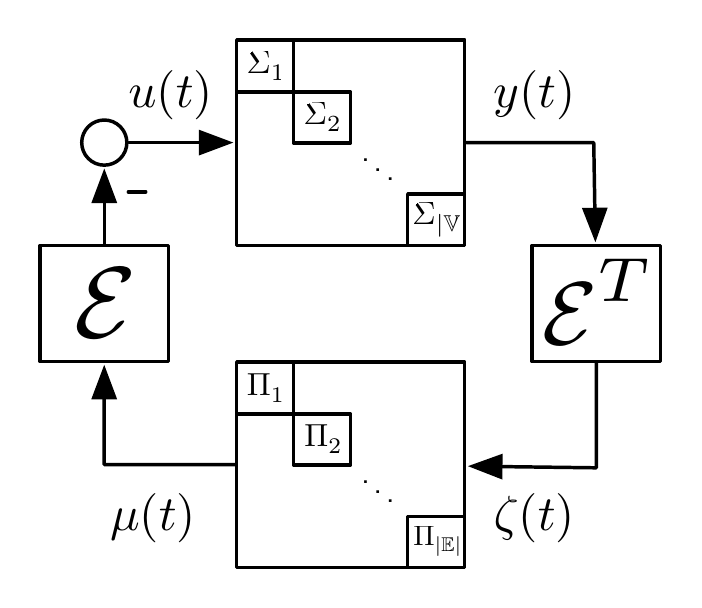}
    \caption{Block-diagram of the diffusively-coupled network $(\Sigma, \Pi, \mathcal{G})$.}
    \label{fig.ClosedLoopNoGain}
	\vspace{-12pt}
\end{figure}

We will be interested in steady-states of the closed-loop system. 
It's clear that if the stacked vectors $(\mathrm{u},\mathrm{y},\upzeta,\upmu)$ are a steady-state for $(\G,\Sigma,\Pi)$, then for every vertex $i\in \V$, $(\mathrm u_i, \mathrm y_i)$ is a steady-state input-output pair for the system $\Sigma_i$, and for every edge $e\in \EE$, $(\upzeta_e,\upmu_e)$ is a steady-state input-output pair for the system $\Pi_e$. This motivates the exploration of \emph{steady-state input-output relations}, first defined in \cite{Burger2014}.
\begin{defn}
The \emph{steady-state relation} of a system is a set containing all the steady-state input-output pairs of the system.
\end{defn}
We will denote the steady-states relations of $\Sigma_i,\Pi_e,\Sigma$, and $\Pi$ as $k_i,\gamma_e,k$, and $\gamma$, accordingly.
\begin{rem}
We will sometimes abuse the notation and consider this relation as a set-valued map. Indeed, for any input $\mathrm u$ we can define the set $k(\mathrm u)$ by
$
k(\mathrm u) = \{\mathrm y:\ (\mathrm{u,y})\in k\},
$
and similarly for $k_i,\gamma_e$, and $\gamma$. We also consider the inverse relation $k^{-1}$ as the set-valued map assigning to a steady-state output $\mathrm y$ the set
$
k^{-1}(\mathrm y) = \{\mathrm{u}:\ \mathrm y \in k(\mathrm u)\},
$
i.e., the set of all steady-state inputs corresponding to the steady-state output $\mathrm y$. We define this similarly for $k_i,\gamma_e$, and $\gamma$.
\end{rem} 
Thus, $(\mathrm{u,y},\upzeta,\upmu)$ is a steady-state of $(\G,\Sigma,\Pi)$ if and only if $\mathrm y \in k(\mathrm u)$, $\upmu \in \gamma(\upzeta)$, $\upzeta = \E^T \mathrm y$ and $\mathrm u = -\E\upmu$. Equivalently, $\mathrm y$ is a steady-state output of $(\G,\Sigma,\Pi)$ if and only if the zero vector $0$ lies in the set $k^{-1}(\mathrm y)+\E\gamma(\E^T\mathrm y)$ \cite{Burger2014,Sharf2017}.
\vspace{-10pt}
\subsection{Maximum Equilibrium-Independent Passivity and the Network Optimization Framework}
The main tool allowing us to connect multi-agent systems to the network optimization world is monotone relations.
\begin{defn}
A steady-state relation is \emph{monotone} if for any two points $(\mathrm u_1,\mathrm y_1)$ and $(\mathrm u_2,\mathrm y_2)$ in the relation, $u_1 < u_2$ implies $y_1 \le y_2$. We say that a monotone relation is \emph{maximally monotone} if it is not contained in a larger monotone relation.
\end{defn}
In order to connect this definition to the system-theoretic world, we define the following variant of passivity:
\begin{defn}[\cite{Burger2014}]
A SISO system is said to be (output-strictly) \emph{maximum equilibrium-independent passive} (MEIP) if the following two conditions hold:
\begin{itemize}
\item[i)] The system is (output-strictly) passive with respect to any steady-state input-output pair it has, and
\item[ii)] it's steady-state relation is maximally monotone.
\end{itemize}
\end{defn}

One important property of maximally monotone relations is that they are subgradients of convex functions \cite{Rockafeller1997}.
In this direction, we assume that the agents and controllers of the diffusively-coupled network $(\G,\Sigma,\Pi)$ are MEIP. Let $K_i$, and $\Gamma_e$ be the corresponding convex integral functions for the steady-state relations $k_i$,$\gamma_e$. In other words, $\partial K_i = k_i$ and $\partial \Gamma_e = \gamma_e$, where $\partial$ denotes the subdifferential of convex functions \cite{Rockafeller1997}. We shall denote $K= \sum_{i\in \V} K_i$ and $\Gamma = \sum_{e \in \mathbb{E}} \Gamma_e$, so that $\partial K = k$ and $\partial \Gamma = \gamma$. The dual functions of $K_i,\Gamma_e,K,\Gamma$ are defined using the Legendre transform, $K^\star(\mathrm y) = \sup_{\mathrm u}\{\mathrm u^T\mathrm y - K(\mathrm u)\} = -\inf_{\mathrm u}\{K(\mathrm u) - \mathrm u^T\mathrm y\}$, and similarly for $K_i^\star,\Gamma_e^\star$ and $\Gamma^\star$. We note that $\partial K^\star = k^{-1}$, $\partial \Gamma^\star = \gamma^{-1}$, $\partial K_i^\star = k_i^{-1}$ and $\partial \Gamma_e^\star = \gamma_e^{-1}$ \cite{Rockafeller1997}.

We now resume our interest in steady-states for the diffusively coupled network $(\G,\Sigma,\Pi)$. We recall that $\mathrm y$ was the steady-state output of the diffusively coupled network if and only if $0 \in k^{-1}(\mathrm y)+\E\gamma(\E^T\mathrm y)$. Restating this result in the language of convex functions gives the following theorem.
\begin{thm}[\cite{Burger2014}]\label{thm.BasicConvergenceMEIP}
Consider the diffusively coupled network $(\G,\Sigma,\Pi)$. Assume all agents $\Sigma_i$ are MEIP, and all controllers $\Pi_e$ are output-strictly MEIP 
(or vice versa). Let $k_i,\gamma_e,k$ and $\gamma$ be the steady-state relations of $\Sigma_i,\Pi_e,\Sigma$ and $\Pi$ accordingly, and let $K_i,\Gamma_e,K,\Gamma$ be the corresponding convex integral functions. Then the closed-loop system converges to a steady-state $(\mathrm{u,y},\upzeta,\upmu)$, such that $(\mathrm{y},\upzeta)$ and $(\mathrm{u},\upmu)$ are dual optimal solutions to the following convex optimization problems:
\begin{center}
\begin{tabular}{ c||c }
 \textbf{Optimal Potential Problem} & \textbf{Optimal Flow Problem}  \\ \hline
 $ \begin{array}{cl} \underset{y,\zeta}{\min} &K^\star(y) + \Gamma(\zeta)\\
\mathrm{s.t.}&\mathcal{E}^Ty = \zeta 
\end{array} $&  $ \begin{array}{cl}\underset{u,\mu}{\min}& K(u) + \Gamma^\star(\mu) \\
\mathrm{s.t.} &\mu = -\mathcal{E}u.
\end{array} $ 
\end{tabular}
\end{center}
\end{thm}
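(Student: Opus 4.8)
The plan is to separate the statement into a \emph{static} part and a \emph{dynamic} part. The static part identifies the steady states of $(\G,\Sigma,\Pi)$ with the primal--dual optimal pairs of the Optimal Potential Problem and the Optimal Flow Problem, which I will show form a Lagrangian (Fenchel) dual pair. The dynamic part shows, via a passivity-based Lyapunov argument, that every closed-loop trajectory converges to some steady state. Combining the two yields the theorem.

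For the static part I would start from the characterization already recorded above: $\mathrm y$ is a steady-state output of $(\G,\Sigma,\Pi)$ if and only if $0\in k^{-1}(\mathrm y)+\E\gamma(\E^T\mathrm y)$. Eliminating $\zeta$ in the Optimal Potential Problem through the constraint $\zeta=\E^T y$ gives the unconstrained convex program $\min_y K^\star(y)+\Gamma(\E^T y)$, and since $\partial K^\star=k^{-1}$ and $\partial\Gamma=\gamma$ its subgradient optimality condition is exactly $0\in k^{-1}(\mathrm y)+\E\gamma(\E^T\mathrm y)$; hence steady-state outputs are precisely the minimizers of the Optimal Potential Problem. To recover the Optimal Flow Problem and the duality claim I would dualize the constraint $\E^T y=\zeta$ with a multiplier $\mu\in\R^{|\EE|}$: the Lagrangian dual function is $-K^{\star\star}(-\E\mu)-\Gamma^\star(\mu)=-K(-\E\mu)-\Gamma^\star(\mu)$ (using $K^{\star\star}=K$ and $\Gamma^{\star\star}=\Gamma$), so the dual problem is $\min_{u,\mu}K(u)+\Gamma^\star(\mu)$ subject to $u=-\E\mu$, i.e.\ the Optimal Flow Problem. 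Writing down the KKT system of this primal--dual pair and using the conjugacy relations $\partial K^\star=k^{-1}$, $\partial\Gamma^\star=\gamma^{-1}$ reproduces exactly $\mathrm y\in k(\mathrm u)$, $\upmu\in\gamma(\upzeta)$, $\upzeta=\E^T\mathrm y$, $\mathrm u=-\E\upmu$ --- the steady-state equations of $(\G,\Sigma,\Pi)$. So a steady state of the network is the same object as a primal--dual optimal pair for the two programs.

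For the dynamic part, fix any steady state $(\mathrm{u,y},\upzeta,\upmu)$ (existence follows from feasibility of the Optimal Potential Problem together with a properness/coercivity condition on its objective, which I would derive from, or append to, the MEIP hypotheses). Let $S_i$ be a storage function certifying passivity of $\Sigma_i$ with respect to $(\mathrm u_i,\mathrm y_i)$, let $W_e$ be an output-strict storage function for $\Pi_e$ with respect to $(\upzeta_e,\upmu_e)$, and set $V=\sum_{i\in\V}S_i(x_i)+\sum_{e\in\EE}W_e(\eta_e)$. Differentiating along closed-loop trajectories and using the dissipation inequalities gives $\dot V\le (y-\mathrm y)^T(u-\mathrm u)+(\zeta-\upzeta)^T(\mu-\upmu)-\rho\|\mu-\upmu\|^2$ for some $\rho>0$; substituting the interconnection relations $u-\mathrm u=-\E(\mu-\upmu)$ and $\zeta-\upzeta=\E^T(y-\mathrm y)$ makes the two bilinear terms cancel, leaving $\dot V\le-\rho\|\mu-\upmu\|^2\le 0$. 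Thus $V$ is nonincreasing, trajectories are bounded, and LaSalle's invariance principle (or Barbalat's lemma) forces $\mu\to\upmu$; a detectability argument for MEIP systems then upgrades this to convergence of the full closed-loop state to a steady state, which by the static part is a primal--dual optimal pair.

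The main obstacle is the dynamic part rather than the static one. Concretely: one must guarantee that trajectories are precompact so that LaSalle applies (requiring $V$ to be radially unbounded, or a separate boundedness estimate), and one must pass from ``$\mu\to\upmu$'' to convergence of the state to a steady state via a zero-state/output detectability property of the MEIP components; one must also secure existence of a steady state, which is exactly where a coercivity assumption on the integral functions enters. The static part, by contrast, is essentially a dictionary translation between the monotone-inclusion form of the steady-state equations and the optimality/KKT conditions of a Fenchel dual pair, and goes through routinely once the conjugacy relations $\partial K^\star=k^{-1}$ and $\partial\Gamma^\star=\gamma^{-1}$ are available.
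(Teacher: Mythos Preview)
The paper does not prove this theorem: it is quoted as a background result from \cite{Burger2014} in Section~\ref{sec.Background} and stated without proof, so there is no ``paper's own proof'' to compare your proposal against.

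For what it is worth, your outline is the standard argument one finds in the cited literature. The static part---identifying steady states with primal--dual optimal pairs via the subdifferential relations $\partial K^\star=k^{-1}$, $\partial\Gamma=\gamma$ and Fenchel duality---is exactly the mechanism the paper invokes when it says ``Restating this result in the language of convex functions gives the following theorem.'' The dynamic part---a sum-of-storages Lyapunov function, cancellation of the bilinear cross terms through the interconnection $u-\mathrm u=-\E(\mu-\upmu)$, $\zeta-\upzeta=\E^T(y-\mathrm y)$, and then LaSalle/Barbalat---is the argument in \cite{Burger2014,Arcak2007}. The caveats you raise (existence of a steady state via coercivity, precompactness of trajectories, and the detectability step needed to upgrade $\mu\to\upmu$ to full state convergence) are precisely the places where the cited works impose or verify additional structure, so you have correctly located the nontrivial content.
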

The two network optimization problems above will be denoted often by (OPP) and (OFP) respectively. These problems are fundamental in the field of network optimization, dealing with optimization problems defined on graphs \cite{Rockafeller1997}. The names ``optimal potential problem" and ``optimal flow problem" are inspired from standard nomenclature in this field.
\vspace{-6pt}
\section{Problem Formulation} \label{sec.Problem_Formulation}
\vspace{-2pt}
We focus on relative-output based formation control. In this problem, the agents know the relative output $\upzeta_e = y_i-y_j$ with respect to their neighbors, and the control goal is to converge to a steady-state with prescribed relative outputs $\upzeta_e = \mathrm y_i - \mathrm y_j$. Examples include the consensus problem, in which all outputs must agree, as well as relative-position based formation control of robots, in which the robots are required to organize themselves in a desired spatial structure \cite{Oh2015}.

More specifically, we are given a graph $\G$ and agents $\Sigma$, and our goal is to design controllers $\Pi$ so that the signal $\zeta(t)$ of the diffusively coupled network $(\G,\Sigma,\Pi)$ will converge to a desired, given steady-state vector $\upzeta^\star$. One evident solution is to apply a (shifted) integrator as a controller. However, this solution will not always work even when the agents are MEIP.

\begin{exam}
Consider agents $\Sigma_i$ with integrator dynamics, together with (shifted) integrator controllers $\Pi_e$, where we desire consensus (i.e., $\upzeta^\star = 0$) over a connected graph $\G$,
\begin{align*}
\Sigma_i: \begin{cases}\dot{x}_i = u_i \\ y_i = x_i\end{cases}, \hspace{1cm}
\Pi_e: \begin{cases}\dot{\eta}_e = \zeta_e \\ \mu_e = \eta_e\end{cases}.
\end{align*}
The trajectories of the diffusively-coupled system can be understood by noting that the closed-loop system yields the second-order dynamics $\ddot{x} = -\E\E^T x$. Decomposing $x$ using a basis of eigenvectors of the graph Laplacian $\E\E^T$, which is a positive semi-definite matrix, we see that the trajectory of $x(t)$ oscillates around the consensus manifold $\{\mathrm{x}:\ \exists \lambda \in \mathbb{R}\ \mathrm{x} = \lambda\mathbb{1}_n\}$. Specifically, $x(t) - \frac{1}{n}\mathbb{1}_n^Tx(t)\mathbb{1}_n = \sum_{i=2}^n c_i\cos(\sqrt{\lambda_i}t+\varphi_i)v_i$, where $\lambda_2,\ldots,\lambda_n > 0$ are the non-trivial eigenvalues of the graph Laplacian, $v_2,\ldots,v_n$ are corresponding unit-length eigenvectors, and $c_i,\varphi_i$ are constants depending on the initial conditions $x(0),\eta(0)$. Thus $x(t)=y(t)$ does not converge anywhere, let alone to consensus. Moreover, $\zeta(t)$ does not converge as $t\to \infty$, as $\E\zeta(t) = \E\E^T y(t) = \sum_{i=2}^n \lambda_i c_i \cos(\sqrt{\lambda_i}t + \varphi_i)v_i$. Thus the integrator controller does not solve the formation control problem in this case.
\end{exam}

Even if the integrator would solve this problem in general, we would like more freedom in choosing the controller. In practice, one might want to design the controller to satisfy extra requirements (like $\mathcal{H}_2$- or $\mathcal{H}_\infty$-norm minimization, or making sure that certain restrictions on the behavior of the system are not broken). We do not try and satisfy these more complex requirements, but instead show that a large class of controllers can be used to solve the practical formation control problem. In turn, this allows one to choose from a wide range of controllers, and try and satisfy additional desired properties. \cite{Sharf2017} offers an algorithm solving the problem, assuming the agents are MEIP and a perfect model of them is known. This algorithm allows a lot of freedom in the choice of controllers. However, in practice we oftentimes have no exact model of the agents, or any closed-form model.
To formalize the goals we aim at, we define the notion of \emph{practical} formation control.
\begin{prob}\label{def.pracformcont}
Given a graph $\G$, agents $\Sigma$, a desired formation $\upzeta^\star\in\IM(\E^T)$, and an error margin $\varepsilon$, find a controller $\Pi$ so that the relative output vector $\zeta(t)$ of the network $(\G,\Sigma,\Pi)$ converges to some $\upzeta_0$ such that $\|\zeta^\star-\upzeta_0\|\le \varepsilon$.
\end{prob}
By choosing suitable error margins $\varepsilon$, practical formation control (compared to formation control) comprises no restriction or real drawback in any application case. Therefore, solving the practical formation control problem constitutes an interesting problem especially for unknown dynamics of the agents. Thus, we strive to develop an algorithm solving this practical formation control problem without a model of the agents while still providing rigorous guarantees.

The underlying idea of our approach is amplifying the controller output.
Consider the scenario depicted in Fig. \ref{fig.ClosedLoopGain}, where the graph $\G$, the agents $\Sigma$ and the nominal controller $\Pi$ are fixed, and the gain matrix $A$ is a diagonal matrix $A=\diag(\{a_e\}_{e\in\EE}) $ with positive entries. 
We will show in the following that when the gains $a_e$ become large enough, the controller dynamics $\Pi$ become much more emphasized than the agent dynamics $\Sigma$. By correctly choosing the nominal controller $\Pi$ according to $\upzeta^\star$, we can hence achieve arbitrarily close formations to $\upzeta^\star$, as the effect of the agents on the closed-loop dynamics will be dampened. We denote the diffusively-coupled system in Fig.~\ref{fig.ClosedLoopGain} as the 4-tuple $(\G,\Sigma,\Pi,A)$, or as $(\G,\Sigma,\Pi,a)$ where $a$ is the vector of diagonal entries of $A$. In case $A$ has uniform gains, i.e.,  $A=\alpha I$, we denote the system as $(\G,\Sigma,\Pi,\alpha\mathbb{1}_n)$ . We make an assumption in order to apply the network optimization framework of Theorem \ref{thm.BasicConvergenceMEIP}:
\begin{assump}\label{Assump}
The agents $\{\Sigma_i\}_{i\in\V}$ are all MEIP, and the chosen controllers $\{\Pi_e\}_{e\in\EE}$ are all output-strictly MEIP.
\end{assump}

\begin{figure} [!t] 
    \centering
    \includegraphics[width=0.48\textwidth]{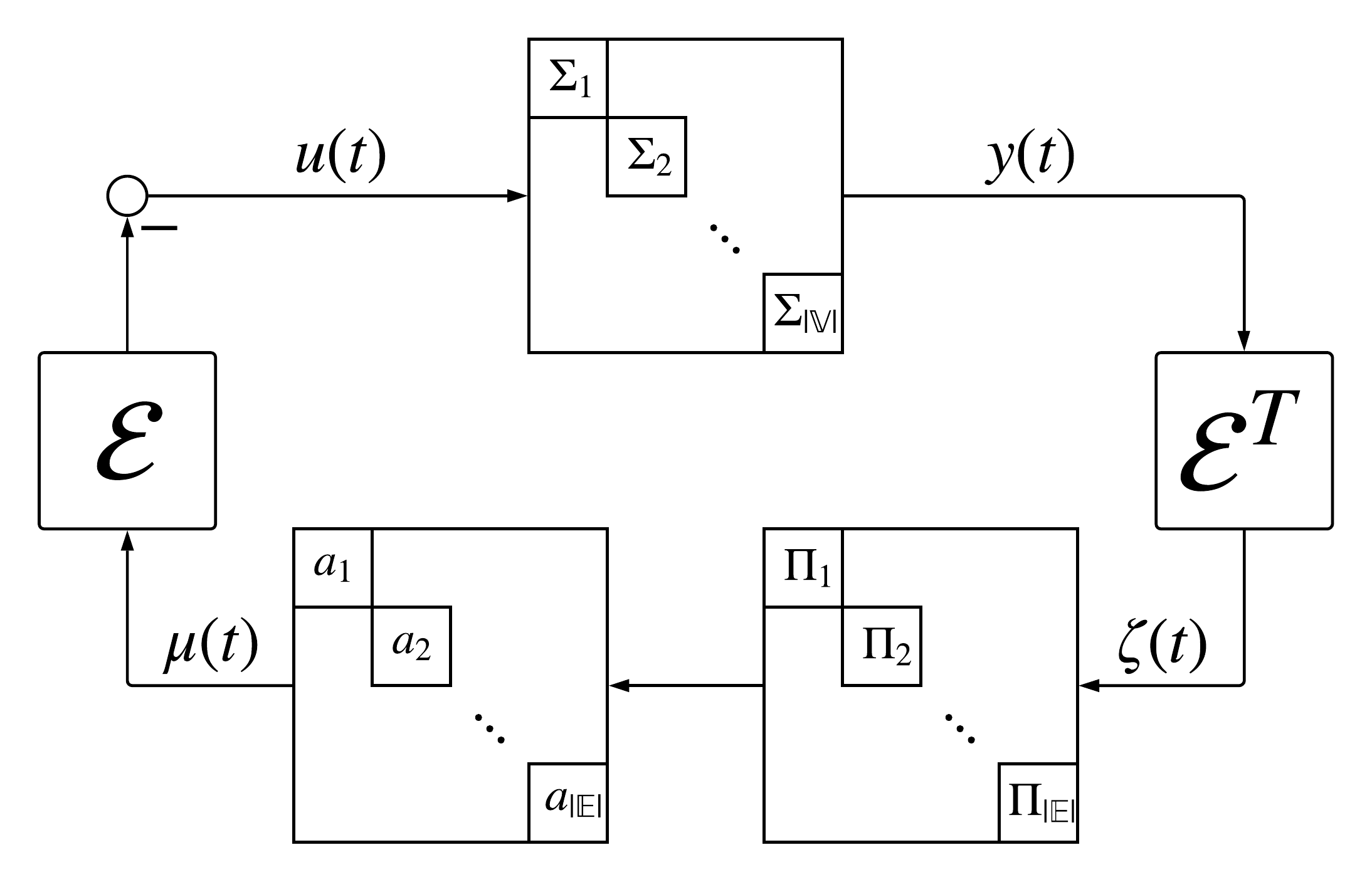}
    \caption{Block-diagram of the diffusively-coupled network $(\Sigma, \Pi, \mathcal{G},A)$.}
    \label{fig.ClosedLoopGain}
	\vspace{-15pt}
\end{figure}

Before expanding on the suggested controller design, we discuss Assumption \ref{Assump}. In practice, we might not know if an agent is MEIP. Hence, we discuss how to either verify MEIP for the agents, or otherwise determine their shortage of passivity. We also discuss how to passivize the agents in the latter case.
First, in some occasions, \textcolor{black}{we might not know a model for the agents, but some known general structure properties.} For example, one might know that an agent can be modeled by a gradient system, or a Euler-Lagrange system, but the exact model is unknown due to uncertainty on the characterizing parameters. In that case, we can use analytical results to verify MEIP. To exemplify this idea, we show how a very rough model can be used to prove that a system is MEIP. 
\begin{prop} \label{prop.MEIPFromObscureModel}
Consider a control-affine SISO system:
\begin{align}\label{controlaffine}
\dot{x} = -f(x) + g(x)u;\ y = h(x).
\end{align}
Assume that $g$ is positive, that $f/g,h:\mathbb{R}\to\mathbb{R}$ are continuous ascending functions, and that either $\lim_{|x|\to\infty} |f(x)/g(x)| = \infty$ or $\lim_{|x|\to\infty} |h(x)| = \infty$. Then \eqref{controlaffine} is MEIP.
\end{prop}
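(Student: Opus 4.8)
The plan is to verify directly the two requirements in the definition of MEIP: (i)~passivity with respect to every steady-state input--output pair, and (ii)~maximal monotonicity of the steady-state relation. These two halves can be treated essentially independently.

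First I would pin down the steady-state relation. Writing $\varphi := f/g$ (well defined since $g>0$), a pair $(\mathrm u,\mathrm y)$ is a steady-state input--output pair of \eqref{controlaffine} if and only if there is $\bar x\in\mathbb{R}$ with $\mathrm u=\varphi(\bar x)$ and $\mathrm y=h(\bar x)$; hence $k=\{(\varphi(x),h(x)):x\in\mathbb{R}\}$, and $\varphi$ is continuous and ascending. Monotonicity of $k$ is then immediate: if $\varphi(x_1)<\varphi(x_2)$ then necessarily $x_1<x_2$ (otherwise $\varphi(x_1)\ge\varphi(x_2)$), so $h(x_1)\le h(x_2)$. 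For passivity with respect to a fixed steady-state pair $(\mathrm u,\mathrm y)=(\varphi(\bar x),h(\bar x))$, I would propose the storage function
\[
S(x)=\int_{\bar x}^{x}\frac{h(s)-h(\bar x)}{g(s)}\,ds .
\]
Since $h$ is ascending and $g>0$, the integrand has the sign of $s-\bar x$, so $S\ge 0$, $S(\bar x)=0$, and $S$ is $C^1$ with $S'(x)=(h(x)-h(\bar x))/g(x)$. Differentiating along \eqref{controlaffine} gives $\dot S=(h(x)-\mathrm y)(u-\varphi(x))=(y-\mathrm y)(u-\mathrm u)-(h(x)-h(\bar x))(\varphi(x)-\varphi(\bar x))$, and the last product is nonnegative because both factors share the sign of $x-\bar x$. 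Hence $\dot S\le (u-\mathrm u)(y-\mathrm y)$, i.e.\ passivity with respect to $(\mathrm u,\mathrm y)$.

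The main work, and the step I expect to be the real obstacle, is showing that $k$ is \emph{maximally} monotone; this is exactly where the growth hypothesis is used. Assume first $\lim_{|x|\to\infty}|\varphi(x)|=\infty$; the other case is symmetric after replacing $k$ by $k^{-1}=\{(h(x),\varphi(x)):x\in\mathbb{R}\}$, which is maximally monotone iff $k$ is, and for which $h$ now plays the role of the surjective coordinate. Under the assumption, $\varphi$ is onto $\mathbb{R}$, so $\mathrm{dom}(k)=\mathbb{R}$; moreover $k$ is closed in $\mathbb{R}^2$ (if $(\varphi(x_n),h(x_n))\to(u,y)$ then $\{x_n\}$ is bounded, since $|\varphi|\to\infty$ at infinity, and any subsequential limit $x$ gives $(u,y)=(\varphi(x),h(x))\in k$), and each section $k(u)=h(\varphi^{-1}(u))$ is a compact interval. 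Suppose for contradiction that some $(u_0,y_0)\notin k$ can be adjoined while preserving monotonicity. Then $k(u_0)=[\underline y,\overline y]$ is a nonempty compact interval with $y_0\notin[\underline y,\overline y]$; say $y_0>\overline y$ (the case $y_0<\underline y$ is symmetric). Put $x_0:=\sup\varphi^{-1}(u_0)$, so $h(x_0)=\overline y<y_0$. For $x>x_0$ one has $\varphi(x)>u_0$, while $h(x)\to h(x_0)=\overline y<y_0$ as $x\downarrow x_0$ by continuity; choosing such an $x$ close enough to $x_0$ yields $(\varphi(x),h(x))\in k$ with first coordinate larger than $u_0$ and second coordinate smaller than $y_0$, contradicting monotonicity of the enlarged relation. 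Hence no such $(u_0,y_0)$ exists, $k$ is maximally monotone, and combined with the passivity estimate this proves that \eqref{controlaffine} is MEIP. (Alternatively, maximal monotonicity could be obtained by exhibiting a proper closed convex $F$ with $\partial F=k$, e.g.\ a primitive of the monotone selection $u\mapsto\inf k(u)$, but the direct argument above seems cleaner.)
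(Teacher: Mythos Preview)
Your proof is correct. The passivity half is essentially identical to the paper's: same storage function $S(x)=\int_{\bar x}^{x}\frac{h(s)-h(\bar x)}{g(s)}\,ds$ and the same splitting of $\dot S$ into the supply rate minus a nonnegative product.

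The genuine difference is in the maximal monotonicity argument. The paper does not argue directly; it invokes an auxiliary notion of a \emph{cursive} relation (a monotone relation that is the image of a continuous curve going to infinity and with measure-zero self-intersections) together with a theorem from \cite{Sharf2019a} stating that monotone cursive relations are maximally monotone. Verifying cursiveness for $x\mapsto(\varphi(x),h(x))$ is then immediate from continuity, strict monotonicity (injectivity), and the growth hypothesis. Your approach instead exploits the growth assumption to show $\mathrm{dom}(k)=\mathbb R$ with compact-interval sections, and then rules out adjoining any extra point by a short continuity/limit argument at the endpoint $x_0=\sup\varphi^{-1}(u_0)$. Your route is fully self-contained and avoids importing external machinery; the paper's route is shorter on the page once the cursive-relation theorem is available and handles both growth cases uniformly without the $k\leftrightarrow k^{-1}$ swap. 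Either is perfectly adequate here.
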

The proof is available in the appendix. See also \cite{Sharf2018a} for a treatment on gradient systems with oscillatory terms. Similarly, one can use a \textcolor{black}{highly uncertain} model to give an estimate about equilibrium-independent passivity indices.

Another approach for verifying Assumption~\ref{Assump} is learning input-output passivity properties from trajectories. For LTI systems, the shortage of passivity can be asymptotically revealed by iteratively probing the agents and measuring the output signal \cite{Romer2017b}. In \cite{Romer2019}, the authors showed that even one input-output trajectory (with persistently exciting input) is sufficient to find the shortage of passivity of an LTI system. For nonlinear agents, one can apply approaches presented in \cite{Montenbruck2016, Romer2017a}, under an assumption on Lipschitz continuity of the steady-state relation. However, for general non-linear systems, this is still a work in progress. We note that for LTI systems, output-strict passivity directly implies output strict MEIP \cite{Burger2014}. 

Using either approach, we can either find that an agent is MEIP, or that it has some shortage of passivity, and we need to render the agent passive in order to apply the model-free control approaches presented in this paper. We can use passivizing transformations in order to get a passive augmented agent. For example, if the agent has output-shortage of passivity $s_i > 0$, we can apply a controller $C_i: y_i \mapsto \nu_i y_i$ to the agent as in \cite{Jain2018}, with $\nu_i > s_i$, as shown in Fig.~\ref{fig:cl}. It can be shown that the augmented agent is output-strictly MEIP in this case. More generally, one could deal with more complex shortages of passivity, namely simultaneous input- and output-shortage of passivity, using more complex transformations \cite{Sharf2019a}.

\begin{figure}
\begin{center}
\begin{tikzpicture}
\draw[thick, fill=white!10] (0,0) rectangle (1.5,1);
\draw[thick] (0,-1.5) rectangle (1.5,-.5);
\draw[thick,->] (1.5,0.5) -- (3.5,0.5);
\node[anchor=south east] at (2.75,0.5) {\small $y_{i}$};
\node[anchor=south east] at (-1.5,0.5) {\small $u_{i}$};
\node[anchor=south east] at (-.25,0.5) {};
\node[anchor=south east] at (-.25,-1) {};

\draw[-,thick] (0,-1) -- (-1,-1);

\draw[thick] (-1,0.5) circle(0.1);
\node[anchor=north east] at (-1,0.5) {\small $-$};

\draw[thick,-] (2.5,0.5)--(2.5,-1);
\draw[thick,->] (-1,-1) -- (-1,0.4);
\draw[thick,->] (-0.9,0.5) -- (0,0.5);
\draw[thick,->] (-2,0.5) -- (-1.1,0.5);
\draw[thick,->] (2.5,-1) -- (1.5,-1);

\node at (0.75,0.5) {\small $\Sigma_{i}$};
\node at (0.75,-1) {\small $C_{i}$};
\end{tikzpicture}
\end{center}
\caption{Passivation of a passivity-short agent using feedback.}
\label{fig:cl}
\vspace{-15pt}
\end{figure}
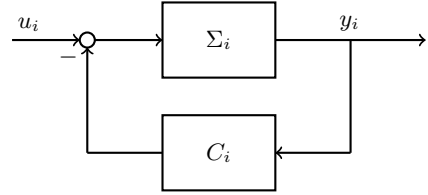

With this discussion and relaxation of Assumption~\ref{Assump}, we return to our solution of the practical formation control problem. We considered closed-loop systems of the form $(\G,\Sigma,\Pi,a)$, where $a$ is a vector of edge gains. From here, the paper diverges into two sections. The next section deals with theory and analysis for uniform edge gains. The following section deals with the case of heterogeneous edge gains.
\vspace{-5pt}
\section{Practical Formation Control\\ with Uniform Gains} \label{sec.SingleGain}
The chapter is split into two parts. The first part deals with the theory, and the second part deals with the corresponding implementation of practical formation control synthesis using uniform gains on the edges.
\vspace{-5pt}
\subsection{Theory}
We wish to understand the effect of amplification on the steady-state of the closed-loop system. For the remainder of the section, we fix a graph $\G$, agents $\Sigma$ and controllers $\Pi$ such that Assumption \ref{Assump} holds. We consider the diffusively coupled system $(\G,\Sigma,\Pi,\alpha \mathbb{1}_n)$ in Fig.~\ref{fig.ClosedLoopGain}, where the gains over all edges are identical and equal to $\alpha > 0$, and wish to understand the effect of $\alpha$. We let $K$ and $\Gamma$ denote the sum of the integral functions of the agents and of the controllers, respectively. We first study the steady-states of this diffusively coupled system.

\begin{lem} \label{lem.coupling_OPP}
Under the assumptions above, the closed-loop converges to a steady-state, and the steady-states {$\mathrm{y},\upzeta$} of the closed-loop system are minimizers of the following problem:
\begin{align} \label{eq.OPP_alpha}
\min_{y,\zeta}&\quad K^\star(y)+\alpha\Gamma(\zeta) \\
s.t.&\quad \E^Ty = \zeta. \nonumber
\end{align}
\end{lem}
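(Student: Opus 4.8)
The plan is to recognize the closed-loop system $(\G,\Sigma,\Pi,\alpha\mathbb{1}_n)$ as an ordinary diffusively coupled network of the form treated in Theorem \ref{thm.BasicConvergenceMEIP}, but with a modified set of edge controllers, and then to identify the resulting network optimization problem. First I would observe that inserting the uniform gain $\alpha$ on every edge is equivalent to replacing each controller $\Pi_e$ by the cascade $\alpha\Pi_e$ (the controller $\Pi_e$ followed by multiplication by $\alpha$), or, depending on where the gain sits in Fig.~\ref{fig.ClosedLoopGain}, by $\Pi_e\alpha$. In either case, since $\alpha>0$, the cascaded controller $\widetilde\Pi_e$ is still output-strictly MEIP: static positive scaling preserves output-strict passivity with respect to every steady-state pair, and it maps a maximally monotone steady-state relation to another maximally monotone one. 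Thus Assumption \ref{Assump} still holds for the network $(\G,\Sigma,\widetilde\Pi)$, and Theorem \ref{thm.BasicConvergenceMEIP} applies: the closed loop converges to a steady-state, and the steady-state $(\mathrm y,\upzeta)$ solves the Optimal Potential Problem $\min_{y,\zeta}\{K^\star(y)+\widetilde\Gamma(\zeta):\ \E^Ty=\zeta\}$, where $\widetilde\Gamma=\sum_{e}\widetilde\Gamma_e$ is the sum of the integral functions of the new controllers.

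The key remaining computation is to show $\widetilde\Gamma(\zeta)=\alpha\Gamma(\zeta)$. I would do this at the level of steady-state relations and their integral functions: if $\gamma_e=\partial\Gamma_e$ is the steady-state relation of $\Pi_e$, then the steady-state relation of the scaled controller is $\widetilde\gamma_e=\alpha\gamma_e$ (scaling the output, resp. input, of a static-gain cascade scales the steady-state map accordingly; one should check the orientation of the gain in Fig.~\ref{fig.ClosedLoopGain} to confirm it is $\zeta_e\mapsto\alpha\mu_e$ and not $\zeta_e\mapsto\mu_e(\alpha\,\cdot)$, but the relevant case here is the former). Then $\widetilde\gamma_e=\alpha\,\partial\Gamma_e=\partial(\alpha\Gamma_e)$, so the integral function of the scaled controller is $\alpha\Gamma_e$ up to an irrelevant additive constant, and summing over edges gives $\widetilde\Gamma=\alpha\Gamma$. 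Substituting into the Optimal Potential Problem from Theorem \ref{thm.BasicConvergenceMEIP} yields exactly \eqref{eq.OPP_alpha}.

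I expect the main obstacle to be bookkeeping rather than conceptual: one must be careful about where the gain $A$ is placed in the block diagram of Fig.~\ref{fig.ClosedLoopGain} (before or after $\Pi$, i.e.\ whether the closed loop is the feedback interconnection of $\Sigma$ with $\E A\Pi\E^T$ or with $\E\Pi A\E^T$) and verify that in the relevant configuration the effective controller's integral function is genuinely $\alpha\Gamma_e$; a cascade on the wrong side would instead scale the argument of $\Gamma_e$, changing the problem. A secondary point requiring a line of justification is that $\alpha>0$ preserves output-strict MEIP of $\Pi_e$ — both the output-strict passivity inequality (which picks up a positive factor) and maximal monotonicity of $\alpha\gamma_e$ (immediate since $\zeta\mapsto\alpha\zeta$ is an order-preserving bijection on the relation). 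Once these two checks are in place, the lemma follows directly from Theorem \ref{thm.BasicConvergenceMEIP}.
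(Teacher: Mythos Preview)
Your proposal is correct and follows essentially the same approach as the paper's proof: define the amplified controller $\bar\Pi=\alpha\Pi$, note it remains output-strictly MEIP, apply Theorem~\ref{thm.BasicConvergenceMEIP} to $(\G,\Sigma,\bar\Pi)$, and identify $\bar\gamma=\alpha\gamma$ so that $\bar\Gamma=\alpha\Gamma$. The paper's own proof is simply a terser version of exactly this argument, asserting the MEIP preservation and the identity $\bar\Gamma=\alpha\Gamma$ without the detailed justifications you sketch.
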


\begin{proof}
We define a new stacked controller, $\bar{\Pi}=\alpha\Pi$, by cascading the previous controller $\Pi$ with the gain $\alpha$. {The resulting controller $\bar{\Pi}$ is again output-strictly MEIP}, and we let $\bar\gamma,\bar\Gamma$ denote the corresponding steady-state input-output relation and integral function. Theorem \ref{thm.BasicConvergenceMEIP} implies that the closed-loop system (with $\bar\Pi$) converges to minimizers of (OPP) for the system $(\G,\Sigma,\bar\Pi)$. Hence, we have $\bar\gamma(\zeta)=\alpha\gamma(\zeta)$ for any $\zeta\in\mathbb{R}^{|\EE|}$. Therefore, (OPP) for the system $(\G,\Sigma,\bar\Pi)$ reads:
\begin{align*}
\min_{y,\zeta}&\quad K^\star(y)+\alpha\Gamma(\zeta) \\
s.t.&\quad \E^Ty = \zeta,
\end{align*}
as $\bar\Gamma = \alpha\Gamma$.
\end{proof}

Our goal is to show that when $\alpha \gg 1$, the relative output vector $\zeta$ of the diffusively coupled system $(\G,\Sigma,\Pi,\alpha \mathbb{1}_n)$ globally asymptotically converges to an $\varepsilon=\varepsilon(\alpha)$-ball around the minimizer of $\Gamma$, and $\lim_{\alpha \to \infty} \varepsilon(\alpha) = 0$. Thus, if we design the controllers so that $\Gamma$ is minimized at $\zeta^\star$, then $\alpha \gg 1$ provides a solution to the $\varepsilon$-practical formation control problem. Indeed, we can prove the following theorem:

\begin{thm} \label{thm.practical_formation_control_coupling}
Consider the closed-loop system $(\G,\Sigma,\Pi,\alpha \mathbb{1}_n)$, where the agents are {MEIP} and the controllers are {output-strictly MEIP}. Assume $\Gamma$ has a unique minimizer in $\IM(\E^T)$, denoted $\zeta_1$. For any $\varepsilon>0$, there exists some $\alpha_0>0$, such that for all $\alpha > \alpha_0$ and for all initial conditions, the closed-loop converges to a vector $\mathrm y$ satisfying $\|\E^T\mathrm y-\zeta_1\|<\varepsilon$. In particular, if $\zeta_1 = \zeta^\star$, we solve the practical control problem.
\end{thm}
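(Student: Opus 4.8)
The plan is to turn the dynamical claim into a statement about the minimizers of the static program~\eqref{eq.OPP_alpha} and then run a penalty (epi-convergence) argument in the parameter $\alpha$. By Lemma~\ref{lem.coupling_OPP}, for every $\alpha>0$ the closed loop of $(\G,\Sigma,\Pi,\alpha\mathbb{1}_n)$ converges to a steady state $(\mathrm y_\alpha,\upzeta_\alpha)$ with $\upzeta_\alpha=\E^T\mathrm y_\alpha$, and this pair minimizes~\eqref{eq.OPP_alpha}. Hence it suffices to show: for every $\varepsilon>0$ there is $\alpha_0$ such that $\alpha>\alpha_0$ implies $\|\upzeta-\zeta_1\|<\varepsilon$ for \emph{every} minimizing pair $(\mathrm y,\upzeta)$ of~\eqref{eq.OPP_alpha}. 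The intuition is that scaling $\Gamma$ by $\alpha$ makes the $\Gamma$-term dominate $K^\star$, so the optimal $\zeta$ is driven toward the minimizer of $\Gamma$ on $\IM(\E^T)$, which by hypothesis is $\zeta_1$.

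First I would prove convergence of the optimal $\Gamma$-values. Fix a reference $y_1$ with $\E^Ty_1=\zeta_1$ (possible since $\zeta_1\in\IM(\E^T)$) and put $p:=\Gamma(\zeta_1)=\min_{\IM(\E^T)}\Gamma$. Comparing the minimizer $(\mathrm y_\alpha,\upzeta_\alpha)$ with the feasible point $(y_1,\zeta_1)$ gives $K^\star(\mathrm y_\alpha)+\alpha\Gamma(\upzeta_\alpha)\le K^\star(y_1)+\alpha p$; since $\upzeta_\alpha\in\IM(\E^T)$ forces $\Gamma(\upzeta_\alpha)\ge p$, this rearranges to
\begin{align*}
0\ \le\ \Gamma(\upzeta_\alpha)-p\ \le\ \tfrac1\alpha\bigl(K^\star(y_1)-K^\star(\mathrm y_\alpha)\bigr).
\end{align*}
To conclude $\Gamma(\upzeta_\alpha)\to p$ I must rule out $K^\star(\mathrm y_\alpha)\to-\infty$ at a rate comparable to $\alpha$. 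If $K^\star(\mathrm y_{\alpha_n})\to-\infty$ along some $\alpha_n\to\infty$, then $\mathrm y_{\alpha_n}$ escapes to infinity; splitting the escape direction into a component in $\ker(\E^T)$ and one transversal to it, the transversal part would make $\upzeta_{\alpha_n}=\E^T\mathrm y_{\alpha_n}$ diverge, hence $\Gamma(\upzeta_{\alpha_n})\to\infty$ (by the inf-compactness established next), which makes the left-hand side of the displayed comparison $+\infty$; a pure $\ker(\E^T)$-escape along which $K^\star\to-\infty$ would make~\eqref{eq.OPP_alpha} unbounded below, contradicting its solvability. Either way we get a contradiction, so $\{K^\star(\mathrm y_\alpha)\}$ is bounded below and $\Gamma(\upzeta_\alpha)\to p$.

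Second, I would upgrade value convergence to point convergence, and this is where uniqueness is essential: it makes $\Gamma|_{\IM(\E^T)}$ inf-compact. Indeed, if some nonzero $d\in\IM(\E^T)$ were a recession direction of $\Gamma$ with nonpositive recession value, then $t\mapsto\Gamma(\zeta_1+td)$ would be nonincreasing and bounded below by $p$, hence constant, yielding a whole ray of minimizers of $\Gamma$ on $\IM(\E^T)$ and contradicting uniqueness (a negative recession value would contradict $\zeta_1$ being a minimizer at all). So every sublevel set of $\Gamma|_{\IM(\E^T)}$ is bounded; combined with $\Gamma(\upzeta_\alpha)\to p$, the family $\{\upzeta_\alpha\}$ (over all minimizers and all large $\alpha$) lies in a fixed bounded set. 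Now argue by contradiction: were the theorem false, there would be $\alpha_n\to\infty$ and minimizers with $\|\upzeta_{\alpha_n}-\zeta_1\|\ge\varepsilon$; passing to a convergent subsequence $\upzeta_{\alpha_n}\to\bar\zeta\in\IM(\E^T)$ (a closed subspace) and using lower semicontinuity of $\Gamma$ gives $\Gamma(\bar\zeta)\le\lim_n\Gamma(\upzeta_{\alpha_n})=p$, so $\bar\zeta$ minimizes $\Gamma$ on $\IM(\E^T)$ and hence $\bar\zeta=\zeta_1$ — contradicting $\|\bar\zeta-\zeta_1\|\ge\varepsilon$. This yields the required $\alpha_0$, and the last sentence is immediate: if $\zeta_1=\zeta^\star$, then $\|\E^T\mathrm y-\zeta^\star\|<\varepsilon$, i.e., Problem~\ref{def.pracformcont} is solved.

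I expect the main obstacle to be the control of $K^\star$ along the minimizers: for general MEIP agents $K^\star$ need not be bounded below, so the penalty estimate only closes after observing that a direction in which $K^\star\to-\infty$ either is annihilated by $\E^T$ (excluded because~\eqref{eq.OPP_alpha} is solvable for each $\alpha$, which is exactly what Lemma~\ref{lem.coupling_OPP}/Theorem~\ref{thm.BasicConvergenceMEIP} guarantee) or forces $\Gamma$ to grow faster than $K^\star$ decays (handled by the inf-compactness of $\Gamma|_{\IM(\E^T)}$ derived from the uniqueness hypothesis). The remaining ingredients — the reduction via Lemma~\ref{lem.coupling_OPP}, the convexity/recession bookkeeping, closedness of $\IM(\E^T)$, and lower semicontinuity of $\Gamma$ — are routine.
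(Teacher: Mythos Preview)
Your approach is sound and reaches the conclusion, but it differs from the paper's. The paper argues constructively via two short lemmas: Lemma~\ref{lem.inverse_continuity_convex} shows that for a convex function with a unique minimizer, any sublevel set close enough to the minimum value lies in an $\varepsilon$-ball around the minimizer (quantified by some $M>0$); Lemma~\ref{lem.ApproximateOptimization} then normalizes so that $\min K^\star=\min\Gamma=0$, picks $\mathrm y_0$ with $\E^T\mathrm y_0=\zeta_1$, sets $m=K^\star(\mathrm y_0)$, and observes that any minimizer of $F_\alpha=K^\star+\alpha\Gamma(\E^T\cdot)$ satisfies $\Gamma(\E^T\mathrm y)\le m/\alpha$ (using $K^\star\ge0$), so $\alpha_0=m/M$ suffices. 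That explicit bound is not incidental: it drives Remark~\ref{rem.ConstructiveGain} and the data-driven estimation in Algorithm~\ref{alg.SingleGain}. Your compactness/subsequence argument yields only existence of $\alpha_0$, which is enough for the theorem but loses the quantitative content the paper exploits downstream. Conversely, your route does not silently assume $K^\star$ is bounded below.

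That said, your treatment of the possibility $K^\star(\mathrm y_{\alpha_n})\to-\infty$ has a gap. In the ``transversal'' branch you deduce $\Gamma(\upzeta_{\alpha_n})\to\infty$ and declare a contradiction with the displayed comparison, but its right-hand side $(K^\star(y_1)-K^\star(\mathrm y_{\alpha_n}))/\alpha_n$ can diverge too, so no contradiction follows; moreover the dichotomy ``transversal part diverges'' versus ``pure $\ker(\E^T)$ escape'' is not exhaustive, since the $\ker(\E^T)$ component of $\mathrm y_{\alpha_n}$ may blow up while $\E^T\mathrm y_{\alpha_n}$ stays bounded. The clean fix uses the solvability you already invoke, but at a \emph{fixed} parameter: Lemma~\ref{lem.coupling_OPP} applied at $\alpha=1$ gives a finite optimal value $M_1$, hence $K^\star(\mathrm y_\alpha)\ge M_1-\Gamma(\upzeta_\alpha)$ for every $\alpha$; substituting into your comparison yields $(\alpha-1)\bigl(\Gamma(\upzeta_\alpha)-p\bigr)\le K^\star(y_1)+p-M_1$, so $\Gamma(\upzeta_\alpha)\to p$ without any case split, after which your inf-compactness and subsequence steps go through unchanged.
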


In order to prove the theorem, we study (OPP) for the diffusively coupled system $(\G,\Sigma,\Pi,\alpha \mathbb{1}_n)$, as described in Lemma \ref{lem.coupling_OPP}. In order to do so, we need to prove a couple of lemmas. The first deals with lower bounds on the values of convex functions away from their minimizers.
\begin{lem} \label{lem.inverse_continuity_convex}
Let $U$ be a finite-dimensional vector space. Let $f:U \to\R$ be convex, and suppose $x_0 \in U$ is the unique minimum of $f$. Then for any $\delta>0$ there is some $M>f(x_0)$ such that for any point $x\in U$, if $f(x)<M$ then $\|x-x_0\| < \delta$.
\end{lem}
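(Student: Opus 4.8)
The plan is to take $M$ to be the infimum of $f$ over the sphere $S_\delta := \{x\in U:\ \|x-x_0\|=\delta\}$ and then use convexity to propagate the sublevel-set bound inward. First I would recall that a convex function which is real-valued on all of a finite-dimensional vector space is continuous (cf. \cite{Rockafeller1997}, since the effective domain is all of $U$). Combined with compactness of $S_\delta$ — and this is the one place where finite-dimensionality is essential — this guarantees that $M := \min_{x\in S_\delta} f(x)$ is attained, say at a point $z^\star\in S_\delta$. Since $\|z^\star-x_0\|=\delta>0$ and $x_0$ is the \emph{unique} global minimizer of $f$, we obtain the strict inequality $M = f(z^\star) > f(x_0)$, which is exactly the requirement $M>f(x_0)$. (The degenerate case $\dim U = 0$ is trivial, as then $x_0$ is the only point of $U$.)

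It then remains to check that $f(x)<M$ forces $\|x-x_0\|<\delta$. Suppose instead $\|x-x_0\|\ge\delta$. If $\|x-x_0\|=\delta$ then $x\in S_\delta$, so $f(x)\ge M$, contradicting $f(x)<M$. If $\|x-x_0\|>\delta$, set $t := \delta/\|x-x_0\|\in(0,1)$ and $z := (1-t)x_0 + tx$, so that $\|z-x_0\| = t\|x-x_0\| = \delta$, i.e.\ $z\in S_\delta$. By convexity of $f$,
\[
M \;\le\; f(z) \;\le\; (1-t)f(x_0) + t\,f(x) \;<\; (1-t)M + tM \;=\; M,
\]
where the last strict inequality uses $f(x_0) < M$ and $f(x) < M$. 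This contradiction establishes the claim.

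I do not expect a genuine obstacle here; the only steps needing a little care are invoking continuity of $f$ on all of $U$ (so the minimum over the compact sphere is attained) and observing that compactness of $S_\delta$ really does rely on $U$ being finite-dimensional. An essentially equivalent alternative would be a contradiction argument: assume a sequence $x_n$ with $f(x_n)\to f(x_0)$ but $\|x_n-x_0\|\ge\delta$, project each $x_n$ onto $S_\delta$ along the segment $[x_0,x_n]$ (which by convexity does not increase $f$ beyond $\max\{f(x_0),f(x_n)\}$), pass to a convergent subsequence of these projections using compactness of $S_\delta$, and invoke continuity to produce a minimizer of $f$ on $S_\delta$ with value $f(x_0)$ — a second minimizer distinct from $x_0$, contradicting uniqueness.
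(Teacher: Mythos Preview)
Your proof is correct and follows essentially the same approach as the paper: both take $M$ to be the minimum of $f$ over the sphere $S_\delta$, use compactness and uniqueness of the minimizer to get $M>f(x_0)$, and then use convexity along rays through $x_0$ to conclude $f(x)\ge M$ whenever $\|x-x_0\|\ge\delta$. The only cosmetic difference is that the paper invokes monotonicity of the difference quotient $\lambda\mapsto (f(x_0+\lambda y)-f(x_0))/\lambda$ (citing \cite{Rockafeller1997}, Theorem~23.1), whereas you use the convex-combination inequality directly; these are equivalent formulations of the same fact.
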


\begin{proof}
We assume without loss of generality that $f(x_0)=0$. Let $\mu$ be the minimum of $f$ on the set $\{x\in U:\ ||x-x_0||=\delta\}$, which is positive since $x_0$ is $f$'s unique minimum and the set $\{x\in U:\|x-x_0\|=\delta\}$ is compact. We know that, for any $y\in U$, the difference quotient 
$\frac{f(x_0+\lambda y)-f(x_0)}{\lambda}$
is an increasing function of $\lambda>0$ (see Theorem 23.1 of \cite{Rockafeller1997}). Manipulating this inequality shows that for any $x\in U$, $||x||\ge \delta$ implies $f(x) \ge \frac{||x||}{\delta}\mu$, and in particular $f(x)\ge \mu$ whenever $||x|| \ge \delta$. Thus, if $f(x) < \mu$ then we must have $||x-x_0||<\delta$, so we can choose $M=\mu$ and complete the proof.
\end{proof}

The second lemma deals with minimizers of perturbed versions of convex functions on graphs.

\begin{lem} \label{lem.ApproximateOptimization}
Fix a graph $\G=(\V,\EE)$ and let $\E$ be its incidence matrix. Let $K:\R^{|\V|}\to\R$ be a convex function, and let $\Gamma:\R^{|\EE|}\to\R$ be a convex function with a unique minimum $\zeta_1 $ when restricted to the set $\IM(\E^T)$. For any $\alpha>0$, consider the function $F_\alpha(\mathrm y)=K^\star(\mathrm y)+\alpha\Gamma(\E^T\mathrm y)$. Then for any $\varepsilon>0$, there exists some $\alpha_0>0$ such that if $\alpha>\alpha_0$ then all of $F_\alpha$'s minima, $\mathrm y$, satisfy $\|\E^T\mathrm y-\zeta_1\|<\varepsilon$.
\end{lem}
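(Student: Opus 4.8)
The plan is to use that for large $\alpha$ the term $\alpha\Gamma(\E^T\mathrm y)$ overwhelms $F_\alpha$, so any minimizer $\mathrm y$ is forced to make $\Gamma(\E^T\mathrm y)$ close to its smallest possible value $\Gamma(\zeta_1)$ over $\IM(\E^T)$; the $\varepsilon$-estimate then comes from Lemma~\ref{lem.inverse_continuity_convex} applied to $\Gamma$ \emph{restricted to the subspace} $\IM(\E^T)$.

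Concretely, I would fix a minimizer $\mathrm y$ of $F_\alpha$ (if $F_\alpha$ has no minimizer the claim is vacuous) and a reference point $\mathrm y_1\in\R^{|\V|}$ with $\E^T\mathrm y_1=\zeta_1$, which exists since $\zeta_1\in\IM(\E^T)$. From $F_\alpha(\mathrm y)\le F_\alpha(\mathrm y_1)$,
\[
K^\star(\mathrm y)+\alpha\Gamma(\E^T\mathrm y)\le K^\star(\mathrm y_1)+\alpha\Gamma(\zeta_1).
\]
Using the crude but $\alpha$-independent bound $K^\star(\mathrm y)=\sup_{\mathrm u}\{\mathrm u^T\mathrm y-K(\mathrm u)\}\ge -K(0)$, this rearranges to
\[
0\le \Gamma(\E^T\mathrm y)-\Gamma(\zeta_1)\le \frac{C}{\alpha},\qquad C:=K(0)+K^\star(\mathrm y_1),
\]
where the left inequality holds because $\E^T\mathrm y\in\IM(\E^T)$ and $\zeta_1$ minimizes $\Gamma$ there (so in particular $C\ge 0$), and $C$ is a finite constant not depending on $\alpha$.

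Next I would invoke Lemma~\ref{lem.inverse_continuity_convex} with $U=\IM(\E^T)$, the convex function $\Gamma|_{\IM(\E^T)}$ on it, and its unique minimizer $\zeta_1$: for the prescribed $\varepsilon>0$ it supplies an $M>\Gamma(\zeta_1)$ such that any $\zeta\in\IM(\E^T)$ with $\Gamma(\zeta)<M$ obeys $\|\zeta-\zeta_1\|<\varepsilon$. Setting $\alpha_0:=C/(M-\Gamma(\zeta_1))$ (and $\alpha_0$ an arbitrary positive number if $C=0$), for every $\alpha>\alpha_0$ the displayed estimate gives $\Gamma(\E^T\mathrm y)\le\Gamma(\zeta_1)+C/\alpha<M$, whence $\|\E^T\mathrm y-\zeta_1\|<\varepsilon$, as required.

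The step I expect to be the real content is the first one: one must choose the comparison point $\mathrm y_1$ so that it produces a finite, $\alpha$-independent constant $C$ — i.e. $\mathrm y_1$ in the effective domain of $K^\star$, which is harmless under the standing assumptions — and, more importantly, one must apply Lemma~\ref{lem.inverse_continuity_convex} to the \emph{restriction} of $\Gamma$ to $\IM(\E^T)$ rather than to $\Gamma$ on all of $\R^{|\EE|}$, since the latter need not have a unique, or even any, minimizer and only directions inside $\IM(\E^T)$ are pinned down by the constraint $\E^T\mathrm y=\zeta$. This is also exactly why the conclusion controls $\E^T\mathrm y$ and not $\mathrm y$ itself. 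The remaining steps (the lower bound on $K^\star$, the choice of $\alpha_0$, the bookkeeping that $F_\alpha$ need not attain its infimum) are routine.
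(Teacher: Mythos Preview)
Your proposal is correct and follows essentially the same route as the paper: compare $F_\alpha$ at a minimizer with its value at a reference point $\mathrm y_1$ satisfying $\E^T\mathrm y_1=\zeta_1$, extract the bound $\Gamma(\E^T\mathrm y)-\Gamma(\zeta_1)\le C/\alpha$, and then invoke Lemma~\ref{lem.inverse_continuity_convex} on $\Gamma|_{\IM(\E^T)}$. Your constant $C=K(0)+K^\star(\mathrm y_1)$ is precisely the quantity the paper calls $m$ in Remark~\ref{rem.ConstructiveGain}, and your explicit emphasis that Lemma~\ref{lem.inverse_continuity_convex} must be applied to the restriction is a point the paper leaves implicit.
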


\begin{proof}
By subtracting constants from $K{^\star}$ and $\Gamma$, we may assume without loss of generality that $\min(K{^\star})=\min(\Gamma)=0$.
Choose some $\mathrm y_0\in\R^{|\V|}$ such that $\E^T\mathrm y_0=\zeta_1$ and let $m=K{^\star}(\mathrm y_0)$. Note that $F_\alpha(y_0)=m$, meaning that if $\mathrm y$ is any minimum of $F_\alpha$, it must satisfy $F_\alpha(\mathrm y)\le m$, and in particular $\Gamma(\E^T\mathrm y)\le\frac{m}{\alpha}$.
Now, from Lemma \ref{lem.inverse_continuity_convex} we know that there is some $M>0$ such that if $\Gamma(\E^T \mathrm y)<M$ then $\|\E^T\mathrm y-\zeta_1\|<\varepsilon$. If we choose $\alpha_0=\frac{m}{M}$, then whenever $\alpha>\alpha_0$ we have $\Gamma(\E^T\mathrm y)<M$, implying $\|\E^T\mathrm y-\zeta_1\|<\varepsilon$. 
\end{proof}

We now connect the pieces and prove Theorem \ref{thm.practical_formation_control_coupling}.

\begin{proof}
Lemma \ref{lem.coupling_OPP} implies that the closed-loop system always converges to a minimizer of \eqref{eq.OPP_alpha}.
Lemma \ref{lem.ApproximateOptimization} proves that there exists $\alpha_0>0$ such that if $\alpha>\alpha_0$ then all minimizers of (OPP) satisfy $\|\E^T\mathrm y-\zeta_1\|<\varepsilon$. This proves the theorem.
\end{proof}

\begin{rem} \label{rem.ConstructiveGain}
The parameters $\varepsilon$ and $\zeta^\star$ can be used to estimate the minimal required gain $\alpha_0$ by following the proofs of Lemma \ref{lem.inverse_continuity_convex} and Lemma \ref{lem.ApproximateOptimization}. Namely, $\alpha_0 \le \frac{m}{M}$ where $M$ is the minimum of $\Gamma$ on the set $\{\zeta\in\IM(\E^T):\ \|\zeta-\zeta^\star\| = \varepsilon\}$, and $m = K^\star(\mathrm y_0) - \min_y{K^\star(\mathrm y)}=K^\star(\mathrm y^\star)+K(0)$ where $\mathrm y^\star\in \R^{|\V|}$ is any vector satisfying $\E^T \mathrm y^\star = \zeta^\star$.
\end{rem}

\begin{cor} \label{cor.formation_control_limit_simple_integrators}
Let $(\G,\Sigma, \Pi, \alpha \mathbb{1}_n)$ satisfy Assumption \ref{Assump} and let $(\G,\Sigma_{\mathrm{Int}},\Pi)$ be a network comprised of integrator dynamics for each agent. Denote the relative outputs of each system as $\zeta(t)$ and $\zeta_{\mathrm{Int}}(t)$ respectively.  Then for any $\varepsilon > 0$, there exists an $\alpha_0>0$ such that if $\alpha \ge \alpha_0$, then the relative outputs $\zeta(t)$ and $\zeta_{\mathrm{Int}}(t)$ both converge to constant vectors $\upzeta^\star$ and $\upzeta_{\mathrm{Int}}^\star$ respectively, and satisfy $\|\upzeta^\star - \upzeta_{\mathrm{Int}}^\star\| \leq \varepsilon$.
\end{cor}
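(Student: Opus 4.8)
The plan is to reduce the statement to Theorem~\ref{thm.BasicConvergenceMEIP} and Theorem~\ref{thm.practical_formation_control_coupling}. Concretely, I would first show that the integrator network $(\G,\Sigma_{\mathrm{Int}},\Pi)$ converges with limiting relative output equal to the point $\zeta_1$ already singled out in Theorem~\ref{thm.practical_formation_control_coupling} (the minimizer of $\Gamma$ over $\IM(\E^T)$), and then choose $\alpha$ large enough that, by Theorem~\ref{thm.practical_formation_control_coupling}, the relative output of $(\G,\Sigma,\Pi,\alpha\mathbb{1}_n)$ lands within $\varepsilon$ of that same $\zeta_1$; the triangle inequality then finishes the job. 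Throughout I tacitly carry over the hypothesis of Theorem~\ref{thm.practical_formation_control_coupling} that $\Gamma$ has a unique minimizer $\zeta_1$ on $\IM(\E^T)$.

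For the first part, I would compute the steady-state data of a single integrator $\dot x_i=u_i,\ y_i=x_i$: imposing $\dot x_i=0$ forces $u_i=0$ with $y_i$ free, so the steady-state relation is $k_i^{\mathrm{Int}}=\{0\}\times\R$, which is maximally monotone, and with the storage function $\tfrac12(x_i-y_i^\ast)^2$ one verifies passivity with respect to every steady-state pair $(0,y_i^\ast)$; hence each integrator is MEIP. The convex integral function with $\partial K_i^{\mathrm{Int}}=k_i^{\mathrm{Int}}$ is the indicator of $\{0\}$, so its Legendre dual is identically zero, and summing over $i\in\V$ gives $K_{\mathrm{Int}}^\star\equiv 0$. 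Since the integrators are MEIP and the controllers $\Pi_e$ are output-strictly MEIP by Assumption~\ref{Assump}, Theorem~\ref{thm.BasicConvergenceMEIP} applies to $(\G,\Sigma_{\mathrm{Int}},\Pi)$: the closed loop converges, and the limiting $(\mathrm y_{\mathrm{Int}},\upzeta_{\mathrm{Int}}^\star)$ is an optimal solution of $\min_{y,\zeta}\{K_{\mathrm{Int}}^\star(y)+\Gamma(\zeta):\ \E^Ty=\zeta\}$, which by $K_{\mathrm{Int}}^\star\equiv 0$ reduces to $\min\{\Gamma(\zeta):\ \zeta\in\IM(\E^T)\}$. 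By uniqueness of this minimizer it must equal $\zeta_1$, so $\zeta_{\mathrm{Int}}(t)\to\upzeta_{\mathrm{Int}}^\star=\zeta_1$.

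For the second part, I would apply Theorem~\ref{thm.practical_formation_control_coupling} to $(\G,\Sigma,\Pi,\alpha\mathbb{1}_n)$ with the given $\varepsilon>0$: it furnishes $\alpha_0>0$ such that for all $\alpha\ge\alpha_0$ the closed loop converges to some $\mathrm y$ with $\|\E^T\mathrm y-\zeta_1\|<\varepsilon$. Writing $\upzeta^\star=\E^T\mathrm y=\lim_t\zeta(t)$ and combining with the first part, $\|\upzeta^\star-\upzeta_{\mathrm{Int}}^\star\|=\|\upzeta^\star-\zeta_1\|<\varepsilon$, which is the assertion. The only delicate point is the uniqueness of the minimizer of $\Gamma$ on $\IM(\E^T)$, needed both so that $\zeta_{\mathrm{Int}}(t)$ has a single limit independent of initial conditions and so that Theorem~\ref{thm.practical_formation_control_coupling} applies verbatim; this is ensured by strict convexity of $\Gamma=\sum_e\Gamma_e$, a consequence of output-strict MEIP of the controllers (their inverse steady-state relations being single-valued). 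Beyond this bookkeeping the proof is a direct citation of the two theorems.
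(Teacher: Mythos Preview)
Your proposal is correct and follows essentially the same route as the paper: verify that integrators are MEIP so that Theorem~\ref{thm.BasicConvergenceMEIP} applies, observe that $K_{\mathrm{Int}}^\star\equiv 0$ reduces (OPP) to minimizing $\Gamma$ over $\IM(\E^T)$ so that $\upzeta_{\mathrm{Int}}^\star=\zeta_1$, and then invoke Theorem~\ref{thm.practical_formation_control_coupling}. You are in fact more careful than the paper in spelling out the MEIP verification for integrators and in flagging the implicit uniqueness hypothesis on the minimizer of $\Gamma|_{\IM(\E^T)}$.
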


\begin{proof}
The agents $\Sigma_{\mathrm{Int}}$ are MEIP. Thus, by Theorem \ref{thm.BasicConvergenceMEIP}, we know that the diffusively-coupled system $(\G,\Sigma_{\mathrm{Int}},\Pi)$ converges to a steady-state, and its steady-state output is a minimizer of the associated (OPP) problem. Note that the input-output relation of $\Sigma_{\mathrm{Int}}$ is given via $k^{-1}(\mathrm y) = 0$, meaning the integral function $K^\star$ is the zero function. Thus the associated problem (OPP) is the unconstrained minimization of $\Gamma(\E^T \mathrm y)$, meaning that the system $(\G,\Sigma_{\mathrm{Int}},\Pi)$ converges, and its output converges to a minimizer of $\Gamma(\E^T \mathrm y)$, i.e., its relative output $\zeta(t)$ converges to the minimizer of $\Gamma$ on $\IM(\E^T)$. Applying Theorem \ref{thm.practical_formation_control_coupling} now completes the proof.
\end{proof}

\begin{rem}[Almost Data-free control]\label{rem.data_free}
Corollary \ref{cor.formation_control_limit_simple_integrators} can be thought of as a synthesis procedure. Indeed, we can solve the synthesis problem as if the agents were single integrators, and then amplify the controller output by a factor $\alpha$. The corollary shows that for any $\varepsilon > 0$, there is a threshold $\alpha_0 > 0$ such that if $\alpha > \alpha_0$, the closed-loop system converges to an $\varepsilon$-neighborhood of $\zeta^\star$. 
We note that we only know that $\alpha_0$ exists as long as the agents are MEIP. Computing an estimate on $\alpha_0$, however, requires one to conduct a few experiments. 

There are a few possible approaches to overcome this requirement. One can try an iterative scheme, in which the edge gains are updated between iterations. Gradient-descent and extremum-seeking approaches are discussed in the next section (see Algorithm \ref{alg.MultiGain2}), but both require to measure the system between iterations. 
Another approach is to update the edge gains on a much slower time-scale than the dynamics of the system. This results in a two time-scale dynamical system, where the gains $a_e$ of the system $(\G,\Sigma,\Pi, a)$ are updated slowly enough to allow the system to converge. Taking $a_e$ as uniform gains of size $\alpha$, and slowly increasing $\alpha$, assures that eventually, $\alpha > \alpha_0$, so the system will converge $\varepsilon$-close to $\zeta^\star$. The only data we need is whether or not the system has already converged to an $\varepsilon$-neighborhood of $\zeta^\star$\textcolor{black}{\footnote{\textcolor{black}{Such data can be obtained by different methods, e.g. checking the size of the derivatives, using physical intuition in some cases, or using passivity to determine convergence rates as in \cite{Sharf2019f}.}}} , to know whether $\alpha$ should be updated or not. This requires no data on the trajectories themselves, nor information on the specific steady-state limit, \textcolor{black}{but only knowing whether the control goal has been achieved, which is the coarsest form of data. This results in an \emph{almost data-free}} solution of the practical formation control problem, which is valid as long as the agents are MEIP. 
\end{rem}
\vspace{-15pt}
\subsection{Data-Driven Determination of Gains}
In the previous subsection, we introduced a formula for a uniform gain $\alpha$ described by the ratio of $m$ and $M$, that solves the practical formation problem, where $m$ and $M$ are as defined in Remark \ref{rem.ConstructiveGain}. The parameter $M$ depends on the integral function $\Gamma$ of the controllers, evaluated on well-defined points, namely $\{\zeta\in\IM(\E^T):\ \|\zeta-\upzeta^\star\| = \varepsilon\}$. Thus we can compute $M$ exactly with no prior knowledge on the agents. This is not the case for the parameter $m$, which depends on the integral function of the agents. Without knowledge of any model of the agents, we need to obtain an estimate of $m$ solely on the basis of input-output data from the agents.

From Remark \ref{rem.ConstructiveGain}, we know that $m=\sum_{i=1}^n (K_i^\star(\mathrm y_i^\star) + K_i(0))=\sum_{i=1}^n m_i$ for some $\mathrm y^\star \in \mathbb{R}^n$ such that $\E^T \mathrm y^\star = \upzeta^\star$. Without a model of the agents, $m$ cannot be computed directly, but we can find an upper bound on $m$ from measured input-output trajectories via the inverse relations $k_i^{-1}$, $i=1, \dots, n$.

\begin{prop} \label{prop.estimate_m}
Let $(\mathrm u_i^\star,\mathrm y_i^\star)$, $(\mathrm u_{i,1}$, $\mathrm y_{i,1})$, $(\mathrm u_{i,2},\mathrm y_{i,2})$, $\dots$, $(\mathrm u_{i,r},\mathrm y_{i,r})$ and $(0,\mathrm y_{i,0})$ be steady-state input-output pairs for agent $i$, for some $r \ge 0$. Then:
$$
m_i \le \mathrm u_{i,1}(\mathrm y_{i,1}-\mathrm y_{i,0}) + \cdots + \mathrm u_{i,r}(\mathrm y_{i,r}-\mathrm y_{i,r-1}) + \mathrm u_i^\star(\mathrm y^\star_i - \mathrm y_{i,r}).
$$
\end{prop}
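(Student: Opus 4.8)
The plan is to unfold $m_i$ using convex duality and then telescope along the chain of supplied steady-state samples. Recall from Remark~\ref{rem.ConstructiveGain} that $m_i = K_i^\star(\mathrm y_i^\star) + K_i(0)$, where $K_i$ is the convex integral function of agent $i$ (so $\partial K_i = k_i$) and $K_i^\star$ is its Legendre transform. Since $(\mathrm u_i^\star,\mathrm y_i^\star)$ is a steady-state pair, $\mathrm y_i^\star \in k_i(\mathrm u_i^\star) = \partial K_i(\mathrm u_i^\star)$, so the Fenchel--Young inequality holds with equality at this pair, giving $K_i^\star(\mathrm y_i^\star) = \mathrm u_i^\star\mathrm y_i^\star - K_i(\mathrm u_i^\star)$. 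Hence $m_i = \mathrm u_i^\star\mathrm y_i^\star + K_i(0) - K_i(\mathrm u_i^\star)$, and it remains to bound $K_i(0) - K_i(\mathrm u_i^\star)$ from above using only the sampled input-output pairs. (All inputs appearing below are steady-state inputs, hence lie in $\mathrm{dom}(k_i) = \mathrm{dom}(\partial K_i)\subseteq\mathrm{dom}(K_i)$, so every value of $K_i$ used is finite; MEIP of $\Sigma_i$ guarantees $k_i$ is the subdifferential of such a convex $K_i$.)

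Next I would introduce the bookkeeping convention $\mathrm u_{i,0} := 0$ and $(\mathrm u_{i,r+1},\mathrm y_{i,r+1}) := (\mathrm u_i^\star,\mathrm y_i^\star)$, and use the telescoping identity
\begin{equation*}
K_i(0) - K_i(\mathrm u_i^\star) = \sum_{j=1}^{r+1}\big(K_i(\mathrm u_{i,j-1}) - K_i(\mathrm u_{i,j})\big).
\end{equation*}
For each $j$, the pair $(\mathrm u_{i,j-1},\mathrm y_{i,j-1})$ being a steady state means $\mathrm y_{i,j-1}\in\partial K_i(\mathrm u_{i,j-1})$, so the subgradient inequality $K_i(\mathrm u_{i,j}) \ge K_i(\mathrm u_{i,j-1}) + \mathrm y_{i,j-1}(\mathrm u_{i,j} - \mathrm u_{i,j-1})$ rearranges to $K_i(\mathrm u_{i,j-1}) - K_i(\mathrm u_{i,j}) \le \mathrm y_{i,j-1}(\mathrm u_{i,j-1} - \mathrm u_{i,j})$. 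Summing over $j=1,\dots,r+1$ gives $K_i(0) - K_i(\mathrm u_i^\star) \le \sum_{j=1}^{r+1}\mathrm y_{i,j-1}(\mathrm u_{i,j-1}-\mathrm u_{i,j})$.

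Finally, I would rearrange this right-hand side by summation by parts. Grouping the terms according to which $\mathrm u_{i,j}$ they multiply and using $\mathrm u_{i,0}=0$, one obtains
\begin{equation*}
\sum_{j=1}^{r+1}\mathrm y_{i,j-1}(\mathrm u_{i,j-1}-\mathrm u_{i,j}) = \sum_{j=1}^{r}\mathrm u_{i,j}(\mathrm y_{i,j}-\mathrm y_{i,j-1}) - \mathrm u_i^\star\mathrm y_{i,r}.
\end{equation*}
Plugging this into $m_i = \mathrm u_i^\star\mathrm y_i^\star + K_i(0) - K_i(\mathrm u_i^\star)$, the term $\mathrm u_i^\star\mathrm y_i^\star$ absorbs the $-\mathrm u_i^\star\mathrm y_{i,r}$ into $\mathrm u_i^\star(\mathrm y_i^\star - \mathrm y_{i,r})$, which is exactly the asserted bound (one can sanity-check the edge case $r=0$ directly, using the subgradient $\mathrm y_{i,0}$ at $0$). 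The convex-analysis content here is light --- just Fenchel--Young equality on the graph of $\partial K_i$ and the elementary subgradient inequality --- so the only place demanding real care, and the main obstacle, is getting the index bookkeeping in the telescoping/summation-by-parts step right.
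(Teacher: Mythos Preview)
Your proof is correct, but it takes the dual route to the paper's argument. The paper works directly with $K_i^\star$: from Fenchel--Young it writes $m_i = K_i^\star(\mathrm y_i^\star) - K_i^\star(\mathrm y_{i,0})$, telescopes this difference along the chain $\mathrm y_{i,0},\mathrm y_{i,1},\dots,\mathrm y_{i,r},\mathrm y_i^\star$, and bounds each increment by the subgradient inequality for $K_i^\star$ (whose subgradients are $k_i^{-1}$, i.e., the $\mathrm u$-values). This yields terms of the form $\mathrm u_{i,j}(\mathrm y_{i,j}-\mathrm y_{i,j-1})$ immediately, with no algebraic rearrangement. You instead telescope $K_i(0)-K_i(\mathrm u_i^\star)$ on the primal side along the $\mathrm u$-values, apply the subgradient inequality for $K_i$ (subgradients are the $\mathrm y$-values), and then need an Abel summation to flip the roles of $\mathrm u$ and $\mathrm y$. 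Both arguments rest on exactly the same convex-analytic facts; the paper's choice of working in $K_i^\star$ is tailored to the target inequality (which is already in ``$\mathrm u\cdot\Delta\mathrm y$'' form) and so avoids the summation-by-parts step you flagged as the main bookkeeping hazard. Your version has the minor advantage of handling all $r$ at once rather than by induction.
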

\begin{proof}
We prove the claim by induction on the number of steady-state pairs, $r+2$. First, consider the case $r=0$ of two steady-state pairs.
Because $(0,\mathrm y_{i,0})$ is a steady-state pair, we know that $K_i(0) = -K_i^\star(\mathrm y_{i,0})$ by Fenchel duality. Similarly, $K_i(\mathrm u_i^\star) = \mathrm u_i^\star \mathrm y_i^\star-K_i^\star(\mathrm y_i^\star)$. 
Thus,
\begin{align*} 
m_i = K_i^\star(\mathrm y_i^\star) {+} K_i(0) &= K_i^\star(\mathrm y_i^\star) {-} K_i^\star(\mathrm y_{i,0}) \le \mathrm u_i^\star (\mathrm y^\star_i {-} \mathrm y_{i,0}),
\end{align*}
where we use the inequality $K^\star_i(b)-K^\star_i(c) \ge k^{-1}_i(c)(b-c) $ for $b=\mathrm y_{i,0}$ and $c = \mathrm y^\star_i$. 
Now, we move to the case $r \ge 1$. We write $m_i$ as $(K_i^\star(\mathrm y_i^\star) - K_i^\star(\mathrm y_{i,r})) + (K_i^\star(\mathrm y_{i,r}) - K_i(0))$. The first element can be shown to be bounded by $\mathrm u_i^\star(\mathrm y^\star_i-\mathrm y_{i,r})$ by the case $r=0$. The second element is bounded by $\mathrm u_{i,1}(\mathrm y_{i,1}-\mathrm y_{i,0}) + \cdots + \mathrm u_{i,r}(\mathrm y_{i,r}-\mathrm y_{i,r-1})$ by induction hypothesis, as we use a total of $r+1$ steady-state pairs. Thus, $m_i$ is no greater than the sum of the two bounds, $\mathrm u_{i,1}(\mathrm y_{i,1}-\mathrm y_{i,0}) + \cdots + \mathrm u_{i,r}(\mathrm y_{i,r}-\mathrm y_{i,r-1}) + \mathrm u_i^\star(\mathrm y^\star_i - \mathrm y_{i,r})$.
\end{proof}
\begin{rem} \label{rem.estimate_m_2pt}
If we only have two steady-state pairs, $(\mathrm u_i^\star,\mathrm y_i^\star)$ and $(0,\mathrm y_{i,0})$, the estimate on $m_i$ becomes $m_i \le \mathrm u_i^\star (\mathrm y_i^\star-\mathrm y_{i,0})$. Thus two steady-state pairs, corresponding to two measurements/experiments, are enough to yield a meaningful bound on $m_i$. We do note that more experiments yield better estimates of $m_i$, i.e., if $r\ge 1$ the estimate in Proposition \ref{prop.estimate_m} is better as long as $(\mathrm y_{i,0},\mathrm y_{i,1},...,\mathrm y_{i,r},\mathrm y_i^\star)$ is a monotone series.
\end{rem}

With can use Remark~\ref{rem.estimate_m_2pt} to compute an upper bound on $m$ from the two steady-state pairs $(\mathrm u_i^\star,\mathrm y_i^\star)$ and $(0,\mathrm y_{i,0})$ per agent.  Designing experiments to measure these quantities is possible, but can require additional information on the plant, e.g. output-strict passivity. Instead, we take another path and estimate $\mathrm y_{i,0}$ and $\mathrm u_i^\star$ instead of computing them directly. Indeed, we use the monotonicity of the steady-state input-output relation to bound $\mathrm u_i^\star$ and $\mathrm y_{i,0}$ from above and below.  The approach is described in Algorithm \ref{alg.MEIPExperimentAndEstimate}. It is important to note that the closed-loop experiments are done with a output-strictly MEIP controller, which assure that the closed-loop system indeed converges.

\begin{algorithm} 
\caption{Estimating $m_i$ for an MEIP Agent}
\label{alg.MEIPExperimentAndEstimate}
\SetAlgoLined  
\textcolor{black}{\small Run the system in Fig.~\ref{fig.experiments} with $\beta_i$ small and $\mathrm y_{\Refc} = \frac{1}{\beta_i}$\;
Wait for convergence, and measure the steady-state output $\mathrm y_{i,+}$ and the steady-state input $\mathrm u_{i,+}$\;
Run the system in Fig.~\ref{fig.experiments} with $\beta_i$ small and $\mathrm y_{\Refc} = -\frac{1}{\beta_i}$\;
Wait for convergence, and measure the steady-state output $\mathrm y_{i,-}$ and the steady-state input $\mathrm u_{i,-}$\;
\eIf{$\mathrm y_{i,+} < \mathrm y_i^\star$}{
Run the system in Fig.~\ref{fig.experiments} with $\beta_i = 1$ and $\mathrm y_{\Refc} \gg \mathrm y_i^\star$\;
Wait for convergence, and measure the steady-state input $\mathrm u_{i,2}$ and output $\mathrm y_{i,2}$\;
}
{ 
Run the system in Fig.~\ref{fig.experiments} with $\beta_i = 1$, $\mathrm y_{\Refc} \ll \mathrm y_i^\star$\;
Wait for convergence, and measure the steady-state input $\mathrm u_{i,2}$ and output $\mathrm y_{i,2}$\;
}
Sort $\{\mathrm u_{i,-},\mathrm u_{i,+},\mathrm u_{i,2}\}$. Denote the result by $\{U_1,U_2,U_3\}$\;  \label{partalg.Start} 
Sort $\{\mathrm y_{i,-},\mathrm y_{i,+},\mathrm y_{i,2}\}$. Denote the result by $\{Y_1,Y_2,Y_3\}$\;
\eIf{$U_2> 0$} { 
Define $\underline{\mathrm y_{i,0}} = Y_1$ and $\overline{\mathrm y_{i,0}} = Y_2$\;
}
{ 
	Define $\underline{\mathrm y_{i,0}} = Y_2$ and $\overline{\mathrm y_{i,0}} = Y_3$\;
}
\eIf{$Y_2 > \mathrm y_i^\star$}
{
Define $\underline{\mathrm u_i^\star} = U_1$ and $\overline{\mathrm u_i^\star} = U_2$\;
}
{ 
	Define $\underline{\mathrm u_i^\star} = U_2$ and $\overline{\mathrm u_i^\star} = U_3$\;
}
\textbf{return} $\mathbb{m}_i$ as the maximum over $\omega(\mathrm y_i^\star - \upsilon)$, where $\omega \in \{ \underline{\mathrm u_i^\star},\overline{\mathrm u_i^\star}\}$ and $\upsilon \in \{ \underline{\mathrm y_{i,0}},\overline{\mathrm y_{i,0}}\}$\;}
\end{algorithm}

We prove the following: 

\begin{prop}
The output $\mathbb{m}_i$ of Algorithm \ref{alg.MEIPExperimentAndEstimate} is an upper bound on $m_i$.
\end{prop}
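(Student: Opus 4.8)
The plan is to combine the two-point estimate of Remark~\ref{rem.estimate_m_2pt} with an interval-arithmetic argument, using monotonicity of the steady-state relation $k_i$ to certify that the quantities computed by the algorithm genuinely bracket the (unknown) true values. Recall first, from the $r=0$ case of Proposition~\ref{prop.estimate_m}, that for \emph{any} steady-state pairs $(\mathrm u_i^\star,\mathrm y_i^\star)\in k_i$ and $(0,\mathrm y_{i,0})\in k_i$ one has $m_i\le \mathrm u_i^\star(\mathrm y_i^\star-\mathrm y_{i,0})$; hence it suffices to show that this product is at most $\mathbb{m}_i$ for one legitimate choice of $\mathrm u_i^\star$ and $\mathrm y_{i,0}$.

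First I would record what the four closed-loop experiments produce. Since the controller used in Fig.~\ref{fig.experiments} is output-strictly MEIP, each experiment converges (Theorem~\ref{thm.BasicConvergenceMEIP}) to a steady state whose agent component lies on $k_i$, so $(\mathrm u_{i,\pm},\mathrm y_{i,\pm})$ and $(\mathrm u_{i,2},\mathrm y_{i,2})$ are points of $k_i$. Because the controller enforces $\mathrm u_i=\beta_i(\mathrm y_{\Refc}-\mathrm y_i)$ at equilibrium, for $\beta_i$ sufficiently small the first two experiments give $\mathrm u_{i,+}=1-\beta_i\mathrm y_{i,+}>0$ and $\mathrm u_{i,-}=-1-\beta_i\mathrm y_{i,-}<0$; and by taking $\mathrm y_{\Refc}$ far above, respectively below, $\mathrm y_i^\star$ in the third experiment, the three measured outputs contain one value $<\mathrm y_i^\star$ and one value $>\mathrm y_i^\star$ (the boundary coincidences such as $\mathrm u_{i,2}=0$ or $\mathrm y_{i,+}=\mathrm y_i^\star$ only help, since then a measured pair is itself admissible). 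Sorting inputs into $U_1\le U_2\le U_3$ and outputs into $Y_1\le Y_2\le Y_3$, monotonicity of $k_i$ makes the two sortings agree (up to degenerate coincidences, which one may assume away), whence $(U_j,Y_j)\in k_i$, and in addition $U_1<0<U_3$ and $Y_1<\mathrm y_i^\star<Y_3$.

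The heart of the argument is the two monotonicity bracketings the algorithm carries out. For the zero-input output: from $(0,\mathrm y_{i,0}),(U_j,Y_j)\in k_i$ monotonicity yields $\mathrm y_{i,0}\le Y_j$ whenever $U_j>0$ and $\mathrm y_{i,0}\ge Y_j$ whenever $U_j<0$; since $U_1<0<U_3$, either $U_2>0$ (then $Y_1\le\mathrm y_{i,0}\le Y_2$) or $U_2<0$ (then $Y_2\le\mathrm y_{i,0}\le Y_3$), which is exactly the case split and the assignment of $\underline{\mathrm y_{i,0}},\overline{\mathrm y_{i,0}}$ in the algorithm, so $\underline{\mathrm y_{i,0}}\le\mathrm y_{i,0}\le\overline{\mathrm y_{i,0}}$. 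Symmetrically, from $(\mathrm u_i^\star,\mathrm y_i^\star),(U_j,Y_j)\in k_i$ monotonicity yields $\mathrm u_i^\star\le U_j$ whenever $Y_j>\mathrm y_i^\star$ and $\mathrm u_i^\star\ge U_j$ whenever $Y_j<\mathrm y_i^\star$; splitting on $Y_2>\mathrm y_i^\star$ versus $Y_2<\mathrm y_i^\star$ and using $Y_1<\mathrm y_i^\star<Y_3$ reproduces the algorithm's assignment of $\underline{\mathrm u_i^\star},\overline{\mathrm u_i^\star}$, so $\underline{\mathrm u_i^\star}\le\mathrm u_i^\star\le\overline{\mathrm u_i^\star}$.

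Finally I would invoke the elementary fact that the map $(\omega,\upsilon)\mapsto\omega(\mathrm y_i^\star-\upsilon)$ is affine in each variable separately, hence attains its maximum over the rectangle $[\underline{\mathrm u_i^\star},\overline{\mathrm u_i^\star}]\times[\underline{\mathrm y_{i,0}},\overline{\mathrm y_{i,0}}]$ at one of its four corners. Since $(\mathrm u_i^\star,\mathrm y_{i,0})$ lies in that rectangle, $m_i\le\mathrm u_i^\star(\mathrm y_i^\star-\mathrm y_{i,0})\le\mathbb{m}_i$, which is the claim. The main obstacle is the bookkeeping in the middle steps: making sure that the sign conditions forced by ``$\beta_i$ small'' and ``$\mathrm y_{\Refc}\gg\mathrm y_i^\star$'' really do hold, so that one positive and one negative input, and one output on each side of $\mathrm y_i^\star$, are always available, and checking that each branch of the algorithm is covered and that the degenerate equalities cannot spoil the bracketing; everything else is the two-point bound of Remark~\ref{rem.estimate_m_2pt} together with a one-line convexity observation.
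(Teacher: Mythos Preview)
Your proposal is correct and follows essentially the same approach as the paper: reduce to the two-point bound $m_i\le\mathrm u_i^\star(\mathrm y_i^\star-\mathrm y_{i,0})$ and use monotonicity of $k_i$ together with the sign information forced by the experiment design to bracket $\mathrm y_{i,0}$ and $\mathrm u_i^\star$. If anything you are more explicit than the paper in two places---the case split on $U_2$ and $Y_2$ that pins down the tighter interval, and the bilinear-on-a-rectangle observation that justifies checking only the four corners---while the paper spends more care on convergence of the single-agent loop (your appeal to Theorem~\ref{thm.BasicConvergenceMEIP} is slightly off, since that theorem is stated for the networked setting; the paper instead argues directly that the feedback of a passive agent with the output-strictly passive static gain $\beta_i$ has a steady state given by the minimizer of $K_i^\star(\mathrm y_i)+\tfrac{\beta_i}{2}(\mathrm y_i-\mathrm y_{\rm ref})^2$).
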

\begin{proof}
First, we show that the closed-loop experiments conducted by the algorithm indeed converge. The plant $\Sigma_i$ is assumed to be passive with respect to any steady-state input-output pair it possesses. Moreover, the static controller $u = \beta_i(y-y_{\rm ref})$ is output-strictly passive with respect to any steady-state input-output pair it possesses. Thus it is enough to show that the closed-loop system has a steady-state, which will prove convergence as this is a feedback connection of a passive system with an output-strictly passive system.
Indeed, a steady-state input-output pair $(\mathrm{u_i,y_i})$ of the system must satisfy $\mathrm u_i \in k_i^{-1}(\mathrm y_i)$ and $\mathrm u_i = -\beta_i(\mathrm y_i - \mathrm y_{\rm ref})$, or $-\beta_i(\mathrm{y_i-y_{ref}}) \in k^{-1}(\mathrm y_i)$. This is equivalent to
\[
0\in k^{-1}(\mathrm y_i) + \beta_i (\mathrm y_i - \mathrm y_{\rm ref}) = \nabla\bigg(K^\star_i(\mathrm y_i) + \frac{\beta_i}{2}(\mathrm y_i - \mathrm y_{\rm ref})^2\bigg),
\]
so $\mathrm y_i$ exists and is equal to the minimizer of $K^\star_i(\mathrm y_i) + \frac{\beta_i}{2}(\mathrm y_i - \mathrm y_{\rm ref})^2$. This shows that the closed-loop experiments converge. it remains to show that it outputs an upper-bound on $m_i$.

Using Remark \ref{rem.estimate_m_2pt}, it is enough to show that $\mathrm y_{i,0} \in [\underline{\mathrm y_{i,0}},\overline{\mathrm y_{i,0}}]$ and $\mathrm u_i^\star \in [\underline{\mathrm u_i^\star},\overline{\mathrm u_i^\star}]$. To do so, we first claim that $U_1 \le \mathrm u_i^\star \le U_3$ and $Y_1 \le \mathrm y_{i,0} \le Y_3$. 
We first show that $Y_1 \le \mathrm y_{i,0}$, by showing that $\mathrm y_{i,-} \le \mathrm y_{i,0}$. Indeed, because $k_i$ is a monotone map, this is equivalent to saying that $\mathrm u_{i,-} \le 0$. By the structure of the second experiment, the steady-state input is close to $-1$, and in particular smaller than $0$. The inequality $\mathrm y_{i,0} \le \mathrm y_{i,+}$ is proved similarly. We note that because $\mathrm u_{i,-} \approx -1$ and $\mathrm u_{i,+} \approx 1$, we have $\mathrm u_{i,-} \le \mathrm u_{i,+}$ and thus $\mathrm y_{i,-} \le \mathrm y_{i,+}$. as $k_i$ is monotone.

Next, we prove that $U_1 \le \mathrm u_i^\star$. By monotonicity of $k_i$, this is equivalent to $Y_1 \le \mathrm y_i^\star$. Because $\mathrm y_{i,-} \le \mathrm y_{i,+}$, it is enough to show that either $\mathrm y_{i,-} \le \mathrm y_i^\star$ or $\mathrm y_{i,2} \le \mathrm y_i^\star$. If the first case is true, then the proof is complete. Otherwise, $\mathrm y_{i,-} > \mathrm y_i^\star$, so the algorithm finds $\mathrm y_{i,2}$ by running the closed-loop system in Fig.~\ref{fig.experiments} with $\beta_i = 1$ and $\mathrm y_\mathrm{ref} \ll y_i^\star$. The increased coupling strength implies that the steady-state output $\mathrm y_{i,2}$ should be close to $\mathrm y_{\mathrm{ref}}$, which is much smaller than $\mathrm y_i^\star$. Thus $\mathrm y_{i,2} < \mathrm y_i^\star$, which shows that $Y_1 \le \mathrm y_1^\star$, or equivalently $U_1 \le \mathrm u_1^\star$. The proof that $\mathrm u_1^\star \le U_3$ is similar. This completes the proof.
\end{proof}
\begin{rem}
Algorithm \ref{alg.MEIPExperimentAndEstimate} demands us to run a certain system with parameter $\beta_i$ small and wait until convergence. In practice, determining when the system has converged can be done by measuring the output or its derivative. Alternatively, one can use a Lypaunov-based approach \cite{Sharf2019f}. One could also use engineering intuition and simulations to conclude an estimate on the termination time of the experiments. The parameter $\beta_i$ can be chosen in a similar manner.
\end{rem}
\textcolor{black}{
\begin{rem} \label{rem.BetterExperiments}
One can run more experiments to give tighter estimates of $m_i$. Indeed, by construction, take a collection $\{(U_{i,k},Y_{i,k})\}_{k=0}^{r+1}$ of steady-state input-output pairs, and use the monotonicity of the steady-state relation to sort them in such a way that $Y_{i,0} \le \cdots \le Y_{i,r} \le y^\star_i \le Y_{i,r+1}$, $U_{i,0}\le 0 \le U_{i,1} \le \cdots \le U_{i,r+1}$. We would like to use Proposition \ref{prop.estimate_m}, but we do not know the exact values of $y_{i,r},u_{i,0}$. Instead, we again use the monotonicity of the steady-state relation and bound $U_{i,r}\le u_i^\star \le U_{i,r+1}$ and $Y_{i,0}\le y_{i,0} \le Y_{i,1}$. Thus:\small
\begin{align}\label{eq.BetterEstimate}
m_i \le M_i \triangleq \sum_{k=1}^{r} U_{i,k}(Y_{i,k}-Y_{i,k-1}) + U_{i,r+1}(\mathrm y^\star_i - Y_{i,r}).
\end{align}\normalsize
We claim that $M_i$ is a relatively tight estimate of $m_i$. Indeed:
\end{rem}
\begin{prop}
Let $\Delta_k = Y_{i,k}-Y_{i,k-1} \ge 0$, and assume that $k_i^{-1}$ is an $L$-Lipschitz function. Then $|M_i - m_i| \le C\max\{L,1\} \max_{k} \Delta_k$, for a constant $C=C(\mathrm y_{i,0},\mathrm u_i^\star,\mathrm y_i^\star)$.
\end{prop}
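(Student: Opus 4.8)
The plan is to compare the estimate $M_i$ term-by-term with the exact quantity $m_i$, using the identity (from Fenchel duality / the proof of Proposition \ref{prop.estimate_m}) that if $(\mathrm u_{i,k},\mathrm y_{i,k})$ are steady-state pairs sorted monotonically, then
\[
m_i = K_i^\star(\mathrm y_i^\star)+K_i(0) = \sum_{k=1}^{r} \mathrm u_{i,k}(\mathrm y_{i,k}-\mathrm y_{i,k-1}) + \mathrm u_i^\star(\mathrm y_i^\star - \mathrm y_{i,r})
\]
whenever the intermediate pairs are exactly $(\mathrm u_{i,k},\mathrm y_{i,k})$ with $\mathrm y_{i,0}$ the output at zero input. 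Actually the cleanest route: write $m_i = K_i^\star(\mathrm y_i^\star) - K_i^\star(\mathrm y_{i,0})$ and telescope through the points $Y_{i,0}\le\cdots\le Y_{i,r}\le \mathrm y_i^\star$, so that
\[
m_i = \sum_{k=1}^{r}\bigl(K_i^\star(Y_{i,k})-K_i^\star(Y_{i,k-1})\bigr) + \bigl(K_i^\star(\mathrm y_i^\star)-K_i^\star(Y_{i,r})\bigr).
\]
Each bracket $K_i^\star(b)-K_i^\star(a)$ with $a\le b$ satisfies, by convexity and $\partial K_i^\star = k_i^{-1}$,
\[
k_i^{-1}(a)(b-a)\ \le\ K_i^\star(b)-K_i^\star(a)\ \le\ k_i^{-1}(b)(b-a),
\]
and $M_i$ is built by using the right endpoint $k_i^{-1}(b)=U_{i,k}$ for the $k$-th bracket (and $U_{i,r+1}$ for the last). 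Hence $M_i - m_i \ge 0$ automatically, and
\[
M_i - m_i \ \le\ \sum_{k=1}^{r}\bigl(k_i^{-1}(Y_{i,k})-k_i^{-1}(Y_{i,k-1})\bigr)\Delta_k \ +\ \bigl(k_i^{-1}(\mathrm y_i^\star)-k_i^{-1}(Y_{i,r})\bigr)(\mathrm y_i^\star-Y_{i,r}).
\]
(The one subtlety: $M_i$ actually uses $U_{i,r+1}$ rather than $k_i^{-1}(\mathrm y_i^\star)$ on the last term, but $U_{i,r}\le \mathrm u_i^\star\le U_{i,r+1}$ forces $k_i^{-1}(\mathrm y_i^\star)$ — or rather the relevant selection $\mathrm u_i^\star$ — between these, so one replaces $k_i^{-1}(\mathrm y_i^\star)(\mathrm y_i^\star - Y_{i,r})$ by the same bound up to $(U_{i,r+1}-U_{i,r})(\mathrm y_i^\star-Y_{i,r})$, which is again $O(L\Delta)$ if $\mathrm y_i^\star - Y_{i,r}\le \max_k\Delta_k$, or handled as one extra term.)

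Next I would invoke the $L$-Lipschitz hypothesis on $k_i^{-1}$: each difference $k_i^{-1}(Y_{i,k})-k_i^{-1}(Y_{i,k-1}) \le L\,\Delta_k \le L\max_j\Delta_j$, and likewise $k_i^{-1}(\mathrm y_i^\star)-k_i^{-1}(Y_{i,r})\le L(\mathrm y_i^\star-Y_{i,r})\le L\max_j\Delta_j$. Substituting,
\[
M_i - m_i \ \le\ L\max_j\Delta_j \Bigl(\sum_{k=1}^r \Delta_k + (\mathrm y_i^\star - Y_{i,r})\Bigr)\ =\ L\max_j\Delta_j\,(\mathrm y_i^\star - Y_{i,0}).
\]
Since $Y_{i,0}=\mathrm y_{i,0}$ is the steady-state output at zero input, $\mathrm y_i^\star - Y_{i,0}$ is a constant depending only on $\mathrm y_{i,0}$ and $\mathrm y_i^\star$; absorbing the $\max\{L,1\}$ to cover the residual $(U_{i,r+1}-U_{i,r})$-type term (bounded by $1\cdot$ length when $L<1$, by $L\cdot$ length otherwise) gives $|M_i-m_i|\le C\max\{L,1\}\max_k\Delta_k$ with $C = C(\mathrm y_{i,0},\mathrm u_i^\star,\mathrm y_i^\star)$, e.g. $C = |\mathrm y_i^\star - \mathrm y_{i,0}| + |\mathrm y_i^\star - \mathrm y_{i,0}|$ or a similar explicit expression. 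Note $M_i\ge m_i$ already shows the absolute value is just $M_i - m_i$.

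The main obstacle is bookkeeping at the endpoints rather than anything deep: the algorithm does not observe $k_i^{-1}(\mathrm y_i^\star)$ and $\mathrm y_{i,0}$ exactly but only brackets them between consecutive sorted samples, so the last telescoping term and the $\mathrm y_{i,0}$ term must be estimated using $U_{i,r}\le \mathrm u_i^\star\le U_{i,r+1}$ and $Y_{i,0}\le \mathrm y_{i,0}\le Y_{i,1}$ together with Lipschitz continuity, and one must check these extra slack terms are still $O(\max\{L,1\}\max_k\Delta_k)$. Everything else is the standard convexity sandwich $k_i^{-1}(a)(b-a)\le K_i^\star(b)-K_i^\star(a)\le k_i^{-1}(b)(b-a)$ applied interval by interval.
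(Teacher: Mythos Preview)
Your approach is essentially the paper's: write $m_i = K_i^\star(\mathrm y_i^\star)-K_i^\star(\mathrm y_{i,0}) = \int_{\mathrm y_{i,0}}^{\mathrm y_i^\star} k_i^{-1}(s)\,ds$, split at the sample points $Y_{i,k}$, bound each piece against the corresponding term of $M_i$ by $L\Delta_k^2$ via the Lipschitz hypothesis, and telescope $\sum\Delta_k$ back to $\mathrm y_i^\star-\mathrm y_{i,0}$. The paper uses the integral form directly where you use the convexity sandwich $k_i^{-1}(a)(b-a)\le K_i^\star(b)-K_i^\star(a)\le k_i^{-1}(b)(b-a)$, but these are the same thing.

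One slip to fix: in your main computation you write ``$Y_{i,0}=\mathrm y_{i,0}$ is the steady-state output at zero input,'' but in the setup of Remark~\ref{rem.BetterExperiments} one only has the bracketing $Y_{i,0}\le \mathrm y_{i,0}\le Y_{i,1}$ (as you yourself acknowledge in your final paragraph). Consequently your telescoped length is $\mathrm y_i^\star-Y_{i,0}$, not $\mathrm y_i^\star-\mathrm y_{i,0}$, and $Y_{i,0}$ is a sample point, not one of the allowed constants. The paper handles exactly this by isolating the overshoot $|U_{i,1}(\mathrm y_{i,0}-Y_{i,0})|\le \mathrm u_i^\star\,\Delta_1$ as a separate endpoint correction; together with the top-end correction $(U_{i,r+1}-\mathrm u_i^\star)(\mathrm y_i^\star-Y_{i,r})\le L\Delta_{r+1}^2$ that you already identified, this is precisely the ``bookkeeping at the endpoints'' you flagged, and it is where the $\max\{L,1\}$ and the dependence of $C$ on $\mathrm u_i^\star$ enter. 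Once you carry those two terms through explicitly, your argument is complete and matches the paper's.
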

\begin{proof}
First, $m_i = K_i^\star(\mathrm y_i^\star) - K_i^\star(\mathrm y_{i,0})=\int_{\mathrm y_{i,0}}^{\mathrm y_i^{\star}} k_i^{-1}(s)ds$. Thus, it is enough to bound each of the following terms:
\begin{enumerate}
\item[i)] $|\int_{Y_{i,k-1}}^{Y_{i,k}} k^{-1}_i(s)ds - U_{i,k}(Y_{i,k}-Y_{i,k-1})|, k=2,\cdots,r$.
\item[ii)] $|\int_{Y_{i,r}}^{\mathrm y^\star_i} k^{-1}_i(s)ds - u_i^\star(\mathrm y^\star_i-Y_{i,r})|$.
\item[iii)] $|\int_{\mathrm y_{i,0}}^{Y_{i,1}} k^{-1}_i(s)ds - U_{i,1}(Y_{i,1}-\mathrm y_{i,0})|$.
\item[iv)] $|(U_{i,r+1}-u_i^\star) (\mathrm y^\star_i-Y_{i,r})|$.
\item[v)] $|U_{i,r+1} (Y_{i,r+1} -\mathrm  y^\star_i)|$
\item[vi)] $|U_{i,1}(\mathrm y_{i,0} - Y_{i,0})|$
\end{enumerate}
The first term can be bounded by:\small
\begin{align*}
\int_{Y_{i,k-1}}^{Y_{i,k}} |k^{-1}(s) - k^{-1}(Y_{i,k})|ds  \le L \int_{Y_{i,k-1}}^{Y_{i,k}}|s-Y_{i,k}|ds = L\Delta_k^2.
\end{align*}\normalsize
Similarly, the second term is bounded by $L(y_i^\star - Y_{i,r})^2$ and the third term is bounded by $L(Y_{i,1}-\mathrm y_{i,0})^2$. The sum of the three terms is thus bounded by:\small
\begin{align*}
&L\left[\sum_{k=2}^r(Y_{i,k}-Y_{i,k-1})^2 + (y_i^\star - Y_{i,r})^2 + (Y_{i,1}-\mathrm y_{i,0})^2\right]\\\le&
L\max(\Delta_k)\left[\sum_{k=2}^r(Y_{i,k}-Y_{i,k-1}) + (y_i^\star - Y_{i,r}) + (Y_{i,1}-\mathrm y_{i,0})\right]\\\le&
L\max(\Delta_k) (\mathrm y_i^\star - \mathrm y_{i,0}),
\end{align*}\normalsize
where we use $y_i^\star - Y_{i,r} \le \Delta_{r+1}$ and $Y_{i,1}-\mathrm y_{i,0} \le \Delta_1$. Similarly, the fourth term is bounded by $L\Delta_{r+1}^2$, the fifth term is bounded by $U_{i,r+1}\Delta_{r+1}$ and the last term is bounded by $U_{i,1}\Delta_{0}$. This completes the proof, as $0\le U_{i,1}\le u_i^\star=k^{-1}_i (\mathrm y_i^\star)$, and $U_{i,r+1} = k^{-1}(Y_{i,r+1}) \le k^{-1}(\mathrm y_i^\star) + L\Delta_{r+1}$.
\end{proof}
\begin{rem}
A natural question that arises is how to conduct experiments assuring that $\max_k \Delta_k$ is small. If we run the system in Fig.~\ref{fig.experiments} with $\beta_i \gg 1$, then the steady-state output would be very close to $y_{\rm ref}$. Thus, if we run $r$ additional experiments with $\beta_i \gg 1$ and references $y_{\rm ref,1}\le y_{\rm ref,2}\le \cdots \le y_{\rm ref,r}$ (i.e., a total of $r+3$ experiments), then $\max \Delta_k = O(\max_k |y_{\rm ref,k+1}-y_{\rm ref,k}|)$. Thus, choosing $y_{\rm ref,k}$ as $r$ equally spaced points in an appropriate interval would give $\max_k \Delta_k = O(1/r)$, and a uniformly random choice gives $\max_k \Delta_k = O(\log r / r)$ with high probability \cite{Holst1980}.
\end{rem}
}
We saw that $m_i$ can be bounded using three experiments for general MEIP agents, \textcolor{black}{and that additional measurements can be used to provide more accurate estimates of $m_i$. Other recent works about data-driven control focus on the case of LTI systems, using them as a base to build toward a solution for nonlinear systems \cite{Recht2019,Coulson2018,DePersis2019}. If we restrict ourselves to this case, we can exactly compute $m_i$ from a single experiment:}
\begin{prop}\label{prop.LinearSampling}
Suppose that the agent $\Sigma_i$ is known to be both MEIP and LTI. Let $(\mathrm{\tilde u,\tilde y})$ be any steady-state input-output pair for which either $\tilde{\mathrm u} \neq 0$ or $\tilde{\mathrm y} = 0.$\footnote{e.g., by running the system in Fig.~\ref{fig.experiments} with some $\beta >0$ and $\mathrm y_{\mathrm{ref}} \ne 0$} Then $m_i = \frac{\mathrm (\mathrm y_i^\star)^2\mathrm{\tilde u}}{2\mathrm{\tilde{y}}}$. Thus $m_i$ can be exactly calculated using a single experiment. 
\end{prop}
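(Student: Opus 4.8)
The plan is to exploit how rigid steady states of LTI systems are: an LTI MEIP agent has a steady-state relation that is necessarily a maximally monotone line through the origin, so its integral function is a pure quadratic, and $m_i$ collapses to a closed-form expression in the DC gain alone --- which a single nonzero steady-state measurement reveals. To set up, recall the identity $m_i = K_i^\star(\mathrm y_i^\star) + K_i(0)$ from Remark~\ref{rem.ConstructiveGain} (equivalently, from the proof of Proposition~\ref{prop.estimate_m}), and note that it is insensitive to adding a constant to $K_i$, since that constant is subtracted back off in the Legendre transform $K_i^\star$. We may therefore normalize $K_i(0) = 0$ and work with $m_i = K_i^\star(\mathrm y_i^\star)$.

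The next step is to pin down the structure of $k_i$. Writing $\Sigma_i$ in state-space form $\dot x = Ax + Bu$, $y = Cx + Du$, every steady-state pair $(\mathrm u,\mathrm y)$ arises from some $x$ with $Ax = -B\mathrm u$ via $\mathrm y = Cx + D\mathrm u$; since $\{(x,\mathrm u) : Ax + B\mathrm u = 0\}$ is a linear subspace and $(x,\mathrm u)\mapsto(\mathrm u, Cx+D\mathrm u)$ is linear, $k_i$ is a line through the origin, $k_i = \{(\mathrm u, c\,\mathrm u) : \mathrm u \in \R\}$ with $c = G_i(0)$ the DC gain (in the non-invertible case it can instead be the vertical line $\{(0,\mathrm y):\mathrm y\in\R\}$). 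Maximal monotonicity, which is part of MEIP, is exactly what forces $c \ge 0$. With the normalization $K_i(0) = 0$ and $\partial K_i = k_i$, integrating yields $K_i(\mathrm u) = \tfrac{c}{2}\mathrm u^2$, and a direct computation of the Legendre transform gives $K_i^\star(\mathrm y) = \tfrac{1}{2c}\mathrm y^2$ for $0 < c < \infty$; hence $m_i = \tfrac{(\mathrm y_i^\star)^2}{2c}$.

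Finally, a single steady-state measurement $(\tilde{\mathrm u},\tilde{\mathrm y})$ --- for instance the closed loop of Fig.~\ref{fig.experiments} run with $\beta > 0$ and $\mathrm y_{\Refc} \neq 0$, which converges because it is a feedback interconnection of a passive system with an output-strictly passive one --- satisfies $\tilde{\mathrm y} = c\,\tilde{\mathrm u}$, and the hypothesis ``$\tilde{\mathrm u} \neq 0$ or $\tilde{\mathrm y} = 0$'' (together with the choice $\mathrm y_{\Refc}\ne0$) guarantees that this one sample determines $c$. Substituting $c = \tilde{\mathrm y}/\tilde{\mathrm u}$ into the previous formula gives $m_i = \frac{(\mathrm y_i^\star)^2\,\tilde{\mathrm u}}{2\,\tilde{\mathrm y}}$, as claimed. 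I expect the only delicate part to be the characterization of $k_i$: both the clean argument that the steady-state relation of an LTI MEIP system is a line through the origin with nonnegative slope, and the bookkeeping of the degenerate cases ($c = 0$, or $k_i$ vertical / integrator-like), which must be checked to either reduce to the correct value ($0$ or $+\infty$) or be ruled out by the standing assumptions. The Legendre-transform computation and the final substitution are routine.
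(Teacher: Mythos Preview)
Your proposal is correct and follows essentially the same approach as the paper: both identify that an LTI MEIP agent has a linear steady-state relation through the origin, integrate to a pure quadratic, and read off the slope from one nonzero steady-state sample. The only cosmetic differences are that the paper parameterizes via the inverse relation $k_i^{-1}(\mathrm y)=s\mathrm y$ (with $s=1/c$ in your notation) and cites external references for the linearity/Hurwitz property, whereas you give a self-contained state-space argument; your explicit flagging of the degenerate cases ($c=0$ or vertical $k_i$) is slightly more careful than the paper's treatment.
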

\begin{proof}
We know that $k$ is a linear function, and the system state matrix is Hurwitz \cite{Hines2011,Sharf2018a}. Moreover, unless the transfer function of the agent is $0$, $k^{-1}$ is a linear function $k^{-1}(\mathrm y) = s\mathrm y$ for some $s>0$ \cite{Sharf2018b}. Thus $K^\star(\mathrm y) = \frac{s}{2}\mathrm y^2$. 
Now, $k^{-1}(0) = s\cdot 0 = 0$, so $(0,0)$ is a steady-state input-output pair, meaning that $\mathrm y_{i,0} = 0$. Moreover, we know that $\mathrm{\tilde{u}} = s \mathrm{\tilde{y}}$, and not both are zero, so we conclude that $\mathrm{\tilde{y}} \neq 0$, and that $s = \frac{\mathrm{\tilde{u}}}{\mathrm{\tilde{y}}}$.
Thus, $K_i^\star(\mathrm y_{i,0}) = K_i^\star(0) = 0$ and $K_i^\star(\mathrm y_i^\star) = \frac{s}{2}(\mathrm y_i^\star)^2$. This completes the proof, as $m_i = K_i^\star(\mathrm y_i^\star) - K_i^\star(\mathrm y_{i,0})$.
\end{proof}

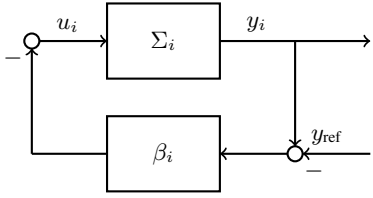
\begin{figure}
\begin{center}
\begin{tikzpicture}
\draw[thick, fill=white!10] (0,0) rectangle (1.5,1);
\draw[thick] (0,-1.5) rectangle (1.5,-.5);
\draw[thick,->] (1.5,0.5) -- (3.5,0.5);
\node[anchor=south east] at (2.25,0.5) {\small $y_{i}$};
\node[anchor=south east] at (-.25,0.5) {\small $u_{i}$};
\node[anchor=south east] at (3.25,-1) {\small $y_{\text{ref}}$};

\draw[thick,-] (0,-1) -- (-1,-1);

\draw[thick] (2.5,-1) circle(0.1);
\draw[thick] (-1,0.5) circle(0.1);
\node[anchor=north east] at (-1,0.5) {\small $-$};
\node[anchor=north west] at (2.5,-1) {\small $-$};

\draw[thick,->] (3.5,-1) -- (2.6,-1);
\draw[thick,->] (2.5,0.5)--(2.5,-0.9);
\draw[thick,->] (-1,-1) -- (-1,0.4);
\draw[thick,->] (-0.9,0.5) -- (0,0.5);
\draw[thick,->] (2.4,-1) -- (1.5,-1);

\node at (0.75,0.5) {\small $\Sigma_{i}$};
\node at (0.75,-1) {\small $\beta_i$};

\end{tikzpicture}
\end{center}
\caption{Experimental setup of the closed-loop experiment for estimating $m_i$ as used in Algorithm~\ref{alg.MEIPExperimentAndEstimate}.}
\label{fig.experiments}
\end{figure}

The chapter concludes with Algorithm \ref{alg.SingleGain} for solving the practical formation control problem using the single-gain amplification scheme, which is applied in Section \ref{sec.Simulations}.

\begin{algorithm} 
\caption{Synthesis Procedure for Practical Formation Control}
\label{alg.SingleGain}
\SetAlgoLined  
\small Choose some output-strictly MEIP controllers $\Pi_e$ such that the integral function $\Gamma$ has a single minimizer $\upzeta^\star$ when restricted to the set $\IM(\E^T)$\; 
Choose some $\mathrm y^\star \in \mathbb{R}^n$ such that $\E^T\mathrm y^\star=\upzeta^\star$\;
\For{$i = 1,...,n$}
{
Run Algorithm \ref{alg.MEIPExperimentAndEstimate}. Let $\mathbb{m}_i$ be the output\;
}
Let $\mathbb{m} = \sum_{i=1}^{n} \mathbb{m}_i$\;
Compute $M = \min\{\zeta\in\IM(\E^T):\ \|\zeta-\upzeta^\star\| = \varepsilon\}$\;
Compute $\alpha = \mathbb{m}/M$\;
\textbf{return} the controllers $\{\alpha\Pi_e\}_{e\in\EE}$;
\end{algorithm}

\begin{rem}
Step 1 of the algorithm allows almost complete freedom of choice for the controllers. One possible choice are the static controllers $\mu_e = \zeta_e - \upzeta_e^\star$. Moreover, if $\Pi_e$ is any MEIP controller for each $e\in\EE$, and $\gamma_e(\zeta_e) = 0$ has a unique solution for each $e\in\EE$, then the ``formation reconfiguration" scheme from \cite{Sharf2017} suggests a way to find the required controllers using mild augmentation.
\end{rem}
\begin{rem}
The algorithm allows one to choose any vector $\mathrm y^\star$ such that $\E^T \mathrm y^\star = \upzeta^\star$. All possible choices lead to some gain $\alpha$ which assures a solution of the practical formation control problem, but some choices yield better results (i.e., smaller gains) than others. The optimal $\mathrm y^\star$, minimizing the estimate $\alpha$, can be found as the minimizer of the problem $\min \{K^\star(\mathrm y):\ \E^T\mathrm y = \upzeta^\star\}$, which we cannot compute using data alone. One can use physical intuition to choose a vector $\mathrm y^\star$ which is relatively close to the actual minimizer, but the algorithm is still valid no matter which $\mathrm y^\star$ is chosen.
\end{rem}

\vspace{-10pt}
\section{Iterative Practical Formation Control: Applying Different Gains on Different Edges} \label{sec.multigain}
Let us revisit Fig.~\ref{fig.ClosedLoopGain} and let $A = \diag(\{a_e\}_{e\in\EE}) $ with positive, but distinct entries $a_e$. These additional degrees of freedom can be used, for example, to reduce the conservatism and retrieve a smaller norm of the adjustable gain vector $a$ while still solving the practical formation control problem. It follows directly from Theorem~\ref{thm.practical_formation_control_coupling} that there always exists a bounded vector $a$ solving the practical formation control problem. However, the question remains how $a$ can be chosen based on sampled input-output data and passivity properties.

Our idea here is to probe our diffusively coupled system for given gains $a_e$ and adjust the gains according to the resulting steady-state output. By iteratively performing experiments in this way, we strive to find controller gains that solve the practical formation control problem. This approach is tightly connected to iterative learning control, where one iteratively applies and adjusts a controller to improve the performance of the closed-loop for a repetitive task \cite{Bristow2006}. Our approach here is based on passivity and network optimization with only requiring the possibility to perform iterative experiments.

One natural idea in this direction is to define a cost function that penalizes the distance of the resulting steady-state to the desired formation control goal and then apply a gradient descent approach, adjusting the gain $a$ for each experiment. 
However, to obtain the gradient of $\| \E^T \mathrm y(a) - \upzeta^\star \|^2$  with respect to the vector $a$, where $ \mathrm y (a)$ is the steady-state output of $(\G,\Sigma,\Pi,a)$, one requires knowledge of the inverse relations $k_i^{-1}$ for all $i=1, \dots, n$. With no model of the agents available, a direct gradient descent approach is hence infeasible. 
We thus look for a simple iterative multi-gain control scheme without knowledge on the exact steepest descent direction. 

We start off with an arbitrarily chosen gain vector $a_0$ with only positive entries. Due to Assumption \ref{Assump}, the closed-loop converges to a steady state. According to the measured state, the idea is then to iteratively perform experiments and update the gain vector until we reach our control goal, i.e., practical formation control. The update formula can be summarized by
\begin{align}
a_e^{(j+1)} = a_e^{(j)} + h v_e , \quad e\in \EE,
\label{eq:multigain1}
\end{align}
where $h>0$ is the step size and $v$, with entries $v_e$, $e=1, \dots, |\EE|$, is the update direction. \textcolor{black}{In practice, the update can either be instantaneous or gradual, e.g. using linear interpolation or higher-order splines.}
We denote the $e$-th entry of $\E^T \mathrm y$ as $\mathrm f_e$ and choose $v$ in each iteration such that 
\begin{align}
v_e = \begin{cases}\frac{\mathrm f_e -\upzeta^\star_e}{\gamma_e(\mathrm f_e)} & \quad \gamma_e(\mathrm f_e) \neq 0 \\ 0 & \quad \text{otherwise} \end{cases}, \quad e\in\mathbb{E}.
\label{eq:multigain2}
\end{align}
If $k^{-1}$ and $\gamma$ are differentiable functions, then we claim that $F(a) = ||\E^T \mathrm y (a) - \upzeta^\star||^2$ decreases in the direction of $v$, i.e., $v^T \nabla F(a)< 0$.
This leads to a multi-gain distributed control scheme, using \eqref{eq:multigain1} with \eqref{eq:multigain2}, summarized in Algorithm~\ref{alg.MultiGain2}. This multi-gain distributed control scheme is guaranteed to solve the practical formation problem after a finite number of iterations, and is summarized in the following theorem.

\begin{algorithm}[!h]
 \caption{Practical Formation control with derivative-free optimization} 
\label{alg.MultiGain2}
\SetAlgoLined
\small Initialize $a^{(0)}$, e.g. with $\mathbb{1}_{| \EE |}$ \;
Choose step size $h$ and set $j=0$\;
\While{$ F(a) = \|\E^T \mathrm y (a) - \upzeta^\star\|^2>\varepsilon$}{
Apply $a^{(j)}$ to the closed loop \;
Compute $v_e = \begin{cases}\frac{\mathrm f_e -\upzeta^\star_e}{\gamma_e(\mathrm f_e)} &\gamma(\mathrm f_e) \neq 0 \\ 0 & \gamma(\mathrm f_e) = 0\end{cases} \; \forall e$ \; 
Update $a_{e}^{(j+1)} = a_e^{(j)} + h v_e$, $j = j+1$ \;
}
\textbf{return} $a$ 
\end{algorithm}

\begin{thm} \label{thm.DerivativeFreeConvergence}
Suppose that the functions $k^{-1}$, $\gamma$ are differentiable, and that there exists an agent $i_0\in\V$ such that $\frac{dk^{-1}_{i_0}}{d\mathrm y_{i_0}} > 0$ for any point $\mathrm y_{i_0}\in \mathbb{R}$. Moreover, assume that $\frac{d\gamma_e}{d\mathrm \zeta_e} > 0$ for any $e\in\EE, \zeta_e\in \mathbb{R}$. 
Then $v^T\nabla F(a) \le 0,$ with $v, F$ as defined in Algorithm \ref{alg.MultiGain2} (and equality if and only if $\E^T\mathrm y (a) = \upzeta^\star$).
Furthermore, if the step size $h>0$ is small enough, then the Algorithm \ref{alg.MultiGain2} halts after finite time, providing a gain vector that solves the practical formation control problem.
\end{thm}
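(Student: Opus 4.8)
The plan is to route everything through the network-optimization picture. Following Lemma~\ref{lem.coupling_OPP}, but now cascading each controller $\Pi_e$ with its own gain $a_e$ (the cascade is still output-strictly MEIP, with integral function $a_e\Gamma_e$), Theorem~\ref{thm.BasicConvergenceMEIP} shows $(\G,\Sigma,\Pi,a)$ converges and its steady-state output $\mathrm y(a)$ minimizes $K^\star(\mathrm y)+\sum_e a_e\Gamma_e\big((\E^T\mathrm y)_e\big)$; equivalently, with $A=\diag(a)$, it solves $k^{-1}(\mathrm y)+\E A\,\gamma(\E^T\mathrm y)=0$. I would then differentiate this identity in $a$. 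Writing $D_k=\diag(dk_i^{-1}/d\mathrm y_i)$ (diagonal, $\succeq 0$, with strictly positive $i_0$-th entry by hypothesis), $D_\gamma=\diag(d\gamma_e/d\zeta_e)\succ 0$, and $L_a:=D_k+\E A D_\gamma\E^T$, I would first check $L_a\succ 0$: the matrix $\E A D_\gamma\E^T$ is a weighted Laplacian of the (connected) graph $\G$ with positive weights $a_e\,d\gamma_e/d\zeta_e$, so its kernel is $\mathrm{span}(\mathbb{1})$, and any kernel vector of $L_a$ must be both a multiple of $\mathbb{1}$ and vanish at $i_0$, hence zero. With $L_a$ invertible the minimizer $\mathrm y(a)$ is unique, and the implicit function theorem gives $\tfrac{\partial \mathrm y}{\partial a}=-L_a^{-1}\E\,\diag\big(\gamma(\E^T\mathrm y(a))\big)$; with $\mathrm f:=\E^T\mathrm y(a)$, the chain rule applied to $F(a)=\|\E^T\mathrm y(a)-\upzeta^\star\|^2$ yields $\nabla F(a)=2\big(\E^T\tfrac{\partial \mathrm y}{\partial a}\big)^{T}(\mathrm f-\upzeta^\star)=-2\,\diag\big(\gamma(\mathrm f)\big)\,\E^T L_a^{-1}\E\,(\mathrm f-\upzeta^\star)$.

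Next I would plug the update direction $v$ into this formula. Recall the controllers are chosen (e.g. the static choice $\mu_e=\zeta_e-\upzeta^\star_e$) so that each $\Gamma_e$ is minimized at $\upzeta^\star_e$, i.e. $\gamma_e(\upzeta^\star_e)=0$; by strict monotonicity of $\gamma_e$ this means $\gamma_e(\mathrm f_e)=0\iff \mathrm f_e=\upzeta^\star_e$. Hence $\diag(\gamma(\mathrm f))\,v=\mathrm f-\upzeta^\star$ coordinate-wise — on edges with $\gamma_e(\mathrm f_e)\neq 0$ the entry is $\gamma_e(\mathrm f_e)\cdot\frac{\mathrm f_e-\upzeta^\star_e}{\gamma_e(\mathrm f_e)}$, on the others it is $0=\mathrm f_e-\upzeta^\star_e$ — so $v^T\nabla F(a)=-2\,(\mathrm f-\upzeta^\star)^T\E^T L_a^{-1}\E\,(\mathrm f-\upzeta^\star)=-2\,\big(\E(\mathrm f-\upzeta^\star)\big)^{T}L_a^{-1}\big(\E(\mathrm f-\upzeta^\star)\big)\le 0$, with equality iff $\E(\mathrm f-\upzeta^\star)=0$. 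Since $\mathrm f-\upzeta^\star\in\IM(\E^T)=(\ker\E)^\perp$, equality forces $\mathrm f=\upzeta^\star$, i.e. $\E^T\mathrm y(a)=\upzeta^\star$. That settles the first assertion.

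For finite termination I would first note that the same sign analysis gives $v_e\ge 0$ for every $e$ (positive when $\mathrm f_e\neq\upzeta^\star_e$, zero otherwise), so each coordinate $a_e^{(j)}$ is non-decreasing and hence either converges or diverges to $+\infty$; for $h$ small, a descent estimate also makes $F(a^{(j)})$ non-increasing, which is the role of the step-size hypothesis. Suppose the algorithm never halts, so $F(a^{(j)})>\varepsilon$ for all $j$. Normalizing $\min K^\star=\min\Gamma_e=0$ and picking $\mathrm y^\star$ with $\E^T\mathrm y^\star=\upzeta^\star$, optimality of $\mathrm y(a^{(j)})$ gives $\sum_e a_e^{(j)}\Gamma_e(\mathrm f_e^{(j)})\le K^\star(\mathrm y^\star)=:m_0$, hence $a_e^{(j)}\Gamma_e(\mathrm f_e^{(j)})\le m_0$ for every $e$. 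Since each $\Gamma_e$ is strictly convex and coercive, this keeps every $\mathrm f_e^{(j)}$ in a fixed compact set, and: on edges with $a_e^{(j)}\to\infty$ it forces $\Gamma_e(\mathrm f_e^{(j)})\to 0$, so $\mathrm f_e^{(j)}\to\upzeta^\star_e$; on edges with $a_e^{(j)}$ bounded, $\sum_j h v_e^{(j)}<\infty$ gives $v_e^{(j)}\to 0$, and since on that compact range $v_e$ is either $0$ or bounded away from $0$ (the ratio $\frac{\mathrm f_e-\upzeta^\star_e}{\gamma_e(\mathrm f_e)}$ extends continuously to $\mathrm f_e=\upzeta^\star_e$ with the positive value $1/\gamma_e'(\upzeta^\star_e)$), necessarily $v_e^{(j)}=0$, i.e. $\mathrm f_e^{(j)}=\upzeta^\star_e$, for all large $j$. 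In both cases $\mathrm f^{(j)}\to\upzeta^\star$, so $F(a^{(j)})\to 0$, contradicting $F(a^{(j)})>\varepsilon$. Hence the algorithm halts after finitely many steps, and the while-condition then guarantees $F(a)\le\varepsilon$, so the returned $a$ solves the practical formation control problem.

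The hard part will be exactly the unbounded-gain case in the termination argument: a naive sufficient-decrease estimate is not uniform there, because the curvature of $F$ (and the constants in the descent lemma) need not stay bounded as $a_e\to\infty$, so the proof has to be carried instead by the network-optimization value bound $\sum_e a_e\Gamma_e(\mathrm f_e)\le m_0$ — which both localizes the relative outputs to a fixed compact set and drives them to $\upzeta^\star$ on the diverging edges — together with the slightly delicate fact that each $v_e$ is either exactly $0$ or bounded below. The other place where the hypotheses are genuinely used is establishing $L_a\succ 0$ (hence invertibility, which legitimizes the implicit function theorem and the whole gradient formula): without $dk^{-1}_{i_0}/d\mathrm y_{i_0}>0$, $d\gamma_e/d\zeta_e>0$, and connectedness of $\G$, $L_a$ could be singular and the argument would not even get off the ground.
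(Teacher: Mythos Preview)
Your argument for the descent direction is essentially the paper's: both compute $\nabla F$ via the implicit function theorem applied to the steady-state equation $k^{-1}(\mathrm y)+\E A\,\gamma(\E^T\mathrm y)=0$, and both reduce $v^T\nabla F(a)$ to the quadratic form $-\big(\E(\mathrm f-\upzeta^\star)\big)^{T} L_a^{-1}\big(\E(\mathrm f-\upzeta^\star)\big)$ (the paper writes $X(\mathrm y)$ for your $L_a^{-1}$). Your verification that $L_a\succ 0$ is a bit more explicit, and the equality case is handled identically via $\mathrm f-\upzeta^\star\in\IM(\E^T)=(\ker\E)^\perp$.

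For finite termination, however, your route is genuinely different. The paper first proves an auxiliary lemma showing that all steady-state outputs $\mathrm y(a^{(j)})$ lie in a fixed bounded set $\mathcal D\subset\R^{|\V|}$ (intersecting $\{\mathrm y:\|\E^T\mathrm y-\upzeta^\star\|\le F(a^{(0)})\}$ with the level set $\{\mathrm y:\sum_i k_i^{-1}(\mathrm y_i)=0\}$, which comes from summing the steady-state equations). It then argues that $\underline\sigma(X(\mathrm y))$ attains a positive minimum on $\mathcal D$, giving the uniform estimate $v^T\nabla F(a^{(j)})\le -cF(a^{(j)})$ and hence a geometric decrease $F(a^{(j+1)})\approx(1-hc)F(a^{(j)})$ for small $h$. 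Your argument instead leverages the sign observation $v_e\ge 0$ (so each $a_e^{(j)}$ is monotone) together with the optimality bound $\sum_e a_e^{(j)}\Gamma_e(\mathrm f_e^{(j)})\le m_0$, and then runs the bounded/unbounded dichotomy edge by edge. This is more elementary --- no boundedness lemma for $\mathrm y$, no singular-value compactness argument --- and, as you essentially observe, it never actually invokes the step-size restriction, so it yields termination for every $h>0$. What the paper's route buys in return is a quantitative (geometric) rate; but that rate rests on treating $X$ as a function of $\mathrm y$ alone, whereas the Jacobian you correctly computed is $D_k+\E A D_\gamma\E^T$ and depends on $a$ as well, so the claimed uniformity of $\underline\sigma(X)$ over $\mathcal D$ is exactly the delicate point you flagged. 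Your approach sidesteps that issue entirely.
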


\begin{proof}[Sketch of proof]
The proof is based on showing that $\nabla F$ can be written as $-\mathrm{diag}(\gamma(\mathrm f)) X(\mathrm y(a)) (\mathrm f - \upzeta^\star)$, where $X(\mathrm y(a))$ is a positive-definite matrix depending on $\mathrm y(a)$. We can now show that $v^T \nabla F = - (\mathrm f - \upzeta^\star)^TX(\mathrm y(a)) (\mathrm f - \upzeta^\star) \le 0$. The full proof of Theorem \ref{thm.DerivativeFreeConvergence} is available in the appendix.
\end{proof}

Algorithm~\ref{alg.MultiGain2} together with the theoretical results from Theorem \ref{thm.DerivativeFreeConvergence} provide us with a very simple and distributed, iterative control scheme with theoretical guarantees. Note also, that the steady-states of the agents are independent of their initial condition. For each iteration, the agents can hence also start from the position they converged to at the last iteration. This can be interpreted similarly to Remark \ref{rem.data_free}, where gains are updated on a slower time scale than convergence of the agents. However, instead of only the information whether practical formation control is achieved, we generally need the actual difference $\E^T \mathrm y - \upzeta^\star$ that is achieved with the current controller in each iteration. In the special case of proportional controllers $\mu_e = \zeta_e - (\upzeta^\star)_e$, yielding $v_e = 1$, we retrieve the exact controller scheme proposed in Remark \ref{rem.data_free}.

An alternative gradient-free control scheme is the extremum seeking framework presented in \cite{Feiling2018}. Assuming that $k^{-1}$ and $\gamma$ are twice continuously differentiable, a step in the direction of steepest descent is approximated every $4 |\EE|$ steps (cf.~\cite[Theorem 1]{Feiling2018}). While the extremum seeking framework approximates the steepest descent (and the simple multi-gain approach only guarantees a descending direction), it also requires large amounts of experiments per approximated gradient step. Furthermore, the algorithm as presented in \cite{Feiling2018} cannot be computed in a purely distributed fashion. Therefore, the simple distributed control scheme in Algorithm~\ref{alg.MultiGain2} displays significant advantages in the present problem setup.
\vspace{-7pt}
\section{Case Study: Velocity Coordination in Vehicles with Drag and Exogenous Forces} \label{sec.Simulations}
Consider a collection of {$30$} one-dimensional robot vehicles, each modeled by a double integrator $G(s) = \frac{1}{s^2}$. The robots try to coordinate their velocity. Each of them has its own drag profile $f(\dot{p})$, which is unknown to the algorithm, but it is known that $f$ is increasing and $f(0)=0$. Moreover, each vehicle experiences external forces (e.g., wind, and being placed on a slope). The velocity of the vehicles is governed by the equation
\begin{align}
\Sigma_i:\ \begin{cases}\dot{x_i} &= -f_i(x_i) + u_i + w _i \\ y_i &= x_i, \end{cases}
\end{align}
where $x_i$ is the velocity of the $i$-th vehicle, $f_i$ is its drag model, $w_i$ are the exogenous forces acting on it, $u_i$ is the control input, and $y_i$ is the measurement. In the simulation, the drag models $f_i$ are given by $c_d |x|x$, where the drag coefficient $c_d$ is chosen as a log-uniformly distributed random variable. We assume that the vehicles are light, so the wind accelerates the vehicles by a non-negligible amount. Thus, $w_i$ is randomly chosen between $-2$ and $2$. 
We wish to achieve velocity consensus, with error no greater than $\epsilon = 0.2$. We consider a diffusive coupling of the agents with the cycle graph $\G=\mathcal{C}_{30}$, and take proportional controllers $\zeta_e = \mu_e$.

We apply the amplification scheme presented in Algorithm \ref{alg.SingleGain} and choose the consensus value $y^\star_i = 1.5_{m/sec}$ to use in the estimation algorithm. Note that the plants are MEIP, but not output-strictly MEIP, and use Algorithm \ref{alg.MEIPExperimentAndEstimate} to estimate the required uniform gain $\alpha$. The first two experiments are conducted with $\beta_i = 0.01$, and $y_{\Refc} = \pm 100$. Based on their results, we run a third experiment on each of the agents for which this is required, this time with $\beta_i = 1$ and $y_{\Refc} = \pm 10$, where the sign is chosen according to Algorithm \ref{alg.MEIPExperimentAndEstimate}. The experimental results are available in Figure \ref{fig.CaseStudy1Experiments}. 

We estimate each $m_i$ using Remark \ref{rem.estimate_m_2pt}. For example, for agent 1 we get the three steady-state input-output pairs $(0.9947,0.5203)$, $(-0.9687,-3.1294)$, and $(3.4268,3.5732)$. Monotonicity implies that it has steady-states $(u^\star_1,y^\star_1)=(u^\star_1,1.5)$ and $(0,y_{1,0})$ with $0.9947 \le u^\star_1 \le 3.4268$ and $-3.1294 \le y_{1,0} \le 0.5203$. We can thus estimate $m_2 \le 3.4268 \cdot (1.5 - (-3.1294)) = 15.864$. Repeating this calculation for each of the agents and summing gives $m = 256.3658$.

\begin{figure*}[t!]
	\centering
	\subfigure[Results of First Set of Experiments for the Vehicles. Each plot corresponds to a different agent.]{\includegraphics[scale=0.65]{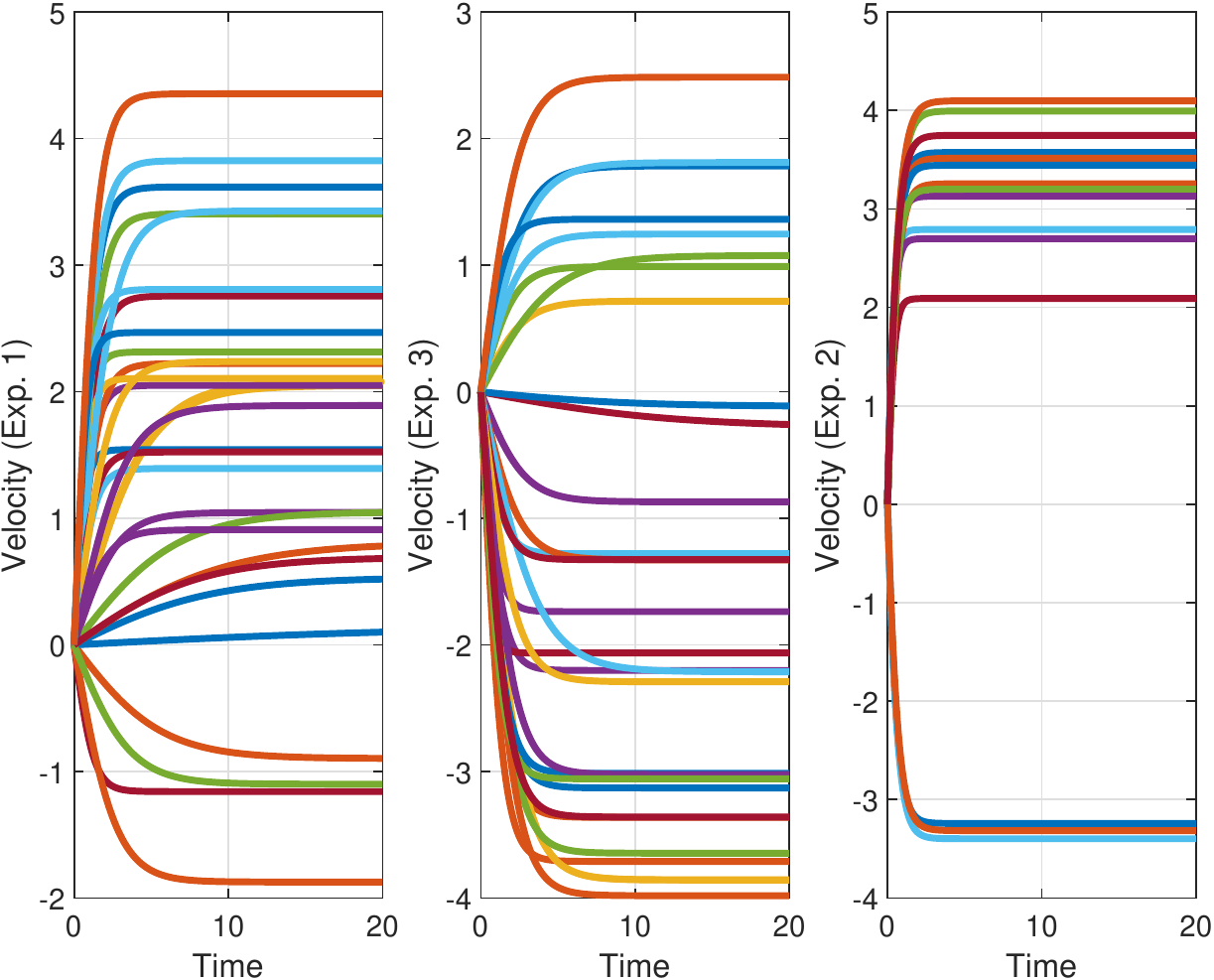}\label{fig.CaseStudy1Experiments}
}\hspace{0.1cm}
	\subfigure[The Closed-Loop with Uniform Gain $\alpha = 213.638$. The two leftmost graphs plot the agents' trajectories over 0.3 seconds and over 10 seconds. The rightmost graph plots the relative outputs.]{\includegraphics[scale=0.65]{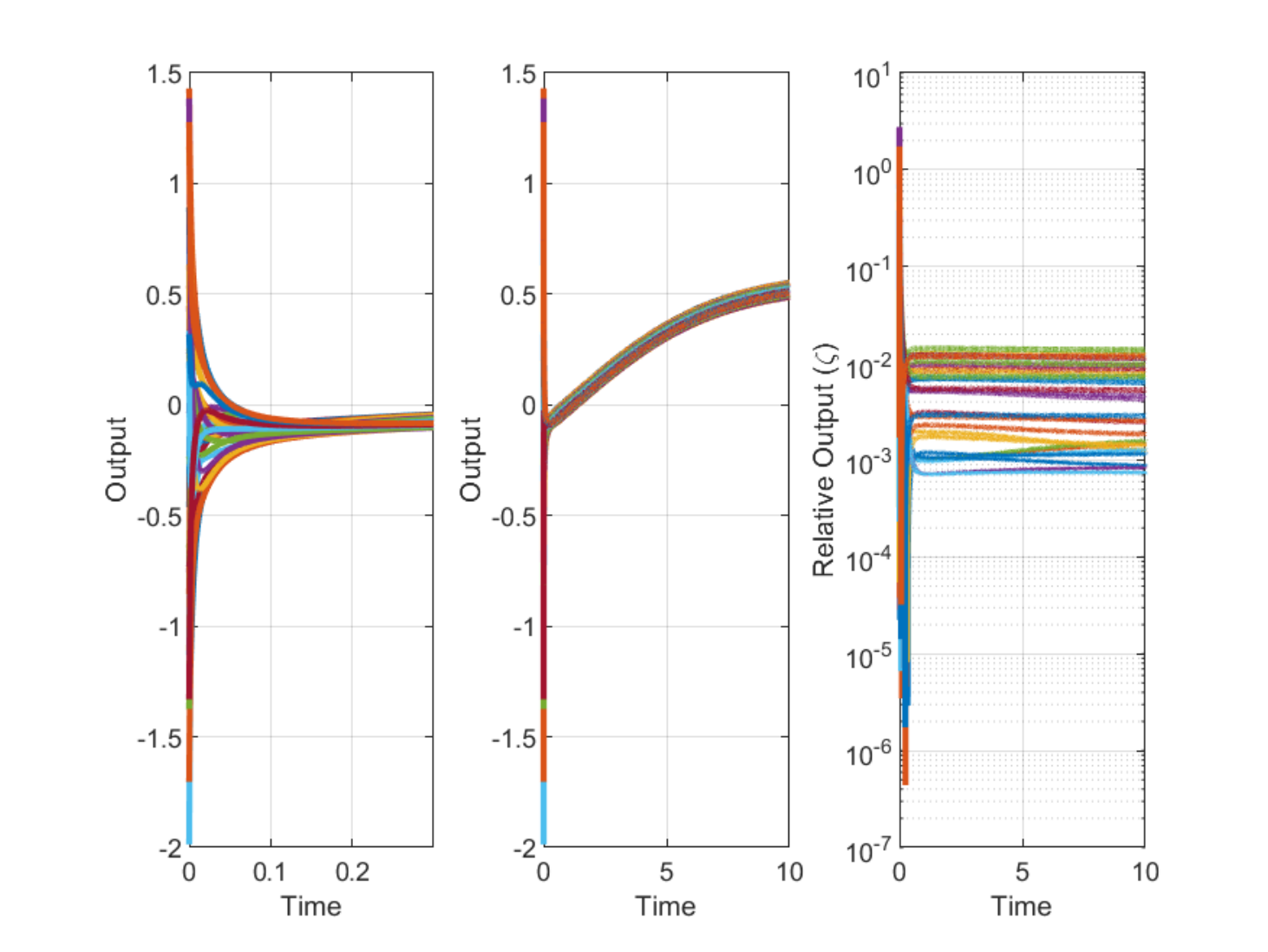}\label{fig.CaseStudy1EndRun}}
	\vspace{-8pt}
	\caption{Experiment results for vehicle case study.}
	\vspace{-5pt}
\end{figure*}

As for estimating $M$, we have $\Gamma_e(\zeta_e) = \zeta^2_e$, so $\Gamma(\zeta) = \|\zeta\|^2$. The minimum is at $\zeta^\star = 0$, and by definition we have $M = \min_{\zeta\in\IM(\E^T):\ \|\zeta-\upzeta^\star\| =  \varepsilon} \Gamma(\zeta) = |\EE|\varepsilon^2$. Thus, we get $\alpha = \frac{m}{M} = \frac{m}{30\varepsilon^2} = 213.64$. To verify the algorithm, we run the closed-loop system $(\G,\Sigma,\Pi,\alpha\mathbb{1})$ with the gain $\alpha$ we found. The results are available in Figure \ref{fig.CaseStudy1EndRun}. One can see that the overall control goal is achieved - the agents converge to a steady-state which is $\varepsilon$-close to consensus.  However, the agents actually converge to a much closer steady-state than predicted by the algorithm. Namely, the distance of the steady-state output from consensus is roughly $0.04$, much smaller than $0.2$. \textcolor{black}{Uncoincidentally, the true value of $m=K(0)+K^\star(y^\star)$ is $50.15$, meaning we overestimate it (and hence $m/M$) by about $411\%$. One can mitigate this by using more experiments to improve the estimate $m_i$, as in Proposition \ref{prop.estimate_m} or in Remark \ref{rem.BetterExperiments}. We follow this approach and conduct further experiments on each of the agents using $\beta_i = 10$ and choosing $y_{\Refc}$ randomly. The resulting values of $\alpha$, as well as the error from the true value of $m/M$, can be seen in the table below. It can be seen that even a single additional measurement per agent can significantly reduce the estimation error of $m/M$. 
\begin{center}
{\scriptsize
    \begin{tabular}{ | p{80pt} || p{40pt} | p{75pt} |}
    \hline
    Measurements Per Agent& Value of $\alpha$ & Overestimation of $m/M$ \\ \hline
    	3 & $213.638$ & $411\%$ \\
    	4 & $104.308$ & $150\%$ \\
    	10 & $67.316$ & $61\%$ \\
   	20 & $55.161$ & $32\%$ \\  \hline
    \end{tabular}
    }
\end{center}}

Altogether we showed that Algorithm \ref{alg.SingleGain} manages to solve the practical consensus problem for vehicles, affected by drag and exogenous inputs, without using any model for the agents, while conducting very few experiments for each agent. However, it overestimates the required coupling, and thus has unnecessarily large energy consumption.

Let us now apply the iterative multi-gain control strategy. We start with $a^{(0)} = 0.1 \otimes 1_{|\EE|}$, we choose the step size $h=2$ and apply Algorithm~\ref{alg.MultiGain2}. In fact, since $\upzeta^\star = 0$ and $\zeta_e = \mu_e$, we receive $v=1_{|\EE|}$, which constitutes the special case of Remark \ref{rem.data_free}. The corresponding norm of the gain vector and the resulting $\varepsilon$ in each iteration is illustrated in Fig.~\ref{fig.CaseStudy1_MG2_a}. After $20$ iterations, we already arrive at a vector, which solves the practical formation problem with $\|a^{(19)}\| = 208.68$, while $\varepsilon = 0.195 < 0.2$.  Note that the controller with the uniform gain had $\|a\| = \sqrt{|\EE|}\cdot 213.638 = 1170.1$, so the iterative scheme beats it by a factor of $5$ in terms of energy.

\begin{figure} 
    \centering
%
%
%
\begin{tikzpicture}
 \tikzstyle{every node}=[font=\scriptsize]
\begin{semilogyaxis}[%
width=2.3cm,
height=2.3cm,
at={(0cm,0cm)},
scale only axis,
xmin=0,
xmax=20,
xlabel={Iterations $j$},
xmajorgrids,
ymin=0,
ymax=14,
ylabel={$\varepsilon (a^{(j)})$},%
ylabel near ticks,
ymajorgrids,
axis background/.style={fill=white},
legend style={legend cell align=left,align=left,draw=white!15!black}
]
\addplot [color=red,mark size=1pt,only marks,mark=*,mark options={solid},forget plot]
  table[row sep=crcr]{%
0   1.221982415716640e+01\\
1   2.337842261964685e+00\\
2   1.409799463333590e+00\\
3   1.014581918017397e+00\\
4   7.872803801071873e-01\\
5   6.424998997169324e-01\\
6   5.464120195751423e-01\\
7   4.772780788506241e-01\\
8   4.245738984408815e-01\\
9   3.828348153326068e-01\\
10   3.488551434559351e-01\\
11   3.205984707163623e-01\\
12   2.966976435752313e-01\\
13   2.761961160724759e-01\\
14   2.584032384737470e-01\\
15   2.428062742733395e-01\\
16   2.290158205882989e-01\\
17   2.167307692963437e-01\\
18   2.057142229402757e-01\\
19   1.957769519701794e-01\\
20   1.867662743750864e-01\\
};
\end{semilogyaxis}

\begin{axis}[%
width=2.3cm,
height=2.3cm,
at={(4cm,0cm)},
scale only axis,
xmin=0,
xmax=20,
xlabel={Iterations $j$},
xmajorgrids,
ymin=0,
ymax=230, 
ylabel={$\|a_j\|$},%
ylabel near ticks,
ymajorgrids,
axis background/.style={fill=white},
legend style={legend cell align=left,align=left,draw=white!15!black}
]
\addplot [color=blue,mark size=1pt,only marks,mark=*,mark options={solid},forget plot]
  table[row sep=crcr]{%
0   5.477225575051662e-01\\
1   1.150217370760849e+01\\
2   2.245662485771181e+01\\
3   3.341107600781513e+01\\
4   4.436552715791845e+01\\
5   5.531997830802177e+01\\
6   6.627442945812510e+01\\
7   7.722888060822842e+01\\
8   8.818333175833175e+01\\
9   9.913778290843507e+01\\
10   1.100922340585384e+02\\
11   1.210466852086417e+02\\
12   1.320011363587450e+02\\
13   1.429555875088484e+02\\
14   1.539100386589517e+02\\
15   1.648644898090550e+02\\
16   1.758189409591583e+02\\
17   1.867733921092617e+02\\
18   1.977278432593650e+02\\
19   2.086822944094683e+02\\
20   2.196367455595716e+02\\
};
\end{axis}
\end{tikzpicture}%
    \caption{The resulting $\varepsilon$ and the norm of the gain vector $\|a^{(j)}\|$ over iterations $j$ when applying the iterative multi-gain control strategy to the case study of velocity coordination in vehicles.}
		\label{fig.CaseStudy1_MG2_a}
		\vspace{-15pt}
\end{figure}
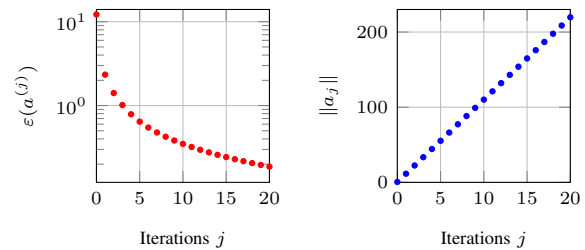
\vspace{-10pt}
\section{Conclusions and Future Work} \label{sec.Conclusions}
We presented an approach for model-free practical cooperative control for diffusively coupled systems only on the premise of passivity of the agents. The presented approach led to two control schemes: with additional two or three experiments on the agents, we can upper bound the controller gain which solves the practical formation problem, or we can iteratively adapt the adjustable gain vector until practical formation is reached. Both approaches are especially simple in their application, while still being scalable and providing theoretical guarantees.
Future research might try and improve the presented methods, either by reducing the number of experiments needed on each agent, or by achieving faster practical convergence using iterations. One might also try to use very limited knowledge on the agents to achieve the said improvement. Other possible directions include data-driven solutions to more intricate problems using the network optimization framework, e.g. fault detection and isolation.
\vspace{-10pt}


\bibliographystyle{ieeetr}
\bibliography{main}

\begin{thebibliography}{10}

\bibitem{Franchi2011}
A.~Franchi, P.~R. Giordano, C.~Secchi, H.~I. Son, and H.~H. Bulthoff,
  ``Passivity-based decentralized approach for the bilateral teleoperation of a
  group of uavs with switching topology,'' in {\em IEEE International
  Conference on Robotics and Automation}, pp.~898--905, 2011.

\bibitem{Bando1995}
M.~Bando, K.~Hasebe, A.~Nakayama, A.~Shibata, and Y.~Sugiyama, ``Dynamical
  model of traffic congestion and numerical simulation,'' {\em Phys. Rev. E},
  vol.~51, pp.~1035--1042, Feb 1995.

\bibitem{Urban2001}
D.~Urban and T.~Keitt, ``Landscape connectivity: A graph-theoretic
  perspective,'' {\em Ecology}, vol.~82, no.~5, pp.~1205--1218, 2001.

\bibitem{Recht2019}
B.~Recht, ``A tour of reinforcement learning: The view from continuous
  control,'' {\em Annual Review of Control, Robotics, and Autonomous Systems},
  vol.~2, no.~1, pp.~253--279, 2019.

\bibitem{Dean2017}
S.~Dean, H.~Mania, N.~Matni, B.~Recht, and S.~Tu, ``On the sample complexity of
  the linear quadratic regulator,'' {\em Foundations of Computational
  Mathematics}, 10 2017.

\bibitem{Gorges2019}
D.~G\"orges, ``Distributed adaptive linear quadratic control using distributed
  reinforcement learning,'' {\em IFAC-PapersOnLine}, vol.~52, no.~11, pp.~218
  -- 223, 2019.
\newblock 5th IFAC Conference on Intelligent Control and Automation Sciences
  ICONS 2019.

\bibitem{Coulson2018}
J.~Coulson, J.~Lygeros, and F.~D{\"o}rfler, ``Data-enabled predictive control:
  In the shallows of the {DeePC},'' in {\em 18th European Control Conf.},
  pp.~307--312, 2019.

\bibitem{Fattahi2019}
S.~Fattahi, N.~Matni, and S.~Sojoudi, ``Efficient learning of distributed
  linear-quadratic controllers,'' {\em arXiv preprint arXiv:1909.09895}, 2019.

\bibitem{Jiang2018}
H.~Jiang and H.~He, ``Data-driven distributed output consensus control for
  partially observable multiagent systems,'' {\em IEEE Trans.\ Cybernetics},
  pp.~1--11, 2018.

\bibitem{Bianchini2017}
F.~Bianchini, G.~Fenu, G.~Giordano, and F.~A. Pellegrino, ``Model-free tuning
  of plants with parasitic dynamics,'' in {\em Proc.\ 56th IEEE Conf.\ on
  Decision and Control}, pp.~499--504, Dec 2017.

\bibitem{Siljak1996}
D.~D. \v{S}iljak, ``Decentralized control and computations: Status and
  prospects,'' {\em A.\ Rev.\ Control}, vol.~20, 1996.

\bibitem{Zheng2018}
Y.~Zheng, S.~E. Li, K.~Li, and W.~Ren, ``Platooning of connected vehicles with
  undirected topologies: Robustness analysis and distributed {H}-infinity
  controller synthesis,'' {\em IEEE Trans.\ Intelligent Transportation
  Systems}, vol.~19, pp.~1353--1364, 2018.

\bibitem{Yu2012}
W.~Yu, P.~DeLellis, G.~Chen, M.~di~Bernardo, and J.~Kurths, ``Distributed
  adaptive control of synchronization in complex networks,'' {\em IEEE Trans.\
  Automat.\ Control}, vol.~57, pp.~2153--2158, 2012.

\bibitem{Montenbruck2016}
J.~M. Montenbruck and F.~Allg{\"o}wer, ``Some problems arising in controller
  design from big data via input-output methods,'' in {\em Proc.\ 55th IEEE
  Conf.\ on Decision and Control}, pp.~6525--6530, 2016.

\bibitem{Romer2017b}
A.~Koch, J.~M. Montenbruck, and F.~Allgower, ``Sampling strategies for
  data-driven inference of input-output system properties,'' {\em IEEE Trans.\
  Automat.\ Control}, 2021.

\bibitem{Romer2019}
A.~Romer, J.~Berberich, J.~K{\"o}hler, and F.~Allg{\"o}wer, ``One-shot
  verification of dissipativity properties from input-output data,'' {\em IEEE
  Control Systems Letters}, vol.~3, pp.~709--714, 2019.

\bibitem{Khalil2001}
H.~K. Khalil, {\em Nonlinear Systems}.
\newblock Pearson, 3rd~ed., 2001.

\bibitem{Arcak2007}
M.~Arcak, ``Passivity as a design tool for group coordination,'' {\em IEEE
  Trans.\ Automat.\ Control}, vol.~52, pp.~1380--1390, Aug. 2007.

\bibitem{Bai2011}
H.~Bai, M.~Arcak, and J.~Wen, {\em Cooperative Control Design: A Systematic,
  Passivity-Based Approach}.
\newblock Communications and Control Engineering, Springer, 2011.

\bibitem{Pavlov2008}
A.~Pavlov and L.~Marconi, ``Incremental passivity and output regulation,'' {\em
  Systems \& Control Letters}, vol.~57, no.~5, pp.~400 -- 409, 2008.

\bibitem{Hines2011}
G.~H. Hines, M.~Arcak, and A.~K. Packarda, ``Equilibrium-independent passivity:
  A new definition and numerical certification,'' {\em Automatica}, vol.~47,
  no.~9, pp.~1949--1956, 2011.

\bibitem{Burger2014}
M.~B{\"u}rger, D.~Zelazo, and F.~Allg{\"o}wer, ``Duality and network theory in
  passivity-based cooperative control,'' {\em Automatica}, vol.~50, no.~8,
  pp.~2051--2061, 2014.

\bibitem{Sharf2017}
M.~Sharf and D.~Zelazo, ``A network optimization approach to cooperative
  control synthesis,'' {\em IEEE Control Systems Letters}, vol.~1, pp.~86--91,
  July 2017.

\bibitem{Sharf2018a}
M.~{Sharf} and D.~{Zelazo}, ``Analysis and synthesis of mimo multi-agent
  systems using network optimization,'' {\em IEEE Transactions on Automatic
  Control}, vol.~64, no.~11, pp.~4512--4524, 2019.

\bibitem{Jain2018}
A.~Jain, M.~Sharf, and D.~Zelazo, ``Regulatization and feedback passivation in
  cooperative control of passivity-short systems: A network optimization
  perspective,'' {\em IEEE Control Systems Letters}, vol.~2, pp.~731--736,
  2018.

\bibitem{Sharf2019a}
M.~Sharf, A.~Jain, and D.~Zelazo, ``A geometric method for passivation and
  cooperative control of equilibrium-independent passivity-short systems,''
  {\em arXiv preprint arXiv:1901.06512}, 2019.

\bibitem{Sharf2019f}
M.~Sharf and D.~Zelazo, ``A data-driven and model-based approach to fault
  detection and isolation in networked systems,'' {\em arXiv preprint
  arXiv:1908.03588}, 2019.

\bibitem{Montenbruck2015}
J.~M. Montenbruck, M.~B{\"u}rger, and F.~Allg{\"o}wer, ``Practical
  synchronization with diffusive couplings,'' {\em Automatica}, vol.~53,
  pp.~235 -- 243, 2015.

\bibitem{Kim2016}
J.~{Kim}, J.~{Yang}, H.~{Shim}, J.~{Kim}, and J.~H. {Seo}, ``Robustness of
  synchronization of heterogeneous agents by strong coupling and a large number
  of agents,'' {\em IEEE Trans.\ Automat.\ Control}, vol.~61, no.~10,
  pp.~3096--3102, 2016.

\bibitem{Godsil2001}
C.~Godsil and G.~Royle, {\em Algebraic Graph Theory}.
\newblock Graduate Texts in Mathematics, Springer New York, 2001.

\bibitem{Rockafeller1997}
R.~T. Rockafellar, {\em Convex Analysis}.
\newblock Princeton Landmarks in Mathematics and Physics, Princeton University
  Press, 1997.

\bibitem{Oh2015}
K.-K. Oh, M.-C. Park, and H.-S. Ahn, ``A survey of multi-agent formation
  control,'' {\em Automatica}, vol.~53, pp.~424 -- 440, 2015.

\bibitem{Romer2017a}
A.~Romer, J.~M. Montenbruck, and F.~Allg{\"o}wer, ``Determining dissipation
  inequalities from input-output samples,'' in {\em Proc.\ 20th IFAC World
  Congress}, pp.~7789--7794, 2017.

\bibitem{Holst1980}
L.~Holst, ``On the lengths of the pieces of a stick broken at random,'' {\em
  Journal of Applied Probability}, vol.~17, no.~3, pp.~623--634, 1980.

\bibitem{DePersis2019}
C.~{De Persis} and P.~{Tesi}, ``Formulas for data-driven control:
  Stabilization, optimality and robustness,'' {\em IEEE Trans.\ Automat.\
  Control}, 2019.

\bibitem{Sharf2018b}
M.~Sharf and D.~Zelazo, ``Network identification: A passivity and network
  optimization approach,'' in {\em Proc.\ 57th IEEE Conf.\ on Decision and
  Control}, 2018.

\bibitem{Bristow2006}
D.~A. {Bristow}, M.~{Tharayil}, and A.~G. {Alleyne}, ``A survey of iterative
  learning control,'' {\em IEEE Control Systems Magazine}, vol.~26, no.~3,
  pp.~96--114, 2006.

\bibitem{Feiling2018}
J.~Feiling, A.~Zeller, and C.~Ebenbauer, ``Derivative-free optimization
  algorithms basen on non-commutative maps,'' {\em IEEE Control Systems
  Letters}, vol.~2, pp.~743--748, 2018.

\end{thebibliography}
\vspace{-5pt}
\appendix
We now give full proofs of Proposition \ref{prop.MEIPFromObscureModel} and Theorem \ref{thm.DerivativeFreeConvergence}.
\vspace{-10pt}
\subsection{Proving Proposition \ref{prop.MEIPFromObscureModel}}
In order to prove the proposition, we use the notion of cursive relations established in \cite{Sharf2019a}:
\begin{defn}[Cursive Relations, \cite{Sharf2019a}] \label{defn_cursive_relation}
A set $A \subset \mathbb{R}^2$ is called \emph{cursive} if there exists a curve $\alpha:\mathbb{R}\to\mathbb{R}^2$ such that the following conditions hold:
\begin{itemize}
\item[i)] The set $A$ is the image of $\alpha$.
\item[ii)] The map $\alpha$ is continuous.
\item[iii)] The map $\alpha$ satisfies $\lim\limits_{|t|\to\infty} \|\alpha(t)\| = \infty$.
\item[iv)] $\{t\in\mathbb{R}:\ \exists s\neq t,\ \alpha(s)=\alpha(t)\}$ has measure zero.
\end{itemize}
A relation $\varUpsilon$ is called cursive if the set $\{(p, q):\ q \in \varUpsilon(p)\}$ is cursive.
\end{defn}
The notion of cursive relations is useful as it can help prove that systems are MEIP. Specifically,
\begin{thm}[\hspace{-2pt} \cite{Sharf2019a}]\label{thm.Sharf2019a}
A monotone cursive relation is maximally monotone.
\end{thm}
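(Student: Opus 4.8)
The plan is to prove maximality by contradiction, exploiting the interplay between the product order on $\mathbb{R}^2$ forced by monotonicity and the topological constraints (iii)–(iv) of cursiveness. Write $A=\{(u,y):\ y\in\varUpsilon(u)\}$ for the graph and let $\alpha=(\alpha_1,\alpha_2):\mathbb{R}\to\mathbb{R}^2$ be a cursive parametrization of $A$. The first thing I would establish is that monotonicity makes $A$ \emph{totally ordered} by the product order $\preceq$ (where $p\preceq q$ means both coordinates of $p$ are $\le$ those of $q$): given two points, one compares their first coordinates and applies the monotonicity implication $u_1<u_2\Rightarrow y_1\le y_2$ together with its converse $y_1<y_2\Rightarrow u_1\le u_2$ (which follows from monotonicity as well), so the two points are always $\preceq$-comparable.

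Next I would introduce the coordinate-sum function $\ell(t)=\alpha_1(t)+\alpha_2(t)$ and show that $\ell(t)=\ell(s)\iff\alpha(t)=\alpha(s)$: if the first coordinates differ, co-monotonicity makes the second coordinates vary in the same direction, so the sums differ strictly. Hence $\ell$ factors through $\alpha$ as a strictly order-preserving bijection between the point set $A$ and the interval $I=\ell(\mathbb{R})$, which is an interval since $\ell$ is continuous on the connected set $\mathbb{R}$.

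The technical heart is to prove $I=\mathbb{R}$, i.e.\ that the chain $A$ has neither a top nor a bottom. Suppose $\sup I=C<\infty$. Then $\ell\le C$ together with co-monotonicity forces both coordinates of $A$ to be bounded above, since a parameter sequence with $\alpha_1\to+\infty$ would drag $\alpha_2$ up and violate $\ell\le C$; consequently, by (iii), the norm can blow up only where the coordinates tend to $-\infty$, so $\ell(t)\to-\infty$ as $t\to\pm\infty$. A continuous $\ell$ with $\ell\to-\infty$ at both ends attains a global maximum $C$ at some $t_{\max}$ and must ``fold back'': for every $t>t_{\max}$ with $\ell(t)<C$, the intermediate value theorem supplies $s<t_{\max}$ with $\ell(s)=\ell(t)$, whence $\alpha(s)=\alpha(t)$ and $t$ is a self-intersection parameter. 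Since $\{t>t_{\max}:\ \ell(t)=C\}$ is bounded, this exhibits a set of self-intersection parameters of infinite measure, contradicting (iv). Thus $\sup I=+\infty$, and the mirror-image argument gives $\inf I=-\infty$, so $I=\mathbb{R}$.

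Finally I would close the contradiction. If $\varUpsilon$ were not maximal there would be a point $p_0=(u_0,y_0)\notin A$ with $A\cup\{p_0\}$ still monotone. Set $c_0=u_0+y_0$. A short case analysis on whether a hypothetical $p'=(u',y')\in A$ with $u'+y'=c_0$ satisfies $u'<u_0$, $u'=u_0$, or $u'>u_0$ shows, using the monotonicity of $A\cup\{p_0\}$, that each case forces $p'=p_0\notin A$, a contradiction; hence no point of $A$ has coordinate sum $c_0$, i.e.\ $c_0\notin I$. This contradicts $I=\mathbb{R}$, so $\varUpsilon$ is maximally monotone. The main obstacle is the $I=\mathbb{R}$ step: condition (iv) is exactly what forbids the curve from retracing a half-infinite chain, and the crux is converting ``the chain has an endpoint'' into ``a positive-measure set of double points.''
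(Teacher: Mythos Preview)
The paper does not prove this theorem; it is quoted verbatim from \cite{Sharf2019a} and invoked as a black box in the proof of Proposition~\ref{prop.MEIPFromObscureModel}. There is thus no in-paper argument to compare your attempt against.

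For what it is worth, your argument is correct. Pushing the curve forward by the coordinate-sum map $\ell=\alpha_1+\alpha_2$ is the right device: monotonicity makes the induced map $A\to\ell(\mathbb{R})$, $(u,y)\mapsto u+y$, a strict order-isomorphism onto an interval $I$, and then (iii)--(iv) force $I=\mathbb{R}$, since a finite endpoint of $I$ would make $\ell$ fold back and produce a half-line of self-intersection parameters. The closing step---a monotone extension point $p_0=(u_0,y_0)\notin A$ would have coordinate sum $u_0+y_0\notin I$---is clean. One place worth spelling out in a finished write-up is the sentence ``would drag $\alpha_2$ up'': concretely, if $\alpha_1(t_n)\to+\infty$ then comparing $(\alpha_1(t_n),\alpha_2(t_n))$ to any fixed reference point $(u_*,y_*)\in A$ gives $\alpha_2(t_n)\ge y_*$ for large $n$, hence $\ell(t_n)\to+\infty$, contradicting $\ell\le C$. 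Everything else is already airtight.
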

We can now prove Proposition \ref{prop.MEIPFromObscureModel}:
\begin{proof}
Consider an arbitrary steady-state of the system. As $h$ is continuous and strictly monotone ascending, hence invertible, we must have $\dot{x}=0$ for any steady-state input-output pair. Thus, we conclude that any steady-state input-output pair can be written as $(f(\sigma)/g(\sigma),h(\sigma))$ for some $\sigma \in \mathbb{R}$. We first show passivity with respect to every steady-state, and then show that the steady-state input-output relation is maximally monotone.
Take a steady-state $(f(x_0)/g(x_0),h(x_0))$ of the system, and define $S(x)=\int_{x_0}^x \frac{h(\sigma)-h(x_0)}{g(\sigma)}d\sigma$. We claim that $S$ is a storage function for the steady-state input-output pair $(f(x_0)/g(x_0),h(x_0))$. Indeed, $S(x) \ge 0$, with equality only at $x_0$, immediately follows from strict monotonicity of $h$ and $g > 0$. As for the inequality defining passivity, we have:%

\small
\begin{align*}
&\frac{d}{dt} S(x) =\frac{h(x)-h(x_0)}{g(x)}(-f(x)+g(x)u) =\\
&(h(x)-h(x_0))u - \frac{f(x)}{g(x)}(h(x)-h(x_0)) =\\ 
&(h(x)-h(x_0))\bigg(u-\frac{f(x_0)}{g(x_0)}\bigg) - \bigg(\frac{f(x)}{g(x)}-\frac{f(x_0)}{g(x_0)}\bigg)(h(x)-h(x_0)), 
\end{align*}
\normalsize
where the second term is negative as $\frac{f}{g},h$ are monotone ascending, and the first term is $(y-h(x_0))(u-\frac{f(x_0)}{g(x_0)})$. Hence, the system is indeed passive with respect to any steady-state input-output pair.
As for maximal monotonicity of the steady-state relation, we recall that it can be parameterized as $(f(\sigma)/g(\sigma),h(\sigma))$ for $\sigma \in \mathbb{R}$. We claim that this relation is both monotone and cursive, which will prove maximal monotonicity. Monotonicity holds as for any $x_0,x_1$,\small
\begin{align} \label{eq.MonotonicityExample}
\frac{f(x_0)}{g(x_0)} > \frac{f(x_1)}{g(x_1)} \iff x_0 > x_1 \iff h(x_0) > h(x_1)
\end{align} \normalsize
due to strict monotonicity. As for cursiveness, the map $\sigma \mapsto (f(\sigma)/g(\sigma),h(\sigma))$ is a curve whose image is the relation. Moreover, it is clear that the map is continuous, and also injective due to \eqref{eq.MonotonicityExample}. Lastly, we have
\begin{align}
\lim_{|t|\to\infty} \bigg\|\bigg(\frac{f(t)}{g(t)},h(t)\bigg)\bigg\| \ge \lim_{|t|\to\infty} \max\bigg\{\bigg|\frac{f(t)}{g(t)}\bigg|,|h(t)|\bigg\} = \infty
\end{align}
so the proof is complete by Theorem \ref{thm.Sharf2019a}.
\end{proof}
\vspace{-10pt}
\subsection{Proving Theorem \ref{thm.DerivativeFreeConvergence}.}

We first state and prove the following lemma:

\begin{lem} \label{lem.BoundednessOfFeasibleSet}
Suppose that the assumptions of Theorem \ref{thm.DerivativeFreeConvergence} hold, and let $C>0$ be any constant. Define $A_1 = \{\mathrm y\in\mathbb{R}^n:\ \|\E^T \mathrm y - \zeta^\star\| \le C\}$ and $A_2 = \{\mathrm y\in \mathbb{\R}^n:\ \sum_i k_i^{-1}(\mathrm y_i) = 0\}$. Then the set $A_1\cap A_2$ is bounded.
\end{lem}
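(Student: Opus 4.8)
The plan is to argue by contradiction: suppose $A_1 \cap A_2$ is unbounded, so there is a sequence $\mathrm y^{(m)} \in A_1 \cap A_2$ with $\|\mathrm y^{(m)}\| \to \infty$. I would decompose each $\mathrm y^{(m)}$ relative to $\ker(\E^T) = \IM(\E)^{\perp}$; since $\G$ is connected this kernel is $\mathrm{span}(\mathbb{1}_n)$ (and if $\G$ is not assumed connected, the kernel is spanned by indicator vectors of connected components, which the same argument handles componentwise). Write $\mathrm y^{(m)} = \lambda_m \mathbb{1}_n + w^{(m)}$ with $w^{(m)} \perp \mathbb{1}_n$. The constraint $\mathrm y^{(m)} \in A_1$ says $\|\E^T w^{(m)}\| = \|\E^T \mathrm y^{(m)} - 0\|$ — wait, more precisely $\|\E^T \mathrm y^{(m)} - \zeta^\star\| \le C$, and since $\E^T$ is injective on $\mathbb{1}_n^\perp$, this forces $\{w^{(m)}\}$ to be bounded. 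Hence $|\lambda_m| \to \infty$; passing to a subsequence, assume $\lambda_m \to +\infty$ (the case $\lambda_m \to -\infty$ is symmetric).

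The key step is then to use the strict monotonicity hypothesis on agent $i_0$ to derive a contradiction with $\mathrm y^{(m)} \in A_2$, i.e.\ with $\sum_i k_i^{-1}(\mathrm y^{(m)}_i) = 0$. Each coordinate is $\mathrm y^{(m)}_i = \lambda_m + w^{(m)}_i$ with $w^{(m)}_i$ bounded. For the distinguished agent $i_0$: since $\frac{dk_{i_0}^{-1}}{d\mathrm y_{i_0}} > 0$ everywhere, $k_{i_0}^{-1}$ is strictly increasing; I claim it must then be unbounded above, i.e.\ $k_{i_0}^{-1}(\mathrm y_{i_0}) \to +\infty$ as $\mathrm y_{i_0} \to +\infty$. (This needs a small argument: a monotone $C^1$ function with everywhere-positive derivative need not diverge, but here $k_{i_0}^{-1}$ is the subdifferential inverse of the MEIP agent's convex integral function; maximal monotonicity of $k_{i_0}$ means $k_{i_0}^{-1}$ has full domain $\mathbb{R}$, and a maximally monotone relation on $\mathbb{R}$ with unbounded domain and strictly increasing single-valued branch is unbounded in range — otherwise it would have a horizontal asymptote and fail to be maximal. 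I would spell this out.) Meanwhile, for every other agent $i \neq i_0$, monotonicity of $k_i^{-1}$ gives $k_i^{-1}(\mathrm y^{(m)}_i) \ge k_i^{-1}(\inf_m \mathrm y^{(m)}_i) =: c_i > -\infty$, a finite lower bound since $\mathrm y^{(m)}_i = \lambda_m + w^{(m)}_i \ge \lambda_1' + \inf_m w^{(m)}_i$ stays bounded below along the subsequence. Therefore $\sum_i k_i^{-1}(\mathrm y^{(m)}_i) \ge k_{i_0}^{-1}(\lambda_m + w^{(m)}_{i_0}) + \sum_{i \neq i_0} c_i \to +\infty$, contradicting that the sum is identically $0$. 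The case $\lambda_m \to -\infty$ is handled by the symmetric statement that $k_{i_0}^{-1}(\mathrm y_{i_0}) \to -\infty$ as $\mathrm y_{i_0}\to-\infty$ while the other terms stay bounded above.

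I expect the main obstacle to be the claim that everywhere-positive derivative of $k_{i_0}^{-1}$ forces divergence of $k_{i_0}^{-1}$ at $\pm\infty$ — this is where one must genuinely invoke that $k_{i_0}$ comes from an MEIP agent (maximal monotonicity, hence full-domain inverse relation), rather than from $\frac{dk_{i_0}^{-1}}{d\mathrm y_{i_0}}>0$ alone. A clean way to phrase it: since $k_{i_0}$ is maximally monotone with $\mathrm{dom}(k_{i_0}^{-1}) = \mathbb{R}$, and $k_{i_0}^{-1}$ is strictly increasing, if $\sup_{\mathrm y} k_{i_0}^{-1}(\mathrm y) = L < \infty$ then the pair $(\mathrm y, L)$ for large $\mathrm y$ would be comparable to every element of the graph of $k_{i_0}^{-1}$ in the monotone sense, so the relation could be enlarged, contradicting maximality; hence $L = +\infty$, and symmetrically the infimum is $-\infty$. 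Everything else is routine orthogonal-decomposition bookkeeping plus monotone lower/upper bounds.
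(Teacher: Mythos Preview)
Your orthogonal decomposition $\mathrm y^{(m)} = \lambda_m \mathbb{1}_n + w^{(m)}$ and the conclusion that $w^{(m)}$ stays bounded while $|\lambda_m|\to\infty$ are correct, and essentially equivalent to the paper's observation that membership in $A_1$ bounds all coordinate differences $|\mathrm y_i - \mathrm y_j|$. The gap is in the endgame: the claim that $k_{i_0}^{-1}(\mathrm y_{i_0})\to+\infty$ as $\mathrm y_{i_0}\to+\infty$ does \emph{not} follow from maximal monotonicity together with $\frac{d k_{i_0}^{-1}}{d\mathrm y_{i_0}}>0$ on $\mathbb R$. Take $k_{i_0}^{-1}=\arctan$: it is smooth on $\mathbb R$ with strictly positive derivative, and its graph is maximally monotone (it is the subdifferential of the smooth convex function $\int_0^{\mathrm y}\arctan(s)\,ds$; equivalently $k_{i_0}=\tan$ restricted to $(-\pi/2,\pi/2)$ is maximally monotone because $\tan(u)\to\pm\infty$ at the endpoints, so no point outside that interval can be adjoined). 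Your proposed enlargement by adjoining $(\mathrm y,L)$ with $L=\sup k_{i_0}^{-1}$ fails precisely because for every $\mathrm y'>\mathrm y$ one has $k_{i_0}^{-1}(\mathrm y')<L$ strictly, not $\ge L$, so the enlarged relation is not monotone.

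The paper sidesteps unboundedness entirely. It fixes a reference vector $\mathrm z = k(0)$, so that $k_i^{-1}(\mathrm z_i)=0$ for every $i$, and observes that whenever $\mathrm y_i>\mathrm z_i$ for \emph{every} $i$ one has $k_i^{-1}(\mathrm y_i)\ge 0$ for all $i$ with strict inequality at $i_0$, hence $\sum_i k_i^{-1}(\mathrm y_i)>0$ and $\mathrm y\notin A_2$; combined with the coordinate-difference bound from $A_1$ this yields explicit box bounds on $A_1\cap A_2$. Your contradiction argument is easily repaired the same way: once $\lambda_m\to+\infty$ with $w^{(m)}$ bounded, eventually $\mathrm y_i^{(m)}>\mathrm z_i$ for all $i$ simultaneously, and then the strict-monotonicity-at-$i_0$ comparison against $\mathrm z$ (rather than divergence of $k_{i_0}^{-1}$) gives $\sum_i k_i^{-1}(\mathrm y_i^{(m)})>0$, contradicting membership in $A_2$.
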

\begin{proof}
First, we note that the inequality $\|\E^T\mathrm y - \zeta^\star\| \le C$ implies that for any edge $\{i,j\}\in\EE$, we have $|\mathrm y_i-\mathrm y_j| \le C+||\zeta^\star||$ by the triangle inequality. 
We let $\omega = (C+||\zeta^\star||)\mathrm{diam}(\G)$,  where $\mathrm{diam}(\G)$ is the diameter of the graph $\G$, so that if there exists some $i,j\in\V$ such that $|\mathrm y_i-\mathrm y_j| > \omega$ then $\mathrm y\not \in A_1$.
Moreover, let $\mathrm z = k(0)$, so $\sum_i k_i^{-1}(\mathrm z_i) = 0$, so that if $\mathrm y\in \mathbb{R}^n$ satisfies $\forall i:\ \mathrm y_i > \mathrm z_i$, then $\mathrm y\not \in A_2$. Indeed, for each $i$ we have $k_i^{-1}(\mathrm y_i) \ge k^{-1}_i(\mathrm z_i)$, and $k_{i_0}^{-1}(\mathrm z_{i_0}) > k_{i_0}^{-1}(\mathrm y_{i_0})$, meaning that $\sum_i k_i^{-1}(\mathrm y_i) > \sum_i k_i^{-1}(\mathrm z_i) = 0$. Similarly, if $\forall i, \mathrm z_i > \mathrm y_i$ then $\mathrm y\not\in A_2$.
We claim that for any $\mathrm y\in A_1\cap A_2$ and any $i\in \V$, we have $C_1 < \mathrm y_i < C_2$, where $C_1 = \min_j \mathrm z_j -\omega - 1$ and $C_2 = \max_j \mathrm z_j + \omega + 1$. Indeed, take any $\mathrm y\in \mathbb{R}^n$, and suppose that $\mathrm y_i \ge C_2$ for some $i\in \V$. There are two possibilities.
\begin{itemize}
\item There is some $k\in \V$ such that $\mathrm y_k < \max_j \mathrm z_j+1$. Then $|\mathrm y_i -\mathrm  y_k| > \omega$, implying that $\mathrm y\not \in A_1$.
\item For any $k\in \V$, $\mathrm y_k \ge \max_j \mathrm z_j +1$, implying that $\mathrm y \not \in A_2$.
\end{itemize}
Similarly, one shows that if there is some $i$ such that $\mathrm y_i \le C_1$, then $\mathrm y\not\in A_1\cap A_2$. This completes the proof.
\end{proof}

\begin{proof}[Proof of Theorem \ref{thm.DerivativeFreeConvergence}]
Consider the solution $\mathrm y (a)$ of $0 = k^{-1}(\mathrm y) + \E\diag(a)\gamma(\E^T \mathrm y)$ as a function of $a$. Then $\mathrm y(a)$ is a differentiable function by the inverse function theorem, and its differential is given by
$
\frac{d\mathrm y}{da} = -X(\mathrm y(a))\E\diag(\gamma(\E^T\mathrm y(a))),
$
where the matrix $X(\mathrm y)$ is given by 
\begin{align}
X(\mathrm y) = [\diag(\nabla k^{-1}(\mathrm y) )+ \E\diag(\nabla \gamma(\E^T\mathrm y))\E^T]^{-1}.
\end{align}
We note that $X(\mathrm y)$ is positive-definite for any $\mathrm y\in \mathbb{R}^n$, by Proposition~2 in \cite{Sharf2018b}. Thus, the gradient of $F$ is given by:\small
\begin{align}
\hspace{-6pt}\nabla F(a) = -\diag(\gamma(\E^T\mathrm y(a)))\E^TX(\mathrm y(a))\E(\E^T\mathrm y(a) - \zeta^\star).
\end{align}\normalsize
We note that $v^T\diag(\gamma(\E^T\mathrm y(a)) = \E^T\mathrm y(a) - \zeta^\star$, as $\gamma_e(x_e) = 0$ if and only if $x_e = \zeta^\star_e$ by strict monotonicity. Thus, \small
\begin{align}\label{eq.gradient_times_v}
v^T\nabla F(a) = -(\E(\E^T\mathrm y(a)-\zeta^\star))^TX(\mathrm y(a))\E(\E^T\mathrm y(a)-\zeta^\star)
\end{align}\normalsize
which is non-positive as $X(\mathrm y(a))$ is a positive-definite matrix.

Now, we claim that $v^T \nabla F(a) = 0$ if and only if $\E^T \mathrm y(a) = \zeta^\star$. Indeed, $\zeta^\star\in\IM(\E^T)$, so we denote $\zeta^\star = \E^T \mathrm y_0$ for some $\mathrm y_0$. As $X(\mathrm y(a))$ is positive definite, \eqref{eq.gradient_times_v} implies that $v^T\nabla F(a) = 0$ if and only if $\E(\E^T\mathrm y(a) - \zeta^\star) = \E\E^T(\mathrm y(a) - \mathrm y_0)$ is the zero vector. The kernel of the Laplacian $\E\E^T$ is the span of the all-one vector $\mathbb{1}_n$, so $\mathrm y(a) - \mathrm y_0 = \kappa\mathbb{1}_n$ for some $\kappa$, hence $\E^T \mathrm y(a) = \E^T \mathrm y_0 = \zeta^\star$. This concludes the first part of the proof.

As for convergence, we know that if $h$ is small enough, then $F(a^{(j+1)}) < F(a^{(j)})$. However, the value of $h$ so that $F(a^{(j+1)}) < F(a^{(j)})$ can depend on $a^{(j)}$ itself, but it is obvious that if $h$ is small enough, then for any $j$, we have $F(a^{(j)}) \le F(a^{(0)})$. We let $C = F(a^{(0)})=\|\E^T\mathrm y(a^{(0)}) - \zeta^\star\|$, and consider the sets $A_1 = \{\mathrm y:\ \|\E^T\mathrm y - \zeta^\star\| \le C\}$ and $A_2 = \{\mathrm y:\ \sum_i k_i^{-1}(\mathrm y_i) = 0\}$. 
For any $j$, we know that $\mathrm y(a^{(j)}) \in A_1$ by above, and that $\mathrm y(a^{(j)}) \in A_2$ by the steady-state equation $0 = k^{-1}(\mathrm y(a)) + \E\diag(a)\gamma(\E^T \mathrm y(a))$. Thus, all steady-state outputs achieved during the algorithm are in the set $\mathcal D = A_1\cap A_2$, which is bounded by Lemma \ref{lem.BoundednessOfFeasibleSet}. The map sending a matrix to its minimal singular value is continuous, meaning that $\underline{\sigma}(X(\mathrm y))$ achieves a minimum on the set $\mathcal D$ at some point $\mathrm y_1$, and the minimum is positive as $X(\mathrm y_1)$ is positive-definite. We denote the minimum value by $\underline{\sigma}(\mathcal D)$.

 Now, consider equation \eqref{eq.gradient_times_v}. We get that $v^T\nabla F(a)$ is bounded by above $-\underline{\sigma}(\mathcal D)||\E(\E^T\mathrm y(a) - \zeta^\star)||^2$. In turn, we saw above that unless $\E^T \mathrm y(a) = \zeta^\star$, $\E(\E^T\mathrm y(a) - \zeta^\star) \neq 0$, meaning that $\|\E(\E^T\mathrm y(a) - \zeta^\star)\| \ge \varsigma||\E^T\mathrm y(a)-\zeta^\star||^2$, where $\varsigma$ is the minimal nonzero singular value of $\E$. Hence, at any time step $j$, $v^T \nabla F(a^{(j)}) < -\underline{\sigma}(\mathcal D)\varsigma F(a^{(j)})$. In turn we conclude that $F(a^{(j+1)}) = F(a^{(j)}) - h\underline{\sigma}(\mathcal D)\varsigma F(a^{(j)}) + \mathcal{O}(h) = (1-h\underline{\sigma}(\mathcal D)\varsigma)F(a^{(j)}) + \mathcal{O} (h)$. Iterating this equation shows that eventually, $F(a^{(j)}) < \varepsilon$, completing the proof.
\end{proof}
\if(0)
\begin{IEEEbiography}[{\includegraphics[width=1in,height=1.25in,clip,keepaspectratio]{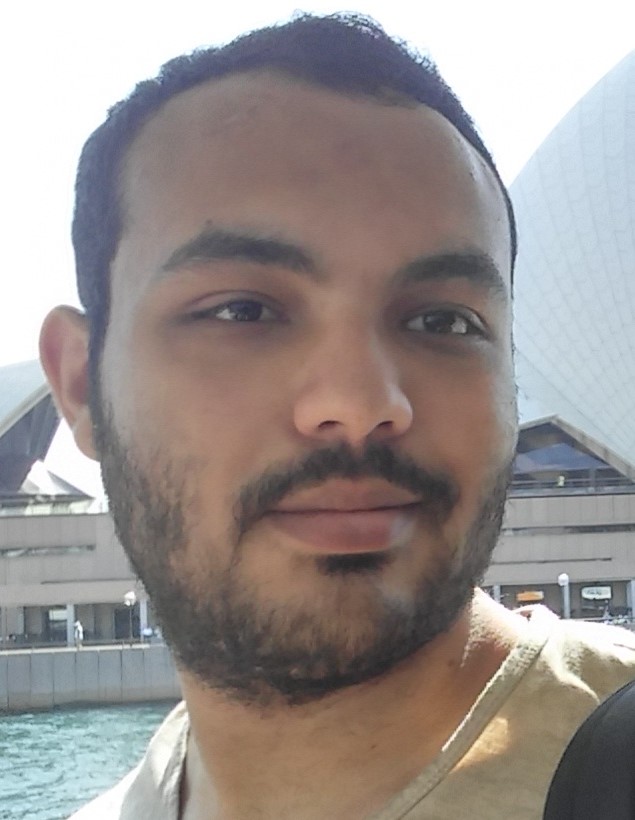}}]
 {\bf Miel~Sharf} received his Ph.D. from the Aerospace Engineering department at the Technion - Israel Institute of Technology in 2020. He received his B.Sc. (13`) and M.Sc. (16`) degrees in Mathematics from the Technion - Israel Institute of Technology. He is a recipient of the Springer Thesis Award.

Since September 2020, he is a postdoctoral  researcher with the Division of Decision and Control Systems, KTH Royal Institute of Technology, Stockholm, Sweden. His research interests include the relation between graph theory and algebraic graph theory to multi-agent systems, nonlinear control and passivity theory, data-driven control, and security in networked systems.
 \end{IEEEbiography}
 \vspace{-5pt}
\begin{IEEEbiography}[{\includegraphics[width=1in,height=1.25in,clip,keepaspectratio]{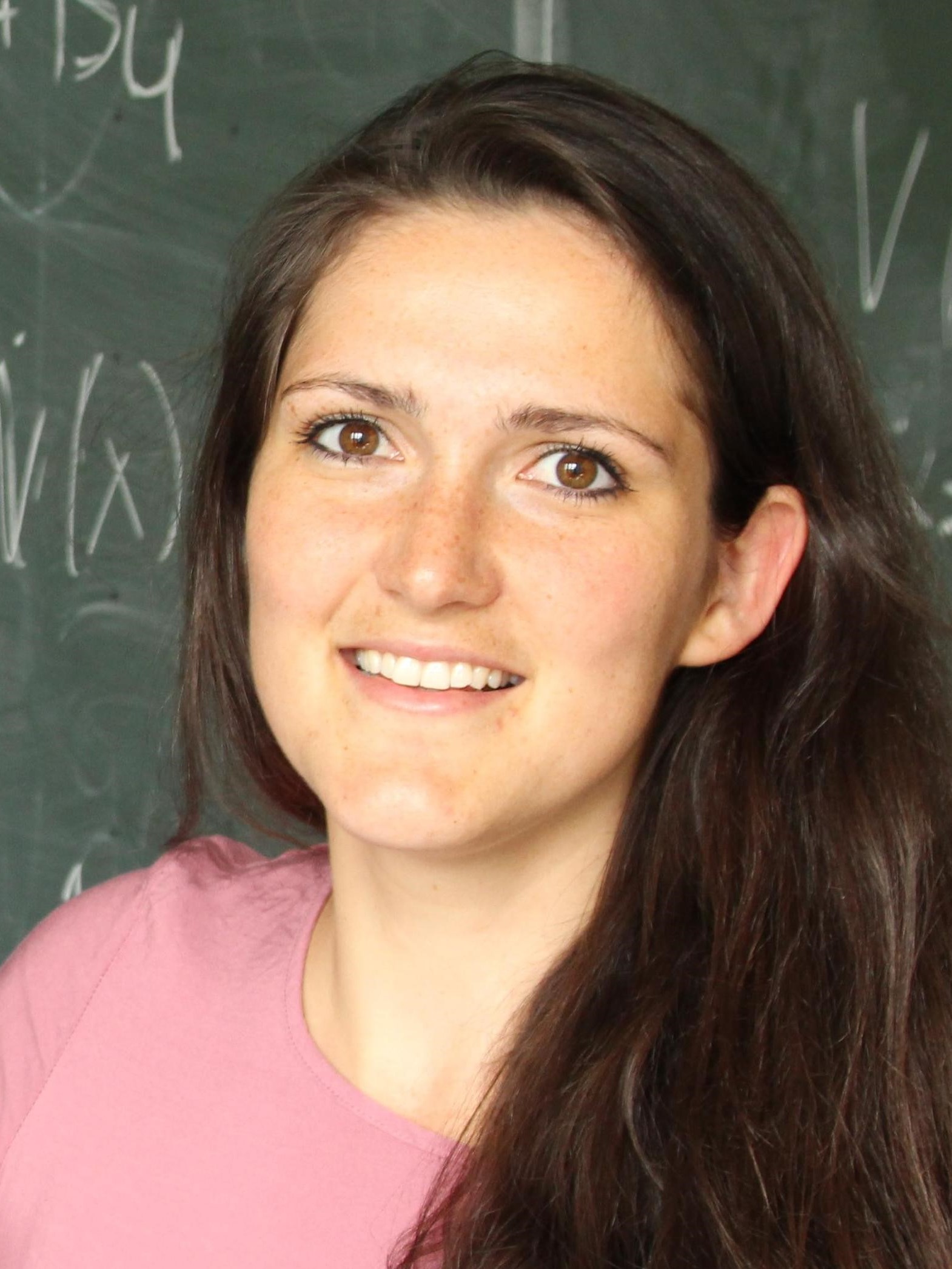}}]{Anne Koch (n\'ee Romer)}
received the M.Sc. in Engineering Science and Mechanics from the Georgia Institute of Technology, Atlanta, USA, in 2014, and the M.Sc. in Engineering Cybernetics from the University of Stuttgart, Germany, in 2016. \\
In 2016, she joint the
Institute for Systems Theory and Automatic Control at the University of Stuttgart as a research and teaching assistant pursuing the Ph.D. degree. 
Her re\-search interests include data-based systems analysis and controller design.
\end{IEEEbiography}
\vspace{-5pt}
 \begin{IEEEbiography}[{\includegraphics[width=1in,height=1.25in,clip,keepaspectratio]{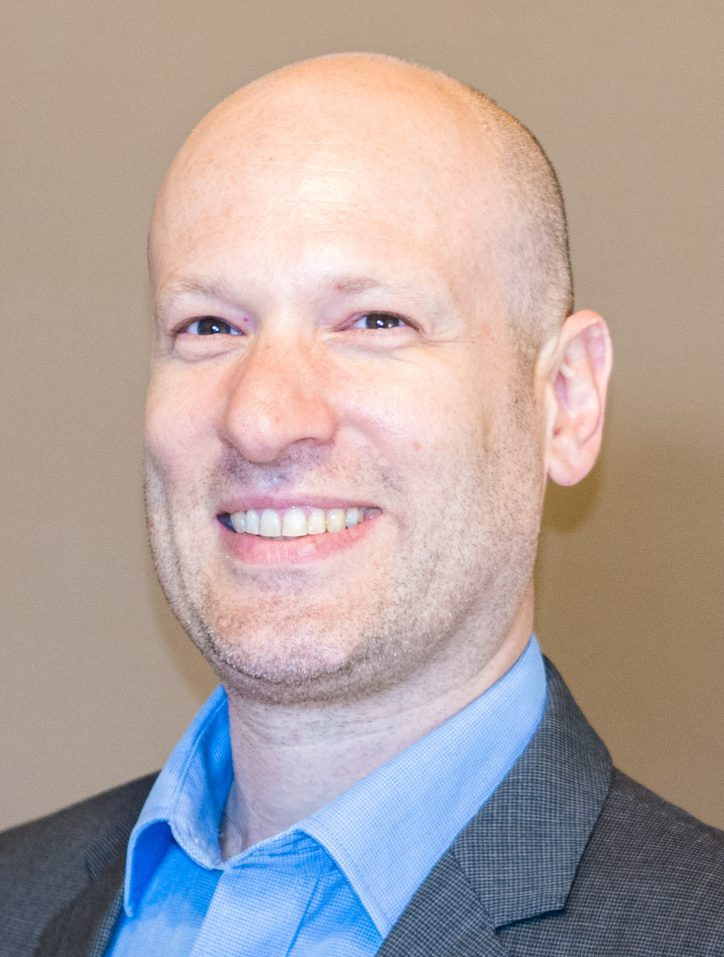}}]
 {\bf Daniel~Zelazo} is an associate professor of Aerospace Engineering at the Technion - Israel Institute of Technology. He received his BSc. (99) and M.Eng (01) degrees in Electrical Engineering from the Massachusetts Institute of Technology. In 2009, he completed his Ph.D. from the University of Washington in Aeronautics and Astronautics. From 2010-2012 he served as a post-doctoral research associate and lecturer at the Institute for Systems Theory \& Automatic Control in the University of Stuttgart. His research interests include topics related to multi-agent systems, optimization, and graph theory.
 \end{IEEEbiography}
 \vspace{-5pt}
\begin{IEEEbiography}[{\includegraphics[width=1in,height=1.25in,clip,keepaspectratio]{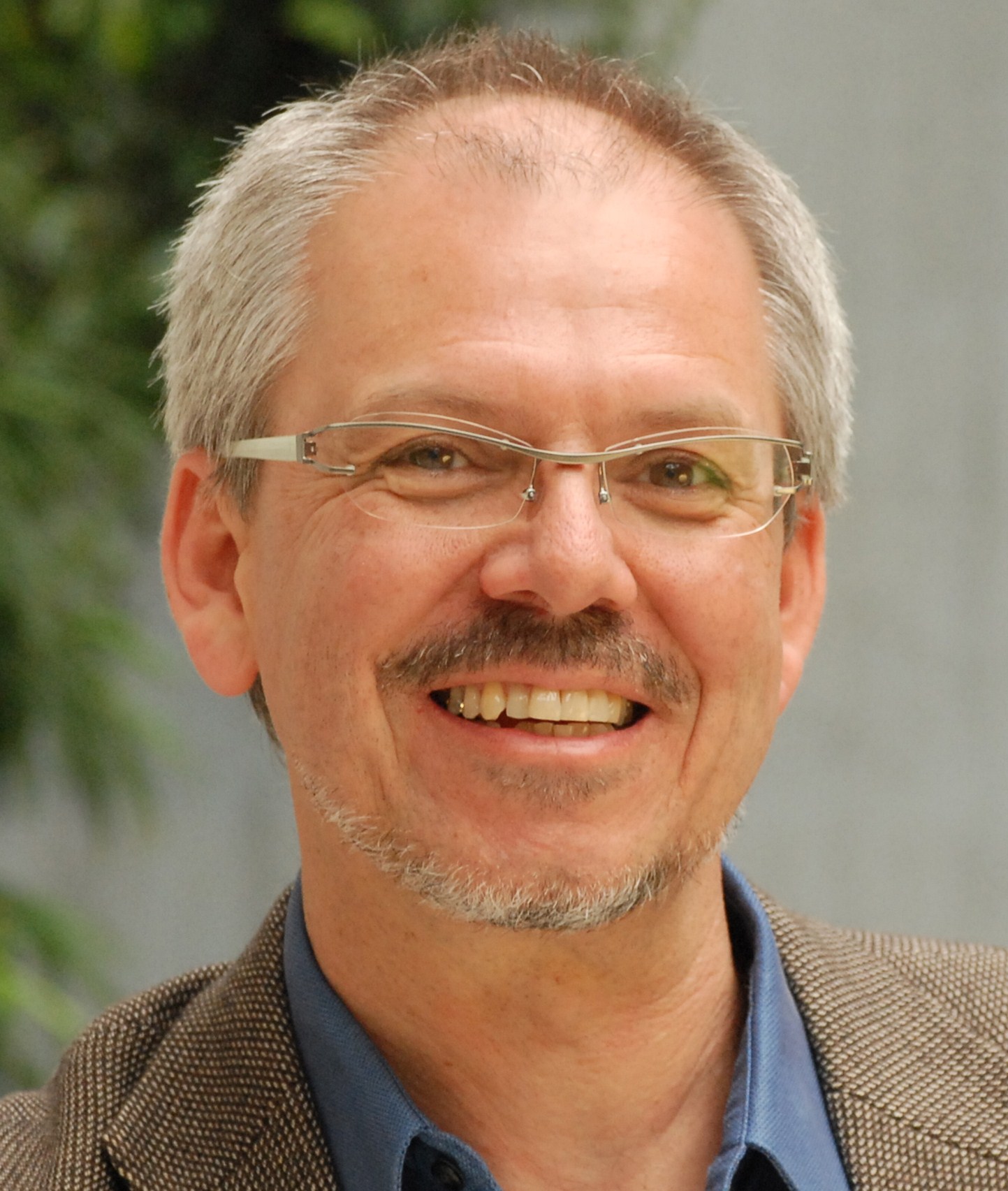}}]{Frank Allg\"{o}wer}
is professor of mechanical engineering at the University of Stuttgart, Germany, and Director of the Institute for Systems Theory and Automatic Control (IST) there.\\ 
Frank is active in serving the community in several roles: Among others he has been President of the International Federation of Automatic Control (IFAC) for the years 2017-2020, Vice-president for Technical Activities of the IEEE Control Systems Society for 2013/14, and Editor of the journal Automatica from 2001 until 2015. From 2012 until 2020 Frank served in addition as Vice-president for the German Research Foundation (DFG), which is Germany’s most important research funding organization. \\
His research interests include predictive control, data-based control, networked control, cooperative control, and nonlinear control with application to a wide range of fields including systems biology.
\end{IEEEbiography}
\fi

\end{document}